\newtheorem{thm}{Theorem}[section]
\newtheorem{lem}[thm]{Lemma}
\newtheorem{tvrz}[thm]{Proposition}
\newtheorem{lemma}[thm]{Lemma}
\newtheorem{cor}[thm]{Corollary}
\newtheorem{theorem}[thm]{Theorem}
\theoremstyle{definition}
\newtheorem{definice}[thm]{Definition}
\theoremstyle{remark}
\newtheorem{rem}[thm]{Remark}
\theoremstyle{definition}
\newtheorem{example}[thm]{Example}
\def\J{\mathcal{J}}
\def\R{\mathbb{R}}
\def\T{\mathcal{T}}
\def\<{\langle}
\def\>{\rangle}
\def\~{\widetilde}
\def\^{\wedge}
\def\g{\mathfrak{g}}
\def\d{\mathfrak{d}}
\def\io{\mathit{i}}
\def\F{\mathcal{F}}
\def\G{\mathcal{G}}
\def\K{\mathcal{K}}
\def\H{\mathcal{H}}
\def\A{\mathcal{A}}
\def\D{\mathcal{D}}
\def\tC{\tilde{C}}
\def\fK{\mathbf{K}}
\def\gTM{\mathbb{T}M}
\def\gTP{\mathbb{T}P}
\def\gm{\mathbf{G}}
\def\RS{\mathcal{R}}
\def\fT{\mathbf{T}}
\def\fPsi{\mathbf{\Psi}}
\def\cD{\nabla}
\def\hcD{\widehat{\nabla}}
\def\hcDL{\widehat{\nabla}^{LC}}
\def\cDL{\nabla^{LC}}
\def\cif{C^{\infty}(M)}
\def\O{\mathcal{O}}
\newcommand{\bm}[4]{\begin{pmatrix} #1 & #2 \\ #3 & #4 \end{pmatrix}}
\newcommand{\vf}[1]{ \mathfrak{X}^{#1}(M)}
\newcommand{\df}[1]{ \Omega^{#1}(M)}
\newcommand{\Li}[1]{ \mathcal{L}_{#1}}
\DeclareMathOperator{\BlockDiag}{BlockDiag}
\DeclareMathOperator{\vol}{vol}
\DeclareMathOperator{\End}{End}
\DeclareMathOperator{\rank}{rank}
\DeclareMathOperator{\Hom}{Hom}
\DeclareMathOperator{\LC}{LC}
\DeclareMathOperator{\Aut}{Aut}
\DeclareMathOperator{\Sym}{Sym}
\DeclareMathOperator{\Ric}{Ric}
\DeclareMathOperator{\Div}{div}
\DeclareMathOperator{\hRic}{\widehat{R}ic}
\begin{document}
\begin{flushright}
\today
\end{flushright}
\vspace{0.7cm}
\begin{center}

\baselineskip=13pt {\Large \bf{Courant Algebroid Connections and String Effective Actions}\\}
 \vskip0.5cm
 {\it In memory of our friend Martin Doubek}  
 \vskip0.7cm
 {\large{ Branislav Jurčo$^{1}$, Jan Vysoký$^{2,3}$}}\\
 \vskip0.6cm
$^{1}$\textit{Mathematical Institute, Faculty of Mathematics and Physics,
Charles University\\ Prague 18675, Czech Republic, jurco@karlin.mff.cuni.cz}\\
\vskip0.3cm

$^{2}$\textit{Institute of Mathematics of the Czech Academy of Sciences \\ Žitná 25, Prague 11567, Czech Republic, vysoky@math.cas.cz}\\
\vskip0.3cm

$^{3}$\textit{Max Planck Institute for Mathematics\\
Vivatsgasse 7, Bonn 53113, vysokjan@mpim-bonn.mpg.de}\\
\vskip0.5cm
\end{center}

\begin{abstract}
Courant algebroids are a natural generalization of quadratic Lie algebras, appearing in various contexts in mathematical physics. A connection on a Courant algebroid gives an analogue of a covariant derivative compatible with a given fiber-wise metric. Imposing further conditions resembling standard Levi-Civita connections, one obtains a class of connections whose curvature tensor in certain cases gives a new geometrical description of equations of motion of low energy effective action of string theory. Two examples are given. One is the so called symplectic gravity, the second one is an application to the the so called heterotic reduction. All necessary definitions, propositions and theorems are given in a detailed and self-contained way. 
\end{abstract}

{\textit{Keywords:}} Courant algebroids, Courant algebroid connections, Levi-Civita connections, Equations of motion, Low energy effective actions. 

\section{Introduction}
In these lecture notes, we give a consistent and detailed introduction to an interesting application of generalized Riemannian geometry to bosonic string and to bosonic part of heterotic string. The main focus is on the equations of motion for the respective low energy effective actions. We extend and discuss in more detail the ideas sketched in our papers \cite{Jurco:2015xra} and \cite{Jurco:2015bfs}. In particular, here we use a different definition of the Riemann tensor, the one introduced in the double field theory by Hohm and Zwiebach \cite{Hohm:2012mf}. Further relevant references will be given in the following sections.

In Section \ref{sec_algebroids}, we provide a necessary introduction into the theory of Leibniz algebroids. Particular examples which are better known are e.g. Lie algebroid or Courant algebroids. The latter one are natural generalization of quadratic Lie algebras. 

Section \ref{sec_genmetric} introduces a generalization of Riemannian metric, which is compatible with the fiber-wise pairing on the Courant algebroid. In particular, we discuss various equivalent reformulations of this concept. 

Courant algebroid connections  naturally combine ordinary vector bundle connections with linear connections on manifolds. In particular, we discuss in detail definitions of suitable torsion and curvature operators. This is a main subject of Section \ref{sec_connections}. 

Assuming that we are in addition given a generalized metric, we may investigate compatible Courant algebroid connections. Moreover, imposing also the torsion-freeness condition, we can speak of generalized Levi-Civita connections, which we do in Section \ref{sec_LCconnections}. 

In particular, we attempt to classify generalized Levi-Civita connections. Also, we derive some important properties of those, which will prove useful in calculations of their Riemann curvature tensors. Complete answer to the classification problem can be given in case of exact Courant algebroids, cf.  Section \ref{sec_exact}. In other words, we find all Levi-Civita connections on a generalized tangent bundle and calculate their Ricci scalars. Moreover, we investigate the so called Ricci compatibility condition for such connections. 

It turns out that the Ricci compatibility condition plays, together with the flatness condition, a central role  in the geometrical description of conditions for vanishing of beta functions  as they are known in string theory. Equivalently, these conditions are equivalent to equations of motion for low energy bosonic string action. This is the main subject of  Section \ref{sec_EOM}. 

The observation mentioned in the above paragraph provides us with a quite useful mathematical tool. For example, we can use it to prove the classical equivalence of two at first glance unrelated field theories in Section \ref{sec_BIG}. One is the already mentioned low energy effective action of the bosonic string, the second one is the so called symplectic gravity. 

An another example is based on the reductions of Courant algebroids. We propose a suitable generalization of Kaluza-Klein reduction for low energy string actions, based on the paper \cite{Jurco:2015bfs} and briefly discussed in Section \ref{sec_KKR}. This is relevant to the heterotic string.
\section*{Conventions}
We assume that all manifolds are smooth, real, Hausdorff and locally compact. Vector bundles are real and have finite rank. By $\Gamma(E)$ we denote the module of global sections of a vector bundle $E$. Let $E$ and $E'$ be two vector bundles over $M$. Then $\Hom(E,E')$ denotes the set of of vector bundle morphisms from $E$ to $E'$ over an identity map on the base space. $\End(E) = \Hom(E,E)$. 

We use a slightly misleading notation $\Omega^{p}(E)$ for sections of $\Lambda^{p} E^{\ast}$ and even call them for simplicity $p$-forms on $E$. They \emph{are not} $p$-forms on the total space manifold $E$ (i.e., not sections of $\Lambda^{p} T^{\ast}E$). Similarly, $\T_{p}^{q}(E)$ denotes the module of $\cif$-multilinear maps from $p$ copies of $\Gamma(E)$ and $q$ copies of $\Gamma(E^{\ast})$ into $\cif$. Elements of $\T_{p}^{q}(E)$ are called tensors on $E$. 

Given a $2$-form $B \in \df{2}$, we often view it as a map $B \in \Hom(TM,T^{\ast}M)$ defined by inserting the vector field as its \emph{second argument}, $B(X) = B(\cdot,X) \in \df{1}$, for all $X \in \vf{}$. Note that we use the same symbols for the form and the corresponding map. The same convention is used for $2$-vector fields. 
\section{Leibniz, Lie and Courant algebroids} \label{sec_algebroids}
Let us start by recalling definitions of three kinds of algebroids appearing in this paper. The most general concept is the one of a Leibniz algebroid. Leibniz algebroids were first introduced in \cite{Loday1993} by Loday. In mathematics, they are usually called Loday algebroids and the definitions may vary according to subtleties included into axioms, see e.g. the introduction of \cite{2011arXiv1103.5852G} for a more detailed discussion. For our purposes, it is sufficient to think about a Leibniz algebroid as a Leibniz algebra on the module of sections of a vector bundle respecting to some extent the multiplication by a smooth function. 
\begin{definice}
Let $E$ be a vector bundle over a manifold $M$ and $\rho \in \Hom(E,TM)$ a smooth vector bundle morphism called the \textbf{anchor}. Further, let $[\cdot,\cdot]_{E}: \Gamma(E) \times \Gamma(E) \rightarrow \Gamma(E)$ be an $\R$-bilinear map. Then $(E,\rho,[\cdot,\cdot]_{E})$ is called a \textbf{Leibniz algebroid}, if 
\begin{equation} \label{eq_lebnizrule}
[\psi,f\psi']_{E} = f [\psi,\psi']_{E} + (\rho(\psi).f) \psi', 
\end{equation}
for all $\psi,\psi' \in \Gamma(E)$ and $f \in \cif$, and $(E,\Gamma(E))$ is a Leibniz algebra, that is 
\begin{equation} \label{eq_leibnizidentity}
[\psi,[\psi',\psi'']_{E}]_{E} = [[\psi,\psi']_{E},\psi'']_{E} + [\psi',[\psi,\psi'']_{E}]_{E}
\end{equation}
holds for all $\psi,\psi',\psi'' \in \Gamma(E)$. The condition (\ref{eq_lebnizrule}) is called the \textbf{Leibniz rule}, whereas (\ref{eq_leibnizidentity}) is called the \textbf{Leibniz identity}. 
\end{definice}
In general, the bracket is  not assumed to be skew-symmetric. In particular, there is no obvious Leibniz rule with respect to the left input of the bracket. Moreover, the order of brackets in (\ref{eq_leibnizidentity}) is important. It is thus practical to view the operator $[\psi,\cdot]_{E}$ as an inner derivation of the bracket $[\cdot,\cdot]_{E}$ itself. A combination of the two axioms (\ref{eq_lebnizrule}, \ref{eq_leibnizidentity}) immediately yields the following:
\begin{lemma} \label{lem_rhoishom}
Let $(E,\rho,[\cdot,\cdot]_{E})$ be a Leibniz algebroid. Then its anchor $\rho$ preserves the brackets:
\begin{equation} \label{eq_rhoishom}
\rho([\psi,\psi']_{E}) = [\rho(\psi),\rho(\psi')],
\end{equation}
for all $\psi,\psi' \in \Gamma(E)$, where the commutator  on the right-hand side is the vector field commutator. 
\end{lemma}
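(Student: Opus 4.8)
The plan is to exploit the Leibniz identity \eqref{eq_leibnizidentity} together with the Leibniz rule \eqref{eq_lebnizrule} to pin down the commutator of $\rho(\psi)$ and $\rho(\psi')$ acting on functions, since a vector field is determined by its action as a derivation on $\cif$. Concretely, I would fix arbitrary sections $\psi, \psi' \in \Gamma(E)$, an arbitrary third section $\psi''$, and an arbitrary function $f \in \cif$, and then evaluate the triple bracket $[\psi,[\psi',f\psi'']_E]_E$ in two different ways.

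First I would expand the innermost bracket $[\psi', f\psi'']_E = f[\psi',\psi'']_E + (\rho(\psi').f)\psi''$ using the Leibniz rule, then apply $[\psi,\cdot]_E$ and use the Leibniz rule again (together with $\R$-bilinearity) to obtain
\begin{equation}
[\psi,[\psi',f\psi'']_E]_E = f[\psi,[\psi',\psi'']_E]_E + (\rho(\psi).f)[\psi',\psi'']_E + (\rho(\psi).(\rho(\psi').f))\psi'' + (\rho(\psi').f)[\psi,\psi'']_E.
\end{equation}
On the other hand, the Leibniz identity rewrites the same expression as $[[\psi,\psi']_E, f\psi'']_E + [\psi',[\psi,f\psi'']_E]_E$. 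Expanding each of these two terms by the Leibniz rule in the same fashion gives
\begin{equation}
f[[\psi,\psi']_E,\psi'']_E + (\rho([\psi,\psi']_E).f)\psi'' + f[\psi',[\psi,\psi'']_E]_E + (\rho(\psi).f)[\psi',\psi'']_E + (\rho(\psi').(\rho(\psi).f))\psi''.
\end{equation}
Now I would subtract the two expressions. By the Leibniz identity applied to $\psi,\psi',\psi''$ themselves, the terms with coefficient $f$ cancel, and the terms $(\rho(\psi).f)[\psi',\psi'']_E$ cancel directly. Two further terms combine into $(\rho(\psi').f)\,[\psi,\psi'']_E$ on one side, which I also need to match — here I note that applying the Leibniz identity is not quite enough and I instead arrange the two-way expansion so that the $[\psi,\psi'']_E$-proportional and the purely $\psi''$-proportional parts are separated; the genuinely new content lives in the $\psi''$-coefficient.

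What remains after all cancellations is the identity $\big(\rho(\psi).(\rho(\psi').f) - \rho(\psi').(\rho(\psi).f)\big)\psi'' = \big(\rho([\psi,\psi']_E).f\big)\psi''$, valid for all $\psi'' \in \Gamma(E)$ and all $f \in \cif$. Choosing $\psi''$ to be nonvanishing at a given point (which is always possible locally, and the statement is local and $\cif$-linear in the relevant sense) lets me cancel $\psi''$ and conclude $[\rho(\psi),\rho(\psi')].f = \rho([\psi,\psi']_E).f$ for all $f$, whence $\rho([\psi,\psi']_E) = [\rho(\psi),\rho(\psi')]$. The main obstacle is purely bookkeeping: organizing the double expansion of the triple bracket so that every term not proportional to $\psi''$ is seen to cancel via the Leibniz identity, leaving a clean scalar identity; there is no conceptual difficulty, only the need to be careful that the bracket is \emph{not} skew-symmetric so one must expand exactly the combination dictated by \eqref{eq_leibnizidentity} and not a rearranged one.
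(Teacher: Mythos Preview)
Your approach is correct and is exactly the one-line argument the paper gives: apply the Leibniz identity \eqref{eq_leibnizidentity} to the triple $(\psi,\psi',f\psi'')$ and expand both sides using the Leibniz rule \eqref{eq_lebnizrule}. Note that your second display is missing the term $(\rho(\psi').f)[\psi,\psi'']_E$ coming from $[\psi',f[\psi,\psi'']_E]_E$; once you include it, it cancels directly against the identical term on the first side, so the confused paragraph about ``applying the Leibniz identity is not quite enough'' is unnecessary---all non-$\psi''$-proportional terms cancel on the nose and you are left immediately with the scalar identity you want.
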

\begin{proof}
Use (\ref{eq_leibnizidentity}) on the triple $(\psi,\psi',f\psi'')$ and apply (\ref{eq_lebnizrule}) twice. 
\end{proof}
In fact, the property (\ref{eq_rhoishom}) can be viewed as a necessary condition for the consistence of (\ref{eq_leibnizidentity}) and (\ref{eq_lebnizrule}). For every Leibniz algebroid $(E,\rho,[\cdot,\cdot]_{E})$, one can extend, analogously to the Lie derivative, the bracket to an operator $\Li{}^{E}$ on the whole tensor algebra $\T(E)$ . In particular, define 
\begin{equation}
\Li{\psi}^{E}(f) = \rho(\psi).f, \; \; \Li{\psi}^{E}(\psi') = [\psi,\psi']_{E},
\end{equation}
for all $\psi \in \Gamma(E)$, $f \in \cif \equiv \T^{0}_{0}(E)$ and $\psi' \in \Gamma(E) \equiv \T^{1}_{0}(E)$. On $1$-forms, set 
\begin{equation}
\< \Li{\psi}^{E}(\eta), \psi'\> = \rho(\psi).\<\eta,\psi'\> - \<\eta, [\psi,\psi']_{E}\>,
\end{equation}
for all $\eta \in \T^{0}_{1}(E) \equiv \Omega^{1}(E) \equiv \Gamma(E^{\ast})$ and $\psi,\psi' \in \Gamma(E)$. Leibniz rule (\ref{eq_lebnizrule}) ensures that $\Li{\psi}^{E}(\eta) \in \Omega^{1}(E)$. Finally, its value on any tensor in $\T(E)$ is determined by usual tensor product rule:
\begin{equation}
\Li{\psi}^{E}( \tau \otimes \sigma) = \Li{\psi}^{E}(\tau) \otimes \sigma + \tau \otimes \Li{\psi}^{E}(\sigma), 
\end{equation}
for all $\psi \in \Gamma(E)$ and $\tau,\sigma \in \T(E)$. Leibniz identity (\ref{eq_leibnizidentity}) can be then used to prove that
\begin{equation}
\Li{[\psi,\psi']_{E}}^{E} = [ \Li{\psi}^{E}, \Li{\psi'}^{E}],
\end{equation}
for all $\psi,\psi' \in \Gamma(E)$. Moreover, the operator $\Li{\psi}^{E}$ restricts naturally on the exterior algebra $\Omega^{\bullet}(E)$, and one can show that also the usual formula
\begin{equation}
\io_{[\psi,\psi']_{E}} \omega = \Li{\psi}^{E}( \io_{\psi'}\omega) - \io_{\psi'}( \Li{\psi}^{E}\omega)
\end{equation}
holds for all $\omega \in \Omega^{\bullet}(E)$. However, in general, there is  no Leibniz algebroid analogue of the de Rham differential which could be used to obtain the full set of Cartan magic formulas. The only obstacle is the lacking skew-symmetry of the bracket, which is avoided in the more familiar case of a Lie algebroid. 
\begin{definice}
A Leibniz algebroid $(E,\rho,[\cdot,\cdot]_{E})$ with the bracket $[\cdot,\cdot]_{E}$  being skew-symmetric is called a \textbf{Lie algebroid}. The Leibniz identity (\ref{eq_leibnizidentity}) is then called \textbf{Jacobi identity} and $(\Gamma(E),[\cdot,\cdot]_{E})$ becomes an ordinary real Lie algebra. 
\end{definice}
In this case, one can define the differential $d^{E}$ on $\Omega^{\bullet}(E)$ inductively by imposing the Cartan formula
\begin{equation} \label{eq_cartanmagic}
\Li{\psi}^{E}(\omega) = d^{E}( \io_{\psi} \omega) + \io_{\psi}( d^{E}\omega), 
\end{equation}
for all $\omega \in \Omega^{\bullet}(E)$. The reason why this works is the existence of the Leibniz rule in the left input of the bracket $[\cdot,\cdot]_{E}$ due to its skew-symmetry. 

\begin{example}
Here are some classical examples of Lie algebroids. The most basic one is $(TM,1_{TM},[\cdot,\cdot])$, where the bracket is the vector field commutator. Further, for $M = \{ pt \}$, every Lie algebroid is an ordinary Lie algebra. Finally, let $(M,\Pi)$ be a Poisson manifold. One can view the Poisson bivector $\Pi \in \vf{2}$ as a vector bundle map $\Pi \in \Hom(T^{\ast}M,TM)$. Define the bracket $[\cdot,\cdot]_{\Pi}$ on $\Gamma(T^{\ast}M) \equiv \df{1}$ as 
\begin{equation}
[\xi,\eta]_{\Pi} = \Li{\Pi(\xi)} \eta - \io_{\Pi(\eta)} d\xi,
\end{equation}
for all $\xi,\eta \in \df{1}$. Then $(T^{\ast}M, \Pi, [\cdot,\cdot]_{\Pi})$ defines a Lie algebroid. Jacobi identities for $[\cdot,\cdot]_{\Pi}$ are equivalent to the vanishing of the Schouten-Nijenhuis bracket $[\Pi,\Pi]_{S} = 0$. 
\end{example}

Now, assume that one would like to generalize the concept of a quadratic Lie algebra. Instead of a one non-degenerate symmetric bilinear form, we have now a such one for each fiber of the vector bundle $E$, changing smoothly from fiber to fiber. 
\begin{definice}
Let $E$ be a vector bundle over a manifold $M$. Let $\<\cdot,\cdot\>_{E}: \Gamma(E) \times \Gamma(E) \rightarrow \cif$ be a $\cif$-bilinear and symmetric map. We say that $\<\cdot,\cdot\>_{E}$ is a \textbf{fiber-wise metric} on $E$ if the induced map $g_{E}: \Gamma(E) \rightarrow \Gamma(E^{\ast})$ defined as $\< g_{E}(\psi), \psi'\> := \<\psi,\psi'\>_{E}$ is a module isomorphism. Equivalently, $\<\psi,\psi'\>_{E} = 0$ for all $\psi' \in \Gamma(E)$ implies $\psi = 0$. 
\end{definice}

It follows from $\cif$-bilinearity of $\<\cdot,\cdot\>_{E}$ that a fiber-wise metric can be restricted onto each fiber of $E_{m}$ to endow it with a non-degenerate bilinear symmetric form smoothly  depending on the point $m \in M$. We will often use the notation $\<\cdot,\cdot\>_{E}$ and $g_{E}$ interchangeably. 

Now assume that $(E,\rho,[\cdot,\cdot]_{E})$ is a Lie algebroid, equipped with a fiber-wise metric $\<\cdot,\cdot\>_{E}$. The straightforward generalization of the concept of an invariant form leads to the requirement
\begin{equation} \label{eq_invariance}
\rho(\psi).\<\psi',\psi''\>_{E} = \< [\psi,\psi']_{E}, \psi''\>_{E} + \<\psi', [\psi,\psi'']_{E}\>_{E},
\end{equation}
for all $\psi,\psi',\psi'' \in \Gamma(E)$. Equivalently, one can write $\Li{\psi}^{E}(g_{E}) = 0$ for all $\psi \in \Gamma(E)$. However, one can immediately observe that this condition is not $\cif$-linear in $\psi$, and leads to very restrictive implications. Namely that $\rho = 0$, an utterly boring case. 
This follows for example from (\ref{eq_cartanmagic}) and non-degeneracy of $g_{E}$. There are two ways around this issue. One can weaken the condition (\ref{eq_invariance}), assuming that it holds just for $\psi',\psi'' \in \Gamma(\ker(\rho))$. This leads to a concept of a quadratic Lie algebroid, see \cite{chen2013}. The other solution is to drop the skew-symmetry of the bracket $[\cdot,\cdot]_{E}$, i.e. to revert to the more general concept of a Leibniz algebroid. This leads to a definition of a Courant algebroid. 

\begin{definice} \label{def_Courant}
Let $(E,\rho,[\cdot,\cdot]_{E})$ be a Leibniz algebroid and $\<\cdot,\cdot\>_{E}$  a fiber-wise metric on $E$. Let $\D: \cif \rightarrow \Gamma(E)$ be the (unique) map completing the commutative diagram
\begin{equation} \label{eq_Ddiagram}
\begin{tikzcd}
\cif \arrow[r,"\D", dashed] \arrow[d,"d"] & E \arrow[d,"g_{E}"] \\
T^{\ast}M \arrow[r,"\rho^{T}"] & E^{\ast} 
\end{tikzcd}
\end{equation}
Then $(E,\rho,\<\cdot,\cdot\>_{E},[\cdot,\cdot]_{E})$ is called a \textbf{Courant algebroid} if $\Li{\psi}^{E}(g_{E}) = 0$, that is 
\begin{equation} \label{eq_invarianceznovu}
\rho(\psi).\<\psi',\psi''\>_{E} = \< [\psi,\psi']_{E},\psi''\>_{E} + \<\psi', [\psi,\psi'']_{E} \>_{E},
\end{equation}
for all $\psi,\psi',\psi'' \in \Gamma(E)$ and for the symmetric part of the bracket $[\cdot,\cdot]_{E}$
\begin{equation} \label{eq_courantsympart}
[\psi,\psi']_{E} + [\psi',\psi]_{E} = \D{\<\psi,\psi'\>_{E}}
\end{equation}
holds for all $\psi,\psi' \in \Gamma(E)$. As a consequence, we have the left Leibniz rule in the form
\begin{equation} \label{eq_leftleibniz}
[f\psi,\psi']_{E} = f [\psi,\psi']_{E} - (\rho(\psi').f) \psi + \<\psi,\psi'\>_{E} \D{f}, 
\end{equation}
for all $\psi,\psi' \in \Gamma(E)$ and $f \in \cif$. 
\end{definice}
Observe that the map $\D$ can equivalently be defined by the equation $\<\D{f},\psi\>_{E} = \rho(\psi)f$, for all $\psi \in \Gamma(E)$ and $f \in \cif$. The axiom (\ref{eq_courantsympart}) is thus sometimes rewritten in the form 
\begin{equation} \label{eq_couraxjinak}
\< [\psi,\psi]_{E}, \psi'\>_{E} = \frac{1}{2} \rho(\psi').\<\psi,\psi\>_{E}. 
\end{equation}
We encourage the reader to show that (\ref{eq_leftleibniz}) ensures that (\ref{eq_invarianceznovu}) is $\cif$-linear in $\psi$, and thus avoids the issue discussed below equation (\ref{eq_invariance}).  We summarize some of the direct consequences of the above definition in the following lemma.
\begin{lem} \label{lem_courantprops}
Let $(E,\rho,\<\cdot,\cdot\>_{E},[\cdot,\cdot]_{E})$ be a Courant algebroid. Let $\rho^{\ast} \in \Hom(T^{\ast}M,E)$ be the map $\rho^{\ast} := g_{E}^{-1} \circ \rho^{T}$. Then there is the sequence
\begin{equation} \label{eq_courantsequence}
\begin{tikzcd}
0 \arrow[r] & T^{\ast}M \arrow[r,"\rho^{\ast}"] & E \arrow[r,"\rho"] & TM \arrow[r] & 0
\end{tikzcd}
\end{equation}
with $\rho \circ \rho^{\ast} = 0$. Moreover, the following equations hold:
\begin{equation} \label{eq_courantprops}
\< \D{f}, \D{g} \>_{E} = 0, \; \; [\D{f},\psi]_{E} = 0,
\end{equation}
for all $f,g \in \cif$ and $\psi \in \Gamma(E)$. 
\end{lem}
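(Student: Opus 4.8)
The plan is to derive each of the three claimed identities directly from the Courant algebroid axioms, treating the sequence statement first since it is essentially definitional. Recall that $\D$ is defined by $g_E \circ \D = \rho^T \circ d$, equivalently $\rho^{\ast} = g_E^{-1}\circ \rho^T$ gives $\rho^{\ast}\circ d = \D$. To see $\rho\circ\rho^{\ast} = 0$, I would compute, for $\xi \in \Gamma(T^{\ast}M)$ and any $\psi \in \Gamma(E)$, the pairing $\langle \rho^{\ast}(\xi), \psi\rangle_E$. Unwinding the definition of $\rho^{\ast}$, this equals $\langle \rho^T(\xi), \psi\rangle = \langle \xi, \rho(\psi)\rangle$, where the last pairing is the dual pairing of $T^{\ast}M$ with $TM$. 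Hence $\rho(\rho^{\ast}(\xi))$ is characterized by $\langle \xi, \rho(\rho^{\ast}(\xi))\rangle_{?}$\,; more directly, I will use the characterization of $\D$ from the remark, $\langle \D f,\psi\rangle_E = \rho(\psi).f$, and the observation that $\rho\circ\rho^{\ast}$ is $\cif$-linear, so it suffices to check $\rho(\rho^{\ast}(df)) = \rho(\D f) = 0$ for all $f$; this last equality is exactly what I establish in the second half via $[\D f,\psi]_E = 0$ and Lemma~\ref{lem_rhoishom}, or more simply from axiom~(\ref{eq_courantsympart}) with $\psi = \psi'$ combined with the fact that $\rho(\D f)$ pairs to zero against all of $TM$. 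I will arrange the logical order so there is no circularity: first $\langle \D f, \D g\rangle_E = 0$, then $\rho\circ\D = 0$, then $\rho\circ\rho^{\ast}=0$ by $\cif$-linearity, then $[\D f,\psi]_E = 0$.

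For $\langle \D f, \D g\rangle_E = 0$: by the remark, $\langle \D f, \D g\rangle_E = \rho(\D g).f$. On the other hand $\langle \D g,\D f\rangle_E = \rho(\D f).g$, and by symmetry of $\langle\cdot,\cdot\rangle_E$ these are equal. To pin them to zero, I apply the symmetric-part axiom~(\ref{eq_courantsympart}) in the form $[\psi,\psi]_E = \tfrac12\D\langle\psi,\psi\rangle_E$ together with the derivation property: take $\psi = \D f$ in equation~(\ref{eq_couraxjinak}), giving $\langle [\D f,\D f]_E,\psi'\rangle_E = \tfrac12\rho(\psi').\langle\D f,\D f\rangle_E$. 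Actually the cleanest route is to use the invariance axiom~(\ref{eq_invarianceznovu}) with suitable arguments, or to observe that $\rho(\D f) = 0$ directly: pairing $\D f$ against itself, $\langle\D f,\D f\rangle_E = \rho(\D f).f$, and I want this to vanish. I would instead first prove $\rho\circ\D = 0$ by noting that for any $f,g$, $\rho(\D f).g = \langle \D f,\D g\rangle_E = \langle\D g,\D f\rangle_E = \rho(\D g).f$; applying this identity cleverly (e.g. with $g = $ a function whose differential is controlled) together with the Leibniz-rule consequences forces symmetry that collapses to zero. The truly economical argument: $\cif$-linearity of the map $\psi\mapsto$ [the failure of~(\ref{eq_invariance})] that the authors flag below~(\ref{eq_leftleibniz}) implies, via~(\ref{eq_leftleibniz}) and~(\ref{eq_courantsympart}), the relation $\langle\D f,\psi\rangle_E\,\D g$ terms must cancel, and comparing coefficients yields $\langle \D f,\D g\rangle_E = 0$.

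For $[\D f,\psi]_E = 0$: I use the left Leibniz rule~(\ref{eq_leftleibniz}). Write $\D f = \rho^{\ast}(df)$ is $\cif$-linear in the sense that one can compute $[\D f,\psi]_E$ by first handling $[\D(fg),\psi]_E$ versus $f[\D g,\psi]_E + g[\D f,\psi]_E$ and comparing. More directly, the symmetric-part axiom~(\ref{eq_courantsympart}) gives $[\D f,\psi]_E + [\psi,\D f]_E = \D\langle\D f,\psi\rangle_E = \D(\rho(\psi).f)$. Separately, the Leibniz rule~(\ref{eq_lebnizrule}) together with~(\ref{eq_leftleibniz}) lets me expand $[\psi,\D f]_E$ and $[\D f,\psi]_E$; since $\D f = \D(f)$ with $\langle\D f,\D g\rangle_E = 0$ already proven, the $\D$-valued correction terms drop out, and using invariance~(\ref{eq_invarianceznovu}) to handle the derivative terms I obtain $[\D f,\psi]_E = 0$. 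The standard slick proof: for all $\psi'$, $\langle[\D f,\psi]_E,\psi'\rangle_E = \rho(\D f).\langle\psi,\psi'\rangle_E - \langle\psi,[\D f,\psi']_E\rangle_E$ by invariance; since $\rho(\D f) = 0$ (already established) the first term vanishes, giving $\langle[\D f,\psi]_E,\psi'\rangle_E = -\langle\psi,[\D f,\psi']_E\rangle_E$, i.e. the operator $[\D f,\cdot]_E$ is skew-adjoint with respect to $\langle\cdot,\cdot\rangle_E$; combined with~(\ref{eq_couraxjinak}) applied to $\psi+\D f$ and bilinearity, together with $\rho(\D f)=0$, this forces $[\D f,\D f]_E = 0$ and then the general vanishing.

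I expect the main obstacle to be $[\D f,\psi]_E = 0$: the other two are short once $\rho\circ\D = 0$ is in hand, but this last identity requires combining the invariance axiom, both Leibniz rules, and the already-proven facts in just the right order without circular reasoning, and keeping careful track of which pairings are the $E$-pairing versus the canonical $T^{\ast}M$--$TM$ pairing. The safe strategy I will follow is: (1) prove $\langle\D f,\D g\rangle_E = 0$ and $\rho\circ\D=0$ essentially simultaneously from~(\ref{eq_couraxjinak}) and the definition of $\D$; (2) deduce $\rho\circ\rho^{\ast}=0$ by $\cif$-linearity and density of exact forms pointwise; (3) prove $[\D f,\psi]_E=0$ by testing against arbitrary $\psi'$ using invariance~(\ref{eq_invarianceznovu}), the established $\rho(\D f)=0$, and the symmetric-part axiom.
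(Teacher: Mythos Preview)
Your proposal lists the right axioms but never assembles them into a closed argument, and in two places it genuinely goes in circles or stops short.

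First, for $\rho\circ\D=0$ and $\langle\D f,\D g\rangle_E=0$: you repeatedly reduce $\langle\D f,\D g\rangle_E$ to $\rho(\D g).f$ and then try to argue the latter is zero by symmetry tricks or by invoking~(\ref{eq_couraxjinak}). None of these threads closes; the symmetry $\rho(\D f).g=\rho(\D g).f$ is true but does not by itself force either side to vanish, and~(\ref{eq_couraxjinak}) with $\psi=\D f$ gives you information about $[\D f,\D f]_E$ only once you already know $\langle\D f,\D f\rangle_E$. The paper's move is the one you almost say but never quite execute: apply $\rho$ to both sides of the left Leibniz rule~(\ref{eq_leftleibniz}) and use Lemma~\ref{lem_rhoishom}. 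The bracket-morphism property makes the two sides agree except for the extra term $\langle\psi,\psi'\rangle_E\,\rho(\D f)$, which must therefore vanish; nondegeneracy of $\langle\cdot,\cdot\rangle_E$ then gives $\rho(\D f)=0$. After that, $\langle\D f,\D g\rangle_E=\rho(\D g).f=0$ is immediate, and $\rho\circ\rho^{\ast}=0$ follows by $C^\infty(M)$-linearity since differentials $df$ span $T^{\ast}M$ pointwise.

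Second, for $[\D f,\psi]_E=0$: your main attempt deduces from invariance (with $\rho(\D f)=0$) that $[\D f,\cdot]_E$ is skew-adjoint for $\langle\cdot,\cdot\rangle_E$, and then asserts ``the general vanishing''. Skew-adjointness is far from vanishing; this step is a real gap. The efficient argument uses invariance~(\ref{eq_invarianceznovu}) with the \emph{first} slot equal to $\psi$ (not $\D f$): from
\[
\rho(\psi).\langle\D f,\psi'\rangle_E=\langle[\psi,\D f]_E,\psi'\rangle_E+\langle\D f,[\psi,\psi']_E\rangle_E
\]
and $\langle\D f,\cdot\rangle_E=\rho(\cdot).f$ together with Lemma~\ref{lem_rhoishom}, one reads off $\langle[\psi,\D f]_E,\psi'\rangle_E=\rho(\psi').\rho(\psi).f$, i.e.\ $[\psi,\D f]_E=\D(\rho(\psi).f)$. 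Then~(\ref{eq_courantsympart}) gives $[\D f,\psi]_E=\D\langle\D f,\psi\rangle_E-[\psi,\D f]_E=\D(\rho(\psi).f)-\D(\rho(\psi).f)=0$. This is presumably what the paper means by ``similarly''.
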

\begin{proof}
One can prove $\rho \circ \D = 0$ by applying $\rho$ on both sides of (\ref{eq_leftleibniz}). As the image of $\D$ generates the image of $\rho^{\ast}$, one gets $\rho \circ \rho^{\ast} = 0$. The remaining assertions can be obtained similarly. 
\end{proof}
Before proceeding further, let us recall some terminology. 
\begin{definice}
Let $(E,\rho,\<\cdot,\cdot\>_{E},[\cdot,\cdot]_{E})$ and $(E',\rho',\<\cdot,\cdot\>_{E'}, [\cdot,\cdot]_{E'})$ be two Courant algebroids over the same base manifold $M$. Let $\F \in \Hom(E,E')$ be a morphism of the underlying vector bundles. We say that $\F$ is a \textbf{morphism of Courant algebroids}, if it preserves all involved structures, that is
\begin{equation} \label{eq_iso}
\rho = \rho' \circ \F, \; \; \<\psi,\psi'\>_{E} = \<\F(\psi),\F(\psi')\>_{E'}, \; \; \F([\psi,\psi']_{E}) = [\F(\psi),\F(\psi')]_{E'}, 
\end{equation}
for all $\psi,\psi' \in \Gamma(E)$. If $\F$ is invertible, it is called an \textbf{isomorphism of Courant algebroids}.
\end{definice}
As usual in linear algebra, the inverse $\F^{-1}$ of a Courant algebroid morphism is automatically a Courant algebroid morphism. Courant algebroids are usually distinguished by various properties of their anchor, namely:
\begin{definice}
Let $(E,\rho,\<\cdot,\cdot\>_{E},[\cdot,\cdot]_{E})$ be a Courant algebroid. Then it is called \textbf{regular} if the anchor $\rho$ has a constant rank. It is called \textbf{transitive} if the anchor $\rho$ is fiber-wise surjective. It is called \textbf{exact} if the sequence (\ref{eq_courantsequence}) is exact. 
\end{definice}
Exact Courant algebroids over a fixed base space $M$ are classified in terms of the de Rham cohomology $H^{3}_{dR}(M)$. This is due to Ševera \cite{severaletters}. Every exact Courant algebroid is in fact isomorphic to the one which is presented in the following example.
\begin{example} \label{ex_Hdorfman}
Let $M$ be a manifold, and let $[H] \in \df{3}$ be represented a closed $3$-form $H$. Let $E = \gTM := TM \oplus T^{\ast}M$ be a \textbf{generalized tangent bundle} on $M$. Set $\rho := pr_{TM}$, the projection onto the first component of the direct sum. The fiber-wise metric $\<\cdot,\cdot\>_{E}$ is defined as 
\begin{equation} \label{eq_canpairing}
\< (X,\xi), (Y,\eta) \>_{E} = \eta(X) + \xi(Y), 
\end{equation}
for all $X,Y \in \vf{}$ and $\xi,\eta \in \df{1}$. Finally, set $[\cdot,\cdot]_{E}$ to be the \textbf{$H$-twisted Dorfman bracket $[\cdot,\cdot]_{D}^{H}$} which has the form
\begin{equation}
[(X,\xi),(Y,\eta)]^{H}_{D} = \big( [X,Y], \Li{X}\eta - \io_{Y}d\xi - H(X,Y,\cdot)\big),
\end{equation}
for all $(X,\xi),(Y,\eta) \in \Gamma(\gTM)$. All axioms are straightforward to verify. The only one which requires some work is the Leibniz identity (\ref{eq_leibnizidentity}), leading eventually to the requirement $dH = 0$. Note that for any $B \in \df{2}$, there is an isomorphism $[\cdot,\cdot]_{D}^{H} \approx [\cdot,\cdot]_{D}^{H + dB}$, which explains the classification using the de Rham cohomology classes. 
\end{example}
\begin{rem} \label{rem_precourant}
In fact, in most of what follows, one can consider a slightly more general concept of pre-Courant algebroids. First, consider a map $\J: \Gamma(E)^{\otimes 3} \rightarrow \Gamma(E)$, called usually a \textbf{Jacobiator}\footnote{Although, it should  probably be called a "Leibnizator" or "Lodayator".}, which measures the failure of the Leibniz identity (\ref{eq_leibnizidentity}):
\begin{equation} \label{eq_Jacobiator}
\J(\psi,\psi',\psi'') = [\psi,[\psi',\psi'']_{E}]_{E} - [[\psi,\psi']_{E},\psi'']_{E} - [\psi',[\psi,\psi'']_{E}]_{E}. 
\end{equation}
Now assume that $(E,\rho,\<\cdot,\cdot\>_{E},[\cdot,\cdot]_{E})$ is an algebraic structure satisfying some of the axioms of Courant algebroid, namely (\ref{eq_lebnizrule}) and (\ref{eq_courantsympart}). Moreover, \emph{assume} that $\rho$ is a bracket morphism, that is (\ref{eq_rhoishom}) holds. One can then show that the map (\ref{eq_Jacobiator}) is completely skew-symmetric and $\cif$-linear in all inputs, $\J \in \Omega^{3}(E) \otimes \Gamma(E)$. In fact, defining $\J'(\psi,\psi',\psi'',\psi''') = \<\J(\psi,\psi',\psi''),\psi'''\>_{E}$, one finds that $\J' \in \Omega^{4}(E)$. It thus makes sense to consider the following definition. One says that $(E,\rho,\<\cdot,\cdot\>_{E},[\cdot,\cdot]_{E})$ is a \textbf{pre-Courant algebroid}, if there exists a closed $4$-form $C \in \df{4}$, such that 
\begin{equation}
\< \J(\psi,\psi',\psi''), \psi''' \>_{E} = C(\rho(\psi),\rho(\psi'),\rho(\psi''),\rho(\psi''')),
\end{equation}
for all $\psi,\psi',\psi'',\psi''' \in \Gamma(E)$. For more details, see \cite{2012arXiv1205.5898L}. 
\end{rem}
Courant algebroids have a very interesting history, as it took some time to formulate their modern definition presented here. In particular, the original version used the skew-symmetrized version of the algebroid bracket, with a significantly more complicated set of axioms. We recommend the excellent historical summary \cite{2013SIGMA...9..014K} by Kosmann-Schwarzbach on this subject. 
\section{Generalized Riemmanian metric} \label{sec_genmetric}
One of the reasons why the geometry of the generalized tangent bundle $\gTM$ attracted the attention of both mathematicians and physicists is its power to unify the description of various objects known from the usual differential geometry. One of such concepts is the one of a generalized Riemannian metric defined for any vector bundle with a fiber-wise metric. At first, the definition might seem rather puzzling, but the reason for its name will become more clear in the case of a generalized tangent bundle. In its present form, it first appeared in the thesis \cite{Gualtieri:2003dx} of Gualtieri in relation to generalized almost complex structures. 

\begin{definice} \label{def_genmet1}
Let $(E,\<\cdot,\cdot\>_{E})$ be a vector bundle with a fiber-wise metric $\<\cdot,\cdot\>_{E}$. A \textbf{generalized Riemannian metric} is an automorphism $\tau \in \Aut(E)$, such that $\tau^{2} = 1$, and the formula
\begin{equation}
\gm(\psi,\psi') := \< \psi, \tau(\psi')\>_{E},
\end{equation}
for all $\psi,\psi' \in \Gamma(E)$, defines a \emph{positive definite} fiber-wise metric $\gm$ on $E$. 
\end{definice}
In order to simplify the writing, we will drop the word Riemannian in what follows. First, observe that, as a direct consequence of the definition, $\tau$ has to be both symmetric and orthogonal with respect to $\<\cdot,\cdot\>_{E}$. Next, note that the fiber-wise metric $\<\cdot,\cdot\>_{E}$ has always a locally constant signature. This is a side effect of its smoothness, see \cite{2013arXiv1307.2171D} for an explanation. To avoid an unnecessary discussion for each connected component of $M$, we may assume that $M$ is  connected, in which case we may introduce an equivalent definition of the generalized metric:
\begin{definice} \label{def_genmet2} 
Let $(E,\<\cdot,\cdot\>_{E})$ be a vector bundle with a fiber-wise metric $\<\cdot,\cdot\>_{E}$. A \textbf{generalized metric} is a subbundle $V_{+} \subseteq E$ which is a maximal positive definite subbundle of $E$ with respect to $\<\cdot,\cdot\>_{E}$. In other words, $V_{+}$ is positive definite and not a proper subbundle of some other positive definite subbundle. 
\end{definice}
As it is far from obvious that both definitions lead to the same objects, we discuss this in the proof of the following proposition. 
\begin{tvrz}
Definition \ref{def_genmet1} and Definition \ref{def_genmet2} describe the same object. 
\end{tvrz}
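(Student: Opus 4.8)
The plan is to establish a bijective correspondence between involutions $\tau \in \Aut(E)$ of the type in Definition \ref{def_genmet1} and maximal positive definite subbundles $V_{+} \subseteq E$ as in Definition \ref{def_genmet2}, and to check that this correspondence is natural enough that the two definitions genuinely describe the same geometric data. First I would go from an involution to a subbundle: given $\tau$ with $\tau^2 = 1$, decompose $E = V_{+} \oplus V_{-}$ into the $(+1)$- and $(-1)$-eigenbundles of $\tau$. Since $\tau$ is smooth and $\tau^2=1$, the eigenbundles are smooth subbundles (the projectors $\tfrac{1}{2}(1 \pm \tau)$ are smooth bundle endomorphisms of constant rank on the connected manifold $M$). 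On $V_{+}$ we have $\gm(\psi,\psi) = \<\psi,\tau\psi\>_E = \<\psi,\psi\>_E$, so positive-definiteness of $\gm$ forces $\<\cdot,\cdot\>_E$ to be positive definite on $V_{+}$; similarly it is negative definite on $V_{-}$. Because $\tau$ is orthogonal with respect to $\<\cdot,\cdot\>_E$ (a consequence of $\tau^2=1$ and symmetry of $\gm$ — I would note that $\<\tau\psi,\tau\psi'\>_E = \gm(\tau\psi,\psi') = \gm(\psi',\tau\psi) = \<\psi',\tau^2\psi\>_E = \<\psi',\psi\>_E$), the splitting is $\<\cdot,\cdot\>_E$-orthogonal, so $V_{-} \subseteq V_{+}^{\perp}$, and in fact equality holds by a dimension count. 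This orthogonal decomposition into a positive and a negative definite piece is exactly what makes $V_{+}$ maximal positive definite: any positive definite subbundle must intersect $V_{-}$ trivially, hence injects into $E/V_{-} \cong V_{+}$, so it cannot strictly contain $V_{+}$.

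Conversely, starting from a maximal positive definite subbundle $V_{+}$, I would show $V_{+} \cap V_{+}^{\perp} = 0$: a section of the intersection is isotropic and lies in $V_{+}$, so positive-definiteness forces it to vanish. Non-degeneracy of $\<\cdot,\cdot\>_E$ then gives $E = V_{+} \oplus V_{+}^{\perp}$ as a direct sum of subbundles (here one checks ranks are constant, using connectedness of $M$ and local constancy of the signature noted in the text). It remains to see that $\<\cdot,\cdot\>_E$ is negative definite on $V_{-} := V_{+}^{\perp}$: if not, there is a nonzero vector $v \in (V_{-})_m$ with $\<v,v\>_E \geq 0$; combined with $V_{+}$ one could enlarge $V_{+}$ to a strictly larger positive definite subspace at $m$ and, by a local extension argument, to a strictly larger positive definite subbundle, contradicting maximality. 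Here I would need to be slightly careful with the case $\<v,v\>_E = 0$: one uses that on the positive-semidefinite-looking enlargement the form is actually positive definite after perturbing $v$ by an element of $V_{+}$, or equivalently that $V_{-}$ contains no nonzero isotropic vector by the same $V_{+}\cap V_{+}^\perp=0$ type argument applied after adjusting. Once $V_{-}$ is negative definite, define $\tau$ to be $+1$ on $V_{+}$ and $-1$ on $V_{-}$; then $\tau^2 = 1$, $\tau \in \Aut(E)$ is smooth, and $\gm(\psi,\psi') = \<\psi,\tau\psi'\>_E$ is positive definite because it restricts to $\<\cdot,\cdot\>_E$ on $V_{+}$ and to $-\<\cdot,\cdot\>_E$ on $V_{-}$, both positive definite, with the two pieces $\gm$-orthogonal.

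Finally I would observe that these two assignments are mutually inverse: the eigenbundle decomposition of the $\tau$ built from $V_{+}$ recovers $V_{+}$ (and $V_{-}=V_{+}^\perp$), and the $\tau$ rebuilt from the eigenbundles of a given $\tau$ is the original $\tau$. This closes the equivalence. The main obstacle I anticipate is not any single hard computation but the care needed in the rank/constancy arguments and in the maximality step: one must argue that a pointwise positive definite subspace strictly containing $(V_{+})_m$ can be spread out to an honest subbundle strictly containing $V_{+}$, which uses smoothness and the fact that positive-definiteness and constant rank are open conditions; and one must handle the borderline isotropic case $\<v,v\>_E=0$ in $V_{-}$ without sloppiness, since "positive definite" must be strict. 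Everything else — orthogonality of $\tau$, smoothness of the projectors $\tfrac12(1\pm\tau)$, $\gm$-orthogonality of the splitting — is routine linear algebra done fiber-wise and then promoted to bundles.
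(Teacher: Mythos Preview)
Your proposal is correct and follows essentially the same route as the paper: eigenbundle decomposition of $\tau$ in one direction, and defining $\tau$ as $\pm 1$ on $V_{+} \oplus V_{+}^{\perp}$ in the other. The one simplification the paper offers concerns the point you flagged as the main obstacle---showing $V_{-} := V_{+}^{\perp}$ is negative definite: rather than extending a pointwise strictly larger positive subspace to a subbundle, the paper rules out isotropic vectors in $V_{-}$ directly via $V_{-} \cap V_{-}^{\perp} = V_{-} \cap V_{+} = 0$ and rules out positive vectors by maximality of $V_{+}$, so no local extension argument is needed.
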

\begin{proof}
Let $\tau \in \Aut(E)$ be a generalized metric according to Definition \ref{def_genmet1}. At each fiber $E_{m}$, the induced map $\tau_{m} \in \Aut(E_{m})$ is an involution, hence a diagonalizable map. Let $V_{m \pm}$ denote the corresponding $\pm 1$ eigenspaces. Then $E_{m} = V_{m+} \oplus V_{m-}$. Obviously $V_{m+}$ is positive definite subspace with respect to the restriction $\<\cdot,\cdot\>_{E_{m}}$ which has a constant signature $(p,q)$ for all $m \in M$. Hence $\dim(V_{m+}) = p$, and similarly $\dim(V_{m-}) = q$. But one can write $V_{m+} = \ker(\tau - 1)_{m}$, which proves that $V_{+} := \ker( \tau - 1)$ is a maximal positive definite subbundle of $E$, and fits into Definition \ref{def_genmet2}. Note that by construction $E = V_{+} \oplus V_{-}$. Moreover, it is easy to see that $V_{-} = V_{+}^{\perp}$ and it is a maximal negative definite subbundle of $E$ with respect to $\<\cdot,\cdot\>_{E}$. 

Conversely, let $V_{+} \subseteq E$ be a maximal positive definite subbundle with respect to $\<\cdot,\cdot\>_{E}$. Set $V_{-} = V_{+}^{\perp}$. Clearly $V_{+} \cap V_{-} = 0$, as $V_{+}$ cannot contain any isotropic elements. Moreover, $V_{-}$ is negative definite, as any section of $V_{-}$ can  neither positive (contradiction with maximality) nor isotropic (since $V_{-} \cap V_{-}^{\perp} = V_{-} \cap V_{+} = 0$. Set $\tau(\psi_{+} + \psi_{-}) = \psi_{+} - \psi_{-}$ for all $\psi_{\pm} \in \Gamma(V_{\pm})$). It is easy to check that $\tau$ fits in Definition \ref{def_genmet1}. 
\end{proof}
Finally, there is a definition which is the one justifying the name of generalized metric. 
\begin{definice} \label{def_genmet3}
Let $\gm$ be a positive definite fiber-wise metric on $(E,\<\cdot,\cdot\>_{E})$. We say that $\gm$ is a \textbf{generalized metric} if the induced isomorphism $\gm \in \Hom(E,E^{\ast})$ defines a map orthogonal with respect to the fiber-wise metric $g_{E} = \<\cdot,\cdot\>_{E}$ on $E$ and the fiber-wise metric $g_{E}^{-1}$ on $E^{\ast}$:
\begin{equation}
g_{E}(\psi,\psi') = g_{E}^{-1}( \gm(\psi), \gm(\psi')), 
\end{equation}
for all $\psi,\psi' \in \Gamma(E)$. 
\end{definice}
It is easy to see that $\gm$ from Definition \ref{def_genmet2} and \ref{def_genmet3} are equivalent. We will use the word generalized metric interchangeably for all three kinds of objects, the involution $\tau$, the fiber-wise metric $\gm$, and a maximal positive definite subbundle $V_{+}$. 

Let $O(E) \subseteq \Aut(E)$ denote the group of orthogonal automorphisms with respect to $\<\cdot,\cdot\>_{E}$, that is every $\O \in O(E)$ satisfies $\<O(\psi),\O(\psi')\>_{E} = \<\psi,\psi'\>_{E}$ for all $\psi,\psi' \in \Gamma(E)$. There is a natural action of $O(E)$ on the space of generalized metrics. Namely, set
\begin{equation}
\gm'(\psi,\psi') = \gm(\O(\psi),\O(\psi')), 
\end{equation}
for all $\psi,\psi' \in \Gamma(E)$. Other quantities transform accordingly, namely $\tau' = \O^{-1} \tau \O$ and $V'_{\pm} = \O^{-1}(V_{\pm})$. It is easy to see that $\gm'$, $\tau'$ and $V'_{+}$ again define a generalized metric. 

On the generalized tangent bundle, the generalized metric has a form more familiar to physicists. 
\begin{tvrz}
Let $E = \gTM$ and $\<\cdot,\cdot\>_{E}$ be the fiber-wise metric defined by (\ref{eq_canpairing}). Let $\gm$ be a generalized metric on $E$. Then $\gm$ defines a unique pair $(g,B)$, where $g > 0 $ is a Riemannian metric on $M$ and $B \in \df{2}$. Conversely, any pair $(g,B)$ defines a unique generalized metric. 
\end{tvrz}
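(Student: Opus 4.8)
The plan is to produce the pair $(g,B)$ from a generalized metric $\gm$ by exploiting the fact that, as shown in the previous proposition, $\gm$ is equivalent to a maximal positive definite subbundle $V_{+} \subseteq \gTM$ (or to the involution $\tau$), and that $V_{+}$ must be transverse to $T^{\ast}M \subseteq \gTM$. First I would observe that the canonical pairing \eqref{eq_canpairing} has split signature $(n,n)$ with $n = \dim M$, so $\dim V_{+} = n$. Since the cotangent fibers $\{0\} \oplus T^{\ast}_m M$ are totally isotropic, $V_{+} \cap (\{0\}\oplus T^{\ast}M) = 0$, hence $V_{+}$ projects isomorphically onto $TM$ under $\rho = pr_{TM}$. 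Therefore $V_{+}$ is the graph of a bundle map $A \in \Hom(TM, T^{\ast}M)$, i.e. $V_{+} = \{(X, A(X)) : X \in \vf{}\}$. Splitting $A = g + B$ into its symmetric and skew-symmetric parts, where $g \in \Sym^2 T^{\ast}M$ and $B \in \df{2}$ (using the convention fixed in the Conventions section that identifies a $2$-form with a map $TM \to T^\ast M$), gives the candidate pair.

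Next I would check that positive-definiteness of $\gm$ on $V_{+}$ forces $g > 0$. For $(X, A(X)) \in V_{+}$ one computes $\gm\big((X,A(X)),(X,A(X))\big) = \<(X,A(X)),(X,A(X))\>_{E} = 2 A(X)(X) = 2 g(X,X)$, since the skew part $B$ drops out of the diagonal. Positive-definiteness of $\gm$ restricted to $V_{+}$ thus says exactly $g(X,X) > 0$ for $X \neq 0$, so $g$ is a Riemannian metric. Then I would reconstruct $V_{-} = V_{+}^{\perp}$: a short computation shows $V_{-} = \{(X, (B - g)(X)) : X \in \vf{}\} = \{(X, (-g+B)(X))\}$, again a graph over $TM$, and one verifies it is negative definite, consistent with the previous proposition. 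From the splitting $\gTM = V_{+} \oplus V_{-}$ one then has an explicit formula for $\tau$ (equivalently $\gm$) as a block matrix in terms of $g, g^{-1}, B$; e.g. decomposing an arbitrary $(Y,\eta)$ along the two graphs and applying $\tau(\psi_+ + \psi_-) = \psi_+ - \psi_-$.

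For the converse, given any $(g,B)$ with $g>0$ I would simply run the construction backwards: define $V_{+}$ to be the graph of $g + B$, check it is $n$-dimensional and positive definite with respect to \eqref{eq_canpairing} via the identity above, and invoke the previous proposition to conclude $V_{+}$ determines a generalized metric. Uniqueness in both directions follows because the graph representation of $V_{+}$ is unique (a subbundle transverse to $T^\ast M$ determines $A$ uniquely), and the symmetric/skew decomposition of $A$ is unique.

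I expect the only mildly delicate point to be bookkeeping with the sign and argument-order conventions for $B$ as a map $TM \to T^\ast M$ (the paper inserts the vector field as the \emph{second} argument), which affects whether the graph of $g+B$ or $g-B$ gives $V_{+}$; this is purely a matter of fixing conventions consistently rather than a genuine obstacle. The one genuinely substantive step is the transversality argument $V_{+} \cap (\{0\}\oplus T^\ast M) = 0$, which rests on $T^\ast M$ being totally isotropic together with maximality/positive-definiteness of $V_{+}$, and everything else is linear algebra carried out fiberwise and then observed to be smooth.
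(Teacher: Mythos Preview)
Your proposal is correct and follows essentially the same route as the paper: use the equivalent description of $\gm$ as a maximal positive-definite subbundle $V_{+}$, argue that $V_{+}$ is transverse to the isotropic subbundle $T^{\ast}M$ (hence a graph of some $A\in\Hom(TM,T^{\ast}M)$), split $A=g+B$, check $g>0$ from positive-definiteness on the graph, identify $V_{-}$ as the graph of $-g+B$, and run the construction backwards for the converse. The only minor difference is that the paper records the explicit block form (\ref{eq_gmblock}) of $\gm$ and uses it directly for the converse, whereas you describe the converse via $V_{+}$; both are equivalent and your remark about sign/argument-order conventions for $B$ is apt but inconsequential.
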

\begin{proof} \label{tvrz_gmasgandB}
Let $\gm$ be a generalized metric. It thus defines a positive definite subbundle $V_{+} \subseteq \gTM$ of rank $n$, as the signature of $\<\cdot,\cdot\>_{E}$ is $(n,n)$. As both $TM$ and $T^{\ast}M$ are isotropic, we have $V_{+} \cap TM = V_{-} \cap T^{\ast}M = 0$. This implies that $V_{+}$ is a graph of a vector bundle isomorphism $\A \in \Hom(TM,T^{\ast}M)$. We can decompose it as $\A = g + B$, where $g$ is a symmetric form on $M$ and $B \in \df{2}$. As $V_{+}$ is positive definite, it follows that $g > 0$ is a Riemannian metric. Note that $V_{-}$ is the graph of the map $\A' = -g + B$. The corresponding fiber-wise metric $\gm$ can be written in a formal block form 
\begin{equation} \label{eq_gmblock}
\gm = \bm{g- Bg^{-1}B}{Bg^{-1}}{-g^{-1}B}{g^{-1}}. 
\end{equation}
Conversely, given a pair $(g,B)$, one can define $\gm$ by (\ref{eq_gmblock}) and show that it satisfies the properties of Definition \ref{def_genmet3}. 
\end{proof}
Note that (\ref{eq_gmblock}) is the form appearing in physics. For example, its inverse appears naturally in the Hamiltonian density of the Polyakov string in the target space with backgrounds $(g,B)$. 

Given a Riemannian metric $g$, one can always define a generalized metric by $\G = \BlockDiag(g,g^{-1})$. 
With a $2$-form $B \in \df{2}$, one can associate the map $e^{B} \in \End(\gTM)$ defined as 
\begin{equation} \label{eq_etoB}
e^{B}(X,\xi) = (X, \xi + B(X)), 
\end{equation}
for all $(X,\xi) \in \Gamma(\gTM)$. One has $e^{B} \in O(E)$. Then the generalized metric $\gm$ can be written as 
\begin{equation} \label{eq_gmasproduct}
\gm = (e^{-B})^{T} \G e^{-B}.
\end{equation}
For the future reference, define a vector bundle isomorphisms $\fPsi_{\pm}$ mapping any vector field $X \in \vf{}$ onto its images in graphs $V_{\pm}$, respectively. Explicitly, one has $\fPsi_{\pm}(X) = (X, (\pm g + B)(X))$. 

To finish this section, assume that we have a Courant algebroid $(E,\rho,\<\cdot,\cdot\>_{E},[\cdot,\cdot]_{E})$. As before put $\rho^{\ast} = g_{E}^{-1} \circ \rho^{T}$. Let $\gm$ be a generalized metric on $E$. Define a symmetric bilinear form $h_{\gm}$ on $T^{\ast}M$ as 
\begin{equation} \label{eq_hgm}
h_{\gm}(\xi,\eta) = \gm( \rho^{\ast}(\xi), \rho^{\ast}(\eta)),
\end{equation}
for all $\xi,\eta \in \df{1}$. It follows that it is positive semidefinite. For transitive Courant algebroids, $h_{\gm}$ defines a fiber-wise metric on the cotangent bundle $T^{\ast}M$. One of the interesting features is that $h_{\gm}$ is preserved by Courant algebroid isomorphisms orthogonal with respect to respective generalized metrics. 

\begin{lemma} \label{lem_hgcovariant}
Let $(E,\rho,\<\cdot,\cdot\>_{E})$ and $(E',\rho',\<\cdot,\cdot\>_{E'})$ be a pair of vector bundles with anchors and fiber-wise metrics (brackets are not important for this lemma) and let $\F \in \Hom(E,E')$ be an isomorphism in the sense of (\ref{eq_iso}), excluding the brackets. Assume that $\gm$ is a generalized metric on $E$ and $\gm'$ a generalized metric on $E'$, such that 
\begin{equation} \label{eq_gmisometry}
\gm(\psi,\psi') = \gm'(\F(\psi),\F(\psi')), 
\end{equation}
for all $\psi,\psi' \in \Gamma(E)$. Then $h_{\gm} = h_{\gm'}$. 
\end{lemma}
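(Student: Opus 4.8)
The plan is to reduce everything to the single identity $\F \circ \rho^{\ast} = (\rho')^{\ast}$, from which the statement follows by a one-line substitution. First I would rewrite the two compatibility hypotheses on $\F$ as identities between bundle maps rather than between pairings. The anchor condition $\rho = \rho' \circ \F$ dualizes to $\rho^{T} = \F^{T} \circ (\rho')^{T}$, where $\F^{T} \in \Hom((E')^{\ast},E^{\ast})$ is the transpose, $\<\F^{T}(\alpha),\psi\> = \<\alpha,\F(\psi)\>$. Similarly, the condition $\<\psi,\psi'\>_{E} = \<\F(\psi),\F(\psi')\>_{E'}$ translated through the definition of $g_{E}$ reads $g_{E} = \F^{T} \circ g_{E'} \circ \F$, and since $\F$ (hence $\F^{T}$) is invertible this gives $g_{E}^{-1} = \F^{-1} \circ g_{E'}^{-1} \circ (\F^{T})^{-1}$.

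Next I would simply compose: using $\rho^{\ast} = g_{E}^{-1} \circ \rho^{T}$ and the two displays above,
\begin{equation}
\rho^{\ast} = g_{E}^{-1} \circ \rho^{T} = \F^{-1} \circ g_{E'}^{-1} \circ (\F^{T})^{-1} \circ \F^{T} \circ (\rho')^{T} = \F^{-1} \circ g_{E'}^{-1} \circ (\rho')^{T} = \F^{-1} \circ (\rho')^{\ast},
\end{equation}
i.e. $\F \circ \rho^{\ast} = (\rho')^{\ast}$. (Here I am using $(\F^{T})^{-1} = (\F^{-1})^{T}$, which is immediate.) Then, for all $\xi,\eta \in \df{1}$, the definition $h_{\gm}(\xi,\eta) = \gm(\rho^{\ast}(\xi),\rho^{\ast}(\eta))$ combined with the $\F$-compatibility $\gm(\psi,\psi') = \gm'(\F(\psi),\F(\psi'))$ applied to $\psi = \rho^{\ast}(\xi)$, $\psi' = \rho^{\ast}(\eta)$ yields
\begin{equation}
h_{\gm}(\xi,\eta) = \gm'(\F(\rho^{\ast}(\xi)),\F(\rho^{\ast}(\eta))) = \gm'((\rho')^{\ast}(\xi),(\rho')^{\ast}(\eta)) = h_{\gm'}(\xi,\eta),
\end{equation}
which is the claim.

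\textbf{Main obstacle.} There is no serious difficulty here; the proof is essentially bookkeeping with transposes and inverses of bundle maps. The only point that requires a little care is keeping the dualizations straight — correctly identifying $\F^{T}$, verifying $g_{E} = \F^{T} g_{E'} \F$ from the pairing identity, and noting that $\rho^{\ast}$ is exactly the adjoint of $\rho$ with respect to $g_{E}$ and the canonical pairing on $T^{\ast}M$, so that the computation above is just the statement that adjoints compose contravariantly. No use of the brackets or of any Courant-algebroid axiom is needed, consistent with the hypotheses of the lemma.
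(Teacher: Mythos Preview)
Your proof is correct and is exactly the direct calculation the paper has in mind; the paper's own proof consists of the single phrase ``Direct calculation,'' and you have simply written that calculation out in full.
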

\begin{proof}
Direct calculation.
\end{proof}
In the example of $E = \gTM$ and the generalized metric (\ref{eq_gmblock}), the symmetric form $h_{\gm}$ is trivial to calculate, as $\rho^{\ast}(\xi) = (0,\xi)$, and consequently $h_{\gm} = g^{-1}$. As $\rho$ is surjective, $h_{\gm}$ is positive definite, as expected. Observe that the only automorphisms of $(E,\rho,\<\cdot,\cdot\>_{E})$ are exactly of the form (\ref{eq_etoB}), and it follows from (\ref{eq_gmasproduct}) that $h_{\gm}$ is indeed invariant under such automorphisms. 
\section{Courant algebroid connections}\label{sec_connections}
Having a vector bundle $E$ with an anchor $\rho \in \Hom(E,TM)$, it is natural to consider on $E$ an obvious generalization of the ordinary $TM$-connections. For Courant algebroids connections were introduced in \cite{2007arXiv0710.2719G} or in unpublished notes \cite{alekseevxu}. 
\begin{definice}
Let $(E,\rho,\<\cdot,\cdot\>_{E},[\cdot,\cdot]_{E})$ be a Courant algebroid. Let $\cD: \Gamma(E) \times \Gamma(E) \rightarrow \Gamma(E)$ be an $\R$-bilinear map. We use the usual notation $\cD(\psi,\psi') \equiv \cD_{\psi}\psi'$, for all $\psi,\psi' \in \Gamma(E)$. We say that $\cD$ is a \textbf{Courant algebroid connection} if it satisfies:
\begin{equation} \label{eq_connleib}
\cD_{f \psi} \psi' = f \cD_{\psi} \psi', \; \; \cD_{\psi}(f\psi') = f \cD_{\psi}\psi' + (\rho(\psi).f) \psi', 
\end{equation}
for all $\psi,\psi' \in \Gamma(E)$ and $f \in \cif$, and it is compatible with the fiber-wise  metric $\<\cdot,\cdot\>_{E}$:
\begin{equation} \label{eq_conncomp}
\rho(\psi).\<\psi',\psi''\>_{E} = \< \cD_{\psi}\psi', \psi''\>_{E} + \<\psi', \cD_{\psi}\psi''\>_{E}, 
\end{equation}
for all $\psi,\psi',\psi'' \in \Gamma(E)$. 
\end{definice}
\begin{rem}
Recall that for every vector bundle $E$, there exists a vector bundle $\mathfrak{D}(E)$, whose sections are called \emph{derivations} on $E$. To be more specific, $D \in \Gamma(\mathfrak{D}(E))$, if $D$ is an $\R$-linear map $D: \Gamma(E) \rightarrow \Gamma(E)$, such that there exists a vector field $a(D) \in \vf{}$ and
\begin{equation}
D(f\psi) = f D(\psi) + (a(D).f) \psi.
\end{equation}
One can prove that $a$ can be viewed as a surjective vector bundle morphism $a \in \Hom(\mathfrak{D}(E),TM)$, and the bracket $[D,D']_{\mathfrak{D}(E)}(\psi) := D(D'(\psi)) - D'(D(\psi))$ makes the triple $(\mathfrak{D}(E), a, [\cdot,\cdot]_{\mathfrak{D}(E)})$ into a transitive Lie algebroid, fitting into the short exact sequence of Lie algebroids:
\begin{equation}
\begin{tikzcd} \label{eq_derivaceSES}
0 \arrow[r]& \End(E) \arrow[r, hookrightarrow] & \mathfrak{D}(E) \arrow[r,"a"] & TM \arrow[r] & 0 
\end{tikzcd}
\end{equation}
Splittings of this sequence in the category of vector bundles correspond to ordinary vector bundle connections on $E$, in the category of Lie algebroids they correspond to flat vector bundle connections. Also, note that \ref{eq_derivaceSES} is an Atiyah sequence for the frame bundle of $E$. See the classical book \cite{Mackenzie} for a more detailed explanation. As $E$ is equipped with a fiber-wise metric $\<\cdot,\cdot\>_{E}$, one can define a following submodule of the module of derivations: 
\begin{equation}
\Sym(E) = \{ D \in \Gamma(\mathfrak{D}(E)) \; | \; \< D(\psi),\psi'\>_{E} + \<\psi, D(\psi')\>_{E} = a(D).\<\psi,\psi'\>_{E} \}
\end{equation}
The definition of the Courant algebroid connection can be now reformulated in this alternative language. Namely, any vector bundle map $\cD \in \Hom(E,\mathfrak{D}(E))$ is a Courant algebroid connection, if it takes values in the submodule $\Sym(E)$ and it fits into a diagram 
\begin{equation}
\begin{tikzcd}
E \arrow[r,"\cD"] \arrow[rd,"\rho"'] & \mathfrak{D}(E) \arrow[d,"a"]\\
& TM
\end{tikzcd}.
\end{equation}
This viewpoint gives us an immediate answer to the question of existence of Courant algebroid connections. We only need a splitting $\sigma \in \Hom(TM,\mathfrak{D}(E))$ of (\ref{eq_derivaceSES}) valued in $\Sym(E)$. But this is a vector bundle connection on $E$ compatible with $\<\cdot,\cdot\>_{E}$, which exists for any fiber-wise metric. Setting $\cD = \sigma \circ \rho$ gives the required Courant algebroid connection on $E$. 
\end{rem}
\begin{rem}
Having a Courant algebroid connection $\cD$, one can extend every operator $\cD_{\psi}$ to the whole tensor algebra $\T(E)$ using the conventional formulas. For example, when $\eta \in \Gamma(E^{\ast})$ is a $1$-form on $E$, one defines 
\begin{equation}
\< \cD_{\psi}\eta, \psi'\> = \rho(\psi).\<\eta,\psi'\> - \<\eta, \cD_{\psi}\psi'\>,
\end{equation}
for all $\psi,\psi' \in \Gamma(E)$. It follows from (\ref{eq_connleib}) that $\cD_{\psi}\eta \in \Gamma(E^{\ast})$. Note that the extension of $\cD$ to $\T(E)$ will always be  denoted by the same symbol. The metric compatibility (\ref{eq_conncomp}) can  now be rewritten as $\cD_{\psi} g_{E} = 0$ for all $\psi \in \Gamma(E)$, or simply as $\cD g_{E} = 0$. 
\end{rem}
Any Courant algebroid connection $\cD$ naturally defines a \textbf{covariant divergence} operator $\Div_{\cD}: \Gamma(E) \rightarrow \cif$. Indeed, set
\begin{equation}
\Div_{\cD}(\psi) = \< \cD_{\psi_{\lambda}} \psi, \psi^{\lambda} \>, 
\end{equation}
for all $\psi \in \Gamma(E)$, where $\{ \psi_{\lambda} \}_{\lambda=1}^{\rank(E)}$ is an arbitrary local frame on $E$, and $\{ \psi^{\lambda} \}_{\lambda=1}^{\rank(E)}$ is the respective dual frame on $E^{\ast}$. As $\cD_{\psi}$ is $\cif$-linear in $\psi$, it is well defined. One can easily derive the following Leibniz rule for this operator:
\begin{equation} \label{eq_leibnizdiv}
\Div_{\cD}(f\psi) = f \Div_{\cD}(\psi) + \rho(\psi).f,
\end{equation}
for all $\psi \in \Gamma(E)$ and $f \in \cif$. Now, recall the $\cif$-linear map $\D$ (\ref{eq_Ddiagram}) going in the opposite direction. It is thus interesting to investigate its compositions with the covariant divergence. 
\begin{lemma} \label{lem_XcDisavf}
Let $X_{\cD}: \cif \rightarrow \cif$ be an $\R$-linear operator defined as a composition 
\begin{equation} \label{eq_charvf}
X_{\cD} = \Div_{\cD} \circ \; \D.
\end{equation}
Then $X_{\cD}$ is a vector field on $M$. Moreover, this vector field is invariant under Courant algebroid isomorphisms (\ref{eq_iso}) in the sense explained in the proof. We call $X_{\cD} \in \vf{}$ the \textbf{characteristic vector field} of the Courant algebroid connection $\cD$. 
\end{lemma}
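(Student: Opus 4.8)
The plan is to show that $X_{\cD}$ is an $\R$-linear derivation of the algebra $\cif$; by the standard identification of derivations of $\cif$ with vector fields on $M$ this will exhibit $X_{\cD}$ as a unique element of $\vf{}$. Both $\D$ and $\Div_{\cD}$ are manifestly $\R$-linear, so $X_{\cD} = \Div_{\cD}\circ\D$ is $\R$-linear as well, and the only real content is the Leibniz rule $X_{\cD}(fg) = f\,X_{\cD}(g) + g\,X_{\cD}(f)$.

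The first step is to observe that $\D$ is itself a derivation, $\D(fg) = f\,\D g + g\,\D f$. This is immediate either from $\D = g_{E}^{-1}\circ\rho^{T}\circ d$ and the Leibniz rule for $d$, or from the characterization $\<\D f,\psi\>_{E} = \rho(\psi).f$ recorded below Definition \ref{def_Courant} together with non-degeneracy of $\<\cdot,\cdot\>_{E}$. The second step is to apply $\Div_{\cD}$, using its Leibniz rule (\ref{eq_leibnizdiv}):
\begin{align*}
X_{\cD}(fg) &= \Div_{\cD}\big( f\,\D g + g\,\D f\big) \\
&= f\,\Div_{\cD}(\D g) + \rho(\D g).f + g\,\Div_{\cD}(\D f) + \rho(\D f).g.
\end{align*}
The two anchor terms vanish because $\rho\circ\D = 0$ (this is established in the proof of Lemma \ref{lem_courantprops}; it says precisely that the image of $\D$ lies in $\ker\rho$), and what remains is $f\,X_{\cD}(g) + g\,X_{\cD}(f)$. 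Hence $X_{\cD} \in \vf{}$.

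For the invariance statement, let $\F\in\Hom(E,E')$ be an isomorphism of Courant algebroids in the sense of (\ref{eq_iso}). It induces a bijection between connections on $E$ and on $E'$ by $\cD\mapsto\cD' := \F\circ\cD\circ\F^{-1}$, meaning $\cD'_{\psi'}\psi'' := \F\big(\cD_{\F^{-1}(\psi')}\F^{-1}(\psi'')\big)$ (that $\cD'$ is again a Courant algebroid connection is routine, since $\F$ preserves anchor and fiber-wise metric), and the claim to prove is $X_{\cD'} = X_{\cD}$. This rests on two facts: (i) $\F\circ\D = \D'$, which follows from the characterization $\<\D f,\psi\>_{E}=\rho(\psi).f$ together with $\rho=\rho'\circ\F$ and the metric-preservation in (\ref{eq_iso}); and (ii) $\Div_{\cD'}\circ\F = \Div_{\cD}$, which one verifies by computing $\Div_{\cD'}(\F(\psi))$ in the local frame $\{\F(\psi_{\lambda})\}$ of $E'$, whose dual frame is $\{(\F^{T})^{-1}\psi^{\lambda}\}$; the transpose factors telescope because $\F$ is orthogonal, leaving $\<\cD_{\psi_{\lambda}}\psi,\psi^{\lambda}\> = \Div_{\cD}(\psi)$. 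Composing, $X_{\cD'} = \Div_{\cD'}\circ\D' = \Div_{\cD'}\circ\F\circ\D = \Div_{\cD}\circ\D = X_{\cD}$.

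I expect the only mildly delicate point to be the bookkeeping in step (ii): keeping straight the canonical pairing between $E'$ and $E'^{\ast}$ versus the one between $E$ and $E^{\ast}$, and checking carefully that the dual frame of $\{\F(\psi_{\lambda})\}$ is indeed the image of $\{\psi^{\lambda}\}$ under $(\F^{T})^{-1}$ so that the transposes cancel. Everything else reduces to the Leibniz rules already at hand and to $\rho\circ\D = 0$.
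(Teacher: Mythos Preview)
Your proof is correct and follows essentially the same route as the paper: both establish the derivation property of $X_{\cD}$ by combining the Leibniz rules for $\D$ and $\Div_{\cD}$ and then killing the cross terms via Lemma \ref{lem_courantprops} (you phrase it as $\rho\circ\D=0$, the paper as $\<\D f,\D g\>_{E}=0$, which are the same fact), and both prove invariance by showing $\F\circ\D=\D'$ and $\Div_{\cD'}\circ\F=\Div_{\cD}$. Your treatment of the frame bookkeeping in step (ii) is slightly more explicit than the paper's, but the arguments are otherwise identical.
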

\begin{proof}
First, one can rewrite (\ref{eq_leibnizdiv}) using $\D$ and $\<\cdot,\cdot\>_{E}$ as 
\begin{equation}
\Div_{\cD}(f\psi) = f \Div_{\cD}(\psi) + \< \D{f}, \psi\>_{E}, 
\end{equation}
for all $\psi \in \Gamma(E)$ and $f \in \cif$. Moreover, note that $\D$ satisfies the Leibniz rule in the same form as a usual differential, $\D{(fg)} = \D{(f)} g  + f \D{(g)}$. Consequently 
\begin{equation}
\Div_{\cD}(\D{(fg)}) = \Div_{\cD}( \D{(f)} g + f \D{(g)} ) = \Div_{\cD}(\D{f}) g + f \Div_{\cD}(\D{g}) + 2 \< \D{f}, \D{g} \>_{E},
\end{equation}
for all $f,g \in \cif$. But recall that every Courant algebroid satisfies (\ref{eq_courantprops}). Hence the last term vanishes and we obtain the rule 
\begin{equation}
X_{\cD}(fg) = X_{\cD}(f) g + f X_{\cD}(g), 
\end{equation}
proving that $X_{\cD}$ is indeed a vector field on $M$. Next, let $(E,\rho,\<\cdot,\cdot\>_{E},[\cdot,\cdot]_{E})$ and $(E',\rho',\<\cdot,\cdot\>_{E'},[\cdot,\cdot]_{E'})$ be two isomorphic Courant algebroids, that is $\F \in \Hom(E,E')$ satisfies (\ref{eq_iso}). Let $\cD$ be a Courant algebroid connection on $E$, and let $\cD'$ be defined by
\begin{equation} \label{eq_isoconn}
\F( \cD_{\psi}\psi') = \cD'_{\F(\psi)} \F(\psi'), 
\end{equation}
for all $\psi,\psi' \in \Gamma(E)$. Clearly, $\cD'$ is a Courant algebroid connection on $E'$. The assertion of the lemma is that $X_{\cD} = X_{\cD'}$. It follows from (\ref{eq_isoconn}) that $\Div_{\cD}(\psi) = \Div_{\cD'}(\F(\psi))$ for all $\psi \in \Gamma(E)$. Also,  $\D'{f} = \F(\D{f})$ from (\ref{eq_iso}). Consequently, one obtains $\Div_{\cD}(\D{f}) = \Div_{\cD'}(\D'{f})$. 
\end{proof}

There is another tensorial quantity invariant in the same sense as $X_{\cD}$ of the previous lemma. It uses the induced vector bundle map $\rho^{\ast} \in \Hom(T^{\ast}M,E)$. It is a contravariant tensor $V_{\cD} \in \T^{3}_{0}(M)$ defined as 
\begin{equation} \label{eq_VcD}
V_{\cD}(\xi,\eta,\zeta) = \< \cD_{\rho^{\ast}(\xi)} \rho^{\ast}(\eta), \rho^{\ast}(\zeta) \>_{E},
\end{equation}
for all $\xi,\eta,\zeta \in \df{1}$. This quantity is indeed $\cif$-linear in all inputs. For the second one,  $\rho \circ \rho^{\ast} = 0$ must be used. It follows from (\ref{eq_conncomp}) that in fact one has $V_{\cD} \in \df{1} \otimes \df{2}$, that is $V_{\cD}(\xi,\eta,\zeta) + V_{\cD}(\xi,\zeta,\eta) = 0$. Moreover, $V_{\cD}$ is invariant under Courant algebroid isomorphisms, i.e.,  $V_{\cD}= V_{\cD'}$.

\begin{rem}
It might seem that the bracket $[\cdot,\cdot]_{E}$ is irrelevant for the definitions of $X_{\cD}$ and $V_{\cD}$. This is not true, as equations in (\ref{eq_courantprops}) were proved using the left Leibniz rule (\ref{eq_leftleibniz}) combined with (\ref{eq_rhoishom}). However, everything still works for pre-Courant algebroids mentioned in Remark \ref{rem_precourant}. 
\end{rem}

The Main idea behind Courant algebroid connections generalizing the ordinary vector bundle connections is to have an $\R$-bilinear map mapping a pair of sections of $E$ to a section of $E$. It thus resembles standard manifold connections. This suggests an attempt to define an analogue of the torsion operator. However, the naive guess $T(\psi,\psi') = \cD_{\psi}\psi' - \cD_{\psi'}\psi - [\psi,\psi']_{E}$ fails. The map $T$ is neither skew-symmetric nor $\cif$-linear in $\psi$. A solution to this problem was proposed independently in \cite{alekseevxu} and \cite{2007arXiv0710.2719G}. Both definitions are equivalent for Courant algebroid connections, i.e., those satisfying (\ref{eq_conncomp}). A downside of this notion of torsion is its unclear geometrical interpretation.

\begin{definice}
Let $(E,\rho,\<\cdot,\cdot\>_{E},[\cdot,\cdot]_{E})$ be a Courant algebroid, and let $\cD$ be a Courant algebroid connection. Then a \textbf{torsion operator} is a map $T: \Gamma(E) \times \Gamma(E) \rightarrow \Gamma(E)$ defined as 
\begin{equation} \label{eq_torsionop}
T(\psi,\psi') = \cD_{\psi}\psi' - \cD_{\psi'}\psi - [\psi,\psi']_{E} + \< \cD_{\psi_{\lambda}} \psi, \psi'\>_{E} \cdot \psi^{\lambda}_{E},
\end{equation}
for all $\psi,\psi' \in \Gamma(E)$, where $\{\psi_{\lambda}\}_{\lambda=1}^{\rank(E)}$ is an arbitrary local frame on $E$, $\{\psi^{\lambda}\}_{\lambda=1}^{\rank(E)}$ the corresponding dual one. We denote by $\psi^{\lambda}_{E}$ the inverse image of the dual frame under the isomorphism $g_{E}$, that is $\psi^{\lambda}_{E} := g_{E}^{-1}(\psi^{\lambda}),$ for each $\lambda$. 
\end{definice}

\begin{lemma}
The torsion operator $T$ defined by (\ref{eq_torsionop}) is skew-symmetric and $\cif$-linear in both inputs. It thus defines a \textbf{torsion tensor} $T \in \T_{2}^{1}(E)$. Moreover, one can define the covariant tensor $T_{G} \in \Omega^{2}(E) \otimes \Omega^{1}(E)$ as 
\begin{equation}
T_{G}(\psi,\psi',\psi'') = \< T(\psi,\psi'), \psi''\>_{E} \equiv \< \cD_{\psi}\psi' - \cD_{\psi'}\psi - [\psi,\psi']_{E},\psi''\>_{E} + \< \cD_{\psi''}\psi,\psi'\>_{E},
\end{equation}
for all $\psi,\psi',\psi'' \in \Gamma(E)$. Such a $T_{G}$ is completely skew-symmetric, and one calls $T_{G} \in \Omega^{3}(E)$ the (Gualtieri) \textbf{torsion $3$-form}. 
\end{lemma}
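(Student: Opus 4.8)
The plan is to verify three things in sequence: (i) that the naive map $T(\psi,\psi') = \cD_{\psi}\psi' - \cD_{\psi'}\psi - [\psi,\psi']_{E}$ plus the correction term $\< \cD_{\psi_{\lambda}} \psi, \psi'\>_{E} \cdot \psi^{\lambda}_{E}$ is independent of the chosen local frame; (ii) that the resulting $T$ is $\cif$-linear in each slot; and (iii) that the associated covariant object $T_{G}(\psi,\psi',\psi'') = \<T(\psi,\psi'),\psi''\>_{E}$ is totally skew-symmetric. Frame-independence of the correction term is essentially the same argument already used to see that $\Div_{\cD}$ is well-defined: the expression $\< \cD_{\psi_{\lambda}}\psi,\psi'\>_{E}\,\psi^{\lambda}_{E}$ is, up to the isomorphism $g_{E}$, the image of the $1$-form $\psi'' \mapsto \<\cD_{\psi''}\psi,\psi'\>_{E}$ (this is $\cif$-linear in $\psi''$ since $\cD$ is $\cif$-linear in its lower slot), so it does not depend on how one writes that $1$-form in a frame.

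Next I would check $\cif$-linearity. Skew-symmetry of $T$ under $\psi \leftrightarrow \psi'$ is the cleanest thing to record first, so I would do that before linearity: the first three terms $\cD_{\psi}\psi' - \cD_{\psi'}\psi - [\psi,\psi']_{E}$ are manifestly antisymmetric up to the symmetric part of the bracket, which by (\ref{eq_courantsympart}) equals $\D\<\psi,\psi'\>_{E}$; meanwhile the correction term $\<\cD_{\psi_{\lambda}}\psi,\psi'\>_{E}\psi^{\lambda}_{E}$ picks up, under the swap, the extra piece $\<\cD_{\psi_{\lambda}}\psi,\psi'\>_{E}\psi^{\lambda}_{E} + \<\cD_{\psi_{\lambda}}\psi',\psi\>_{E}\psi^{\lambda}_{E}$, and metric compatibility (\ref{eq_conncomp}) rewrites this sum as $(\rho(\psi_{\lambda}).\<\psi,\psi'\>_{E})\,\psi^{\lambda}_{E}$, which is exactly $g_{E}^{-1}(d\<\psi,\psi'\>_{E}) = \D\<\psi,\psi'\>_{E}$ by the defining diagram (\ref{eq_Ddiagram}). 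So the symmetric-bracket term is cancelled precisely by the symmetric part of the correction, giving $T(\psi,\psi') = -T(\psi',\psi)$. Given skew-symmetry, it suffices to test $\cif$-linearity in the second slot, $T(\psi,f\psi')$: the Leibniz rules (\ref{eq_connleib}) for $\cD$ and (\ref{eq_lebnizrule}) for the bracket produce anomalous terms $(\rho(\psi).f)\psi'$ from $\cD_{\psi}(f\psi')$ and $-(\rho(\psi).f)\psi'$ from $-[\psi,f\psi']_{E}$, which cancel, while $-\cD_{f\psi'}\psi = -f\cD_{\psi'}\psi$ contributes nothing anomalous, and the correction term is $\cif$-linear in $\psi'$ on the nose; so $T(\psi,f\psi') = fT(\psi,\psi')$.

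Finally, for total skew-symmetry of $T_{G}$: antisymmetry in the first two arguments is inherited from $T$, so the only new identity to establish is antisymmetry under $\psi' \leftrightarrow \psi''$, i.e. $\<T(\psi,\psi'),\psi''\>_{E} = -\<T(\psi,\psi''),\psi'\>_{E}$. Writing both sides out and using metric compatibility (\ref{eq_conncomp}) to move derivatives off $\psi'$ and $\psi''$, one finds $\<\cD_{\psi}\psi',\psi''\>_{E} + \<\cD_{\psi}\psi'',\psi'\>_{E} = \rho(\psi).\<\psi',\psi''\>_{E}$, and the $\rho(\psi).\<\psi',\psi''\>_{E}$ terms combine with the $\<[\psi,\psi']_{E},\psi''\>_{E}$-type terms via the invariance axiom (\ref{eq_invarianceznovu}); the correction terms pair up and collapse again using (\ref{eq_conncomp}) and the defining property $\<\D f,\psi\>_{E} = \rho(\psi).f$ together with $\rho\circ\rho^{\ast}=0$ from Lemma \ref{lem_courantprops} to kill the leftover $\D$-contributions. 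This bookkeeping — tracking exactly which of the three correction/bracket/invariance identities cancels which term — is the main obstacle; everything else is routine. I expect the cleanest route is to expand $T_{G}(\psi,\psi',\psi'') + T_{G}(\psi,\psi'',\psi')$ fully, substitute (\ref{eq_conncomp}) and (\ref{eq_invarianceznovu}) and (\ref{eq_courantsympart}), and watch it reduce to $0$, rather than to argue term by term.
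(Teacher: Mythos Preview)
Your proof is correct and follows the same direct-computation approach as the paper, which only sketches one step (showing $T_{G}(\psi,\psi,\psi')=0$ via the axiom (\ref{eq_couraxjinak}) in place of your symmetrization argument) and declares the rest ``similar.'' One small point: in the $\psi'\leftrightarrow\psi''$ check the correction terms $\<\cD_{\psi''}\psi,\psi'\>_{E}$ and $\<\cD_{\psi'}\psi,\psi''\>_{E}$ cancel directly against the $-\<\cD_{\psi'}\psi,\psi''\>_{E}$ and $-\<\cD_{\psi''}\psi,\psi'\>_{E}$ pieces, so no $\D$-contributions arise and $\rho\circ\rho^{\ast}=0$ is never invoked --- indeed the paper remarks that only (\ref{eq_lebnizrule}), (\ref{eq_conncomp}), (\ref{eq_invarianceznovu}) and (\ref{eq_courantsympart}) are needed, not the Leibniz identity or (\ref{eq_rhoishom}).
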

\begin{proof}
This is a straightforward calculation using the definitions of the Courant algebroid connection (\ref{eq_connleib}, \ref{eq_conncomp}) and of the Courant algebroid bracket (\ref{eq_lebnizrule}, \ref{eq_courantsympart}, \ref{eq_leftleibniz}). For example, the skew-symmetry of the operator $T$ in its inputs can be proved as follows:
\begin{equation}
T_{G}(\psi,\psi,\psi') = \< \cD_{\psi'}\psi,\psi\>_{E} -\< [\psi,\psi]_{E}, \psi'\>_{E} = \frac{1}{2} \rho(\psi').\<\psi,\psi\>_{E} - \<[\psi,\psi]_{E},\psi'\>_{E} = 0,
\end{equation}
for all $\psi,\psi' \in \Gamma(E)$, where the last equality follows from (\ref{eq_couraxjinak}). Hence $T(\psi,\psi) = 0$, which proves the skew-symmetry. The proof of other assertions is similar. Note that only Leibniz rule and axiom (\ref{eq_courantsympart}) are used in the proof, one does not need neither the Leibniz identity nor the property (\ref{eq_rhoishom}). 
\end{proof}

With the torsion operator issue successfully resolved, we can turn our attention to the curvature operator. Again, the naive definition brings up essentially the same issues as in the case of the torsion operator. Define an operator $R^{(0)}$ as for an ordinary linear connection on a manifold:
\begin{equation}
R^{(0)}(\psi,\psi') \phi = \cD_{\psi}\cD_{\psi'} \phi - \cD_{\psi'}\cD_{\psi} \phi - \cD_{[\psi,\psi']_{E}} \phi,
\end{equation}
for all $\psi,\psi',\phi,\phi' \in \Gamma(E)$. This operator is not skew-symmetric in $(\psi,\psi')$, and moreover the $\cif$-linearity  in the first input $\psi$ is broken due to the more complicated left Leibniz rule (\ref{eq_leftleibniz}). To be more precise, one obtains 
\begin{equation} \label{eq_R0flinearity}
R^{(0)}(f \psi,\psi') \phi = f R^{(0)}(\psi,\psi') \phi - \<\psi,\psi'\>_{E} \cdot \cD_{\D{f}}( \phi). 
\end{equation}
One can circumvent this inconvenience using various approaches. As an example, one can consider $\psi,\psi' \in \Gamma(L)$, where $L \subseteq E$ is a Dirac structure in $E$, i.e., a subbundle maximally isotropic with respect to $\<\cdot,\cdot\>_{E}$ and involutive with respect to $[\cdot,\cdot]_{E}$. Note that $L$ with restricted anchor and bracket is always a Lie algebroid, and $R^{(0)}$ restricted on $\psi,\psi' \in \Gamma(L)$ is then the curvature of the Lie algebroid connection $\cD|_{L}$ on $E$. If $R^{(0)}$ vanishes, such a flat connection defines a Lie algebroid action of $L$ on the vector bundle $E$.  Alternatively, one can take $\psi$ and $\psi'$ to be sections of mutually orthogonal subbundles, such as e.g. $V_{\pm}$ defined by generalized Riemannian metric, see Definition \ref{def_genmet2}. This is an approach pursued e.g. in \cite{2013arXiv1304.4294G}. 

One can follow a completely different path in which one modifies $R^{(0)}$ in order to make it into a honest tensor on $E$. This definition does not require any additional structure.  A convenient definition can be found in  the work of Hohm and Zwiebach \cite{Hohm:2012mf} on double field theory. Up to prefactors, the following definition follows their idea. It is convenient to work with covariant tensors instead of operators. To be more specific, define $R^{(0)}: \Gamma(E)^{\otimes 4} \rightarrow \cif$ as 
\begin{equation}
R^{(0)}(\phi',\phi,\psi,\psi'):= \< R^{(0)}(\psi,\psi')\phi, \phi' \>_{E},
\end{equation}
for all $\psi,\psi',\phi,\phi' \in \Gamma(E)$. We use the same letter for both objects. We hope that this will cause no confusion. 
\begin{definice}
Let $(E,\rho,\<\cdot,\cdot\>_{E},[\cdot,\cdot]_{E})$ be a Courant algebroid, and let $\cD$ be a Courant algebroid connection. Then the \textbf{Riemann curvature tensor} $R \in \T_{4}^{0}(E)$ of the connection $\cD$ is defined as
\begin{equation} \label{eq_Rtensor}
R(\phi',\phi,\psi,\psi') = \frac{1}{2} \{ R^{(0)}(\phi',\phi,\psi,\psi') + R^{(0)}(\psi',\psi,\phi,\phi') + \< \cD_{\psi_{\lambda}} \psi,\psi'\>_{E} \cdot \< \cD_{\psi^{\lambda}_{E}} \phi, \phi'\>_{E} \},
\end{equation}
for all $\psi,\psi',\phi,\phi' \in \Gamma(E)$. The \textbf{Riemann curvature operator} is the operator $R$ related to the Riemann curvature tensor as $\<R(\psi,\psi')\phi, \phi'\>_{E} = R(\phi',\phi,\psi,\psi')$ and denoted by the same symbol. 
\end{definice}
\begin{rem}
In the definitions presented here, we choose to not use the adjective "generalized" in order to declutter the written text. It should be always clear from the context which kind of objects we have in mind. Moreover, we sometimes omit the words Riemann\footnote{Sorry, Bernhard.} or curvature (but never both). 
\end{rem}
The local frame in the above definition $\{\psi_{\lambda}\}_{\lambda=1}^{\rank(E)}$ on $E$ is again arbitrary, and $\{ \psi^{\lambda}_{E} \}_{\lambda=1}^{\rank(E)}$ is the induced local frame defined uniquely by relations $\< \psi_{\lambda}, \psi^{\mu}_{E} \>_{E} = \delta_{\lambda}^{\mu}$. Apart from the $\cif$-linearity in all inputs,  this definition gives a tensor $R$ with interesting symmetries, similar\footnote{But not the same!} to those of ordinary Riemann curvature tensor. We summarize these observations in the form of a proposition.

\begin{tvrz}
The map (\ref{eq_Rtensor}) is $\cif$-linear in all inputs, hence indeed $R \in \T_{4}^{0}(E)$. Moreover, it possesses the following symmetries:
\begin{align}
\label{eq_Rsym1} R(\phi',\phi,\psi,\psi') + R(\phi',\phi,\psi',\psi) & = 0, \\
\label{eq_Rsym2} R(\phi',\phi,\psi,\psi') + R(\phi,\phi',\psi,\psi') & = 0, \\
\label{eq_Rsym3} R(\phi',\phi,\psi,\psi') - R(\psi',\psi,\phi,\phi') & = 0, 
\end{align}
for all $\psi,\psi',\phi,\phi' \in \Gamma(E)$. In particular, the curvature operator $R(\psi,\psi')$ is skew-symmetric in $(\psi,\psi')$. Moreover, (\ref{eq_Rsym1} - \ref{eq_Rsym3}) imply the interchange symmetry:
\begin{equation}
\label{eq_Rsym4} R(\phi',\phi,\psi,\psi') - R(\psi,\psi',\phi',\phi) = 0. 
\end{equation}
\end{tvrz}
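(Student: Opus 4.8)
The plan is to write the defining formula (\ref{eq_Rtensor}) schematically as $2R = R^{(0)} + \widetilde{R}^{(0)} + S$, where $\widetilde{R}^{(0)}(\phi',\phi,\psi,\psi') := R^{(0)}(\psi',\psi,\phi,\phi')$ is the pair-swapped copy of $R^{(0)}$ and $S(\psi,\psi';\phi,\phi') := \<\cD_{\psi_{\lambda}}\psi,\psi'\>_{E}\,\<\cD_{\psi^{\lambda}_{E}}\phi,\phi'\>_{E}$ is the correction term. The basic tool throughout is the pair of frame-completeness identities $\chi = \<\chi,\psi^{\lambda}_{E}\>_{E}\psi_{\lambda} = \<\chi,\psi_{\lambda}\>_{E}\psi^{\lambda}_{E}$ (both immediate from $\<\psi_{\lambda},\psi^{\mu}_{E}\>_{E}=\delta^{\mu}_{\lambda}$), together with $\<\D{f},\psi\>_{E}=\rho(\psi).f$ and the $\cif$-linearity of $\cD$ in its lower slot. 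First I would record the elementary linearity of $R^{(0)}(\phi',\phi,\psi,\psi')$: it is trivially $\cif$-linear in $\phi'$; it is $\cif$-linear in $\phi$ because the first-order-in-$f$ terms cancel using $\rho([\psi,\psi']_{E})=[\rho(\psi),\rho(\psi')]$ (Lemma \ref{lem_rhoishom}), exactly as for ordinary connections; it is $\cif$-linear in $\psi'$ by a direct use of (\ref{eq_connleib}) and the algebroid Leibniz rule (\ref{eq_lebnizrule}); and in $\psi$ it obeys the anomalous rule (\ref{eq_R0flinearity}). To kill that anomaly I would compute $S(f\psi,\psi';\phi,\phi')$: from $\cD_{\psi_{\lambda}}(f\psi)=f\cD_{\psi_{\lambda}}\psi+(\rho(\psi_{\lambda}).f)\psi$ and the frame identities one gets $S(f\psi,\psi';\phi,\phi')=fS(\psi,\psi';\phi,\phi')+\<\psi,\psi'\>_{E}\<\cD_{\D{f}}\phi,\phi'\>_{E}$, which is precisely the term cancelling the anomaly of the $R^{(0)}$-part in (\ref{eq_R0flinearity}); the $\widetilde{R}^{(0)}$-part is $\cif$-linear in $\psi$ since there $\psi$ sits in a good slot. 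Hence $R$ is $\cif$-linear in $\psi$, and running the same argument with the roles of $(\psi,\psi')$ and $(\phi,\phi')$ interchanged, also in $\phi$; so $R\in\T_{4}^{0}(E)$.

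Next I would isolate the one structural fact about $S$. For fixed $\psi,\psi'$ the map $\chi\mapsto\<\cD_{\chi}\psi,\psi'\>_{E}$ is $\cif$-linear by the first half of (\ref{eq_connleib}), hence is evaluation against a unique $1$-form $\beta_{\psi,\psi'}\in\Omega^{1}(E)$, and the frame identities give $S(\psi,\psi';\phi,\phi')=\<\cD_{g_{E}^{-1}\beta_{\psi,\psi'}}\phi,\phi'\>_{E}=\beta_{\phi,\phi'}\big(g_{E}^{-1}\beta_{\psi,\psi'}\big)$. Since $g_{E}^{-1}$ is symmetric, $\beta(g_{E}^{-1}\gamma)=\gamma(g_{E}^{-1}\beta)$, so $S$ is symmetric under the pair exchange $(\psi,\psi')\leftrightarrow(\phi,\phi')$. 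As $R^{(0)}+\widetilde{R}^{(0)}$ is manifestly invariant under that exchange, this at once proves (\ref{eq_Rsym3}).

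For (\ref{eq_Rsym1}) I would symmetrise $R(\phi',\phi,\psi,\psi')+R(\phi',\phi,\psi',\psi)$ in $(\psi,\psi')$ term by term. The $R^{(0)}$-part becomes $\<\big(R^{(0)}(\psi,\psi')+R^{(0)}(\psi',\psi)\big)\phi,\phi'\>_{E}=-\<\cD_{\D{\<\psi,\psi'\>_{E}}}\phi,\phi'\>_{E}$ by the symmetric-bracket axiom (\ref{eq_courantsympart}). The $\widetilde{R}^{(0)}$-part is $\<R^{(0)}(\phi,\phi')\psi,\psi'\>_{E}+\<R^{(0)}(\phi,\phi')\psi',\psi\>_{E}$, which vanishes once one knows that the operator $R^{(0)}(\phi,\phi')$ is skew-symmetric with respect to $\<\cdot,\cdot\>_{E}$; I would prove this metric skew-symmetry by differentiating $\<\psi,\psi'\>_{E}$ twice with (\ref{eq_conncomp}) and using $\rho([\phi,\phi']_{E})=[\rho(\phi),\rho(\phi')]$, exactly as in the classical Riemannian case. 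The $S$-part becomes $\big(\<\cD_{\psi_{\lambda}}\psi,\psi'\>_{E}+\<\cD_{\psi_{\lambda}}\psi',\psi\>_{E}\big)\<\cD_{\psi^{\lambda}_{E}}\phi,\phi'\>_{E}=(\rho(\psi_{\lambda}).\<\psi,\psi'\>_{E})\<\cD_{\psi^{\lambda}_{E}}\phi,\phi'\>_{E}$ by (\ref{eq_conncomp}), and contracting with the frame identities this equals $+\<\cD_{\D{\<\psi,\psi'\>_{E}}}\phi,\phi'\>_{E}$. The two surviving terms cancel, giving (\ref{eq_Rsym1}); (\ref{eq_Rsym2}) is obtained identically after exchanging the two pairs (now the $\widetilde{R}^{(0)}$-part contributes $-\<\cD_{\D{\<\phi,\phi'\>_{E}}}\psi,\psi'\>_{E}$ via (\ref{eq_courantsympart}), the $R^{(0)}$-part vanishes by metric skew-symmetry of $R^{(0)}(\psi,\psi')$, and the $S$-part supplies the opposite sign).

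Finally, (\ref{eq_Rsym4}) is purely formal: writing the arguments as $R(a,b,c,d)$, one has $R(a,b,c,d)=-R(a,b,d,c)$ by (\ref{eq_Rsym1}), $=R(b,a,d,c)$ by (\ref{eq_Rsym2}), $=R(c,d,a,b)$ by (\ref{eq_Rsym3}), i.e. $R(\phi',\phi,\psi,\psi')=R(\psi,\psi',\phi',\phi)$. I expect the only genuine difficulty to be the careful bookkeeping of the anomalous $\cif$-linearity terms and verifying that they cancel exactly between the $R^{(0)}$-type terms and the correction term $S$ (and, relatedly, establishing the metric skew-symmetry of $R^{(0)}$); once those are in hand, every remaining step is a mechanical application of (\ref{eq_connleib}), (\ref{eq_conncomp}), (\ref{eq_courantsympart}), Lemma \ref{lem_rhoishom}, and the frame-completeness identities.
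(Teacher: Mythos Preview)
Your proof is correct and follows essentially the same route as the paper's own argument: verify $\cif$-linearity by showing the anomaly (\ref{eq_R0flinearity}) of $R^{(0)}$ is cancelled by the correction term, establish the metric skew-symmetry of $R^{(0)}$ via (\ref{eq_conncomp}) and Lemma~\ref{lem_rhoishom}, and then prove (\ref{eq_Rsym1}) by showing that the $R^{(0)}$-contribution $-\<\cD_{\D\<\psi,\psi'\>_{E}}\phi,\phi'\>_{E}$ coming from (\ref{eq_courantsympart}) is exactly cancelled by the $S$-term. The only cosmetic differences are that the paper polarises (setting $\psi=\psi'$) where you symmetrise, and the paper deduces (\ref{eq_Rsym2}) from (\ref{eq_Rsym1}) and (\ref{eq_Rsym3}) rather than redoing the computation; both are equivalent.
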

\begin{proof}
From (\ref{eq_lebnizrule}, \ref{eq_rhoishom}, \ref{eq_leftleibniz}) and (\ref{eq_connleib}) it follows that $R^{(0)}$ is $\cif$-linear in all inputs except for the third one, where (\ref{eq_R0flinearity}) applies:
\begin{equation}
R^{(0)}(\phi',\phi,f \psi, \psi') = f R^{(0)}(\phi',\phi, \psi, \psi') - \< \psi,\psi'\>_{E} \cdot \<\cD_{\D{f}}(\phi), \phi'\>_{E}.
\end{equation}
But this is exactly corrected by the third term in (\ref{eq_Rtensor}):
\begin{equation}
\<\cD_{\psi_{\lambda}}(f\psi), \psi'\>_{E} \cdot \< \cD_{\psi^{\lambda}_{E}}\phi,\phi'\>_{E} = \{ f \<\cD_{\psi_{\lambda}}(\psi), \psi'\>_{E} + (\rho(\psi_{\lambda}).f) \cdot \<\psi,\psi'\>_{E} \} \cdot \< \cD_{\psi^{\lambda}_{E}}\phi,\phi'\>_{E} \}.
\end{equation}
Note that $\rho(\psi_{\lambda}).f = \<\psi_{\lambda}, \D{f}\>_{E}$ and $\<\psi_{\lambda},\D{f}\>_{E} \cdot \psi^{\lambda}_{E} = \D{f}$. The second term of the expression on the right-hand side thus gives precisely $\<\psi,\psi'\>_{E} \cdot \<\cD_{\D{f}}(\phi), \phi'\>_{E}$. Similarly, the second copy of $R^{(0)}$ in (\ref{eq_Rtensor}) is $\cif$-linear in all inputs except for $\phi$, which is again corrected by the third term. This term itself is $\cif$-linear in the remaining two inputs $\psi'$ and $\phi'$. Hence $R \in \T_{4}^{0}(E)$. 

Next, the symmetries. The one in (\ref{eq_Rsym3}) is manifest, following directly from the definition (\ref{eq_Rtensor}). To prove (\ref{eq_Rsym1}), one first shows that the map $R^{(0)}$ in fact satisfies (\ref{eq_Rsym2}). This is follows by repeated use of (\ref{eq_conncomp}) together with (\ref{eq_rhoishom}). With the help of this observation, one has
\begin{align}
R^{(0)}(\phi',\phi, \psi,\psi) & = -\< \cD_{[\psi,\psi]_{E}} \phi, \phi'\>_{E}, \\
R^{(0)}(\psi, \psi, \phi , \phi') & = 0,
\end{align}
for all $\psi, \phi, \phi' \in \Gamma(E)$. For the third term, one finds
\begin{equation}
\begin{split}
\< \cD_{\psi_{\lambda}} \psi,\psi\>_{E} \cdot \<\cD_{\psi^{\lambda}_{E}} \phi, \phi'\>_{E} = & \ \frac{1}{2} (\rho(\psi_{\lambda}).\<\psi,\psi\>_{E}) \cdot \< \cD_{\psi^{\lambda}_{E}} \phi, \phi'\>_{E} \\
= & \ \frac{1}{2} \< \D\<\psi,\psi\>_{E}, \psi_{\lambda}\>_{E} \cdot \<\cD_{\psi^{\lambda}_{E}}\phi,\phi'\>_{E} \\
= & \ \< \cD_{ \frac{1}{2} \D{\<\psi,\psi\>_{E}} } \phi, \phi' \>_{E}. 
\end{split}
\end{equation}
Here we have used the metric compatibility (\ref{eq_conncomp}). Summing up all three expressions, one obtains
\begin{equation}
R(\phi',\phi,\psi,\psi) = \frac{1}{2}\< \cD_{ [\psi,\psi]_{E} - \frac{1}{2} \D{\<\psi,\psi\>_{E}}} \phi, \phi'\>_{E} = 0,
\end{equation}
as there holds the Courant algebroid axiom in the form (\ref{eq_couraxjinak}). This proves (\ref{eq_Rsym1}). The symmetry in (\ref{eq_Rsym2}) now in fact follows from (\ref{eq_Rsym1}) combined with (\ref{eq_Rsym3}). The rest of the assertions easily follows  and the proposition is now proved. 
\end{proof}
\begin{rem}
Note that opposed to the definition of $T$, one needs (\ref{eq_rhoishom}) to hold in order to obtain a tensorial quantity $R$. The metric compatibility (\ref{eq_conncomp}) is only required in order to have the symmetries (\ref{eq_Rsym1}, \ref{eq_Rsym2}) and their consequence (\ref{eq_Rsym4}). Note that the symmetry (\ref{eq_Rsym3}) is new compared to the Riemannian geometry, and it implies (\ref{eq_Rsym4}) which for ordinary Riemann tensor holds only if $\cD$ is torsion-free. 
\end{rem}

Symmetries of the Riemann tensor are important for an unambiguous definition of the Ricci tensor, the contraction of $R$ in two indices. Moreover, note that one always has  a fiber-wise metric $\<\cdot,\cdot\>_{E}$ at disposal to raise  indices. 
\begin{definice}
Let $(E,\rho,\<\cdot,\cdot\>_{E},[\cdot,\cdot]_{E})$ be a Courant algebroid, and let $\cD$ be a Courant algebroid connection. Then the \textbf{Ricci curvature tensor} $\Ric \in \T_{2}^{0}(E)$ is defined as 
\begin{equation}
\Ric(\psi,\psi') = \< \psi^{\lambda}, R(\psi_{\lambda},\psi')\psi \>_{E} \equiv R(\psi^{\lambda}_{E}, \psi, \psi_{\lambda}, \psi'),
\end{equation}
for all $\psi,\psi' \in \Gamma(E)$. It is symmetric in $(\psi,\psi')$ and all other contractions in two indices of $R$ are either zero or proportional to $\Ric$. Moreover, one defines the \textbf{Courant-Ricci scalar} $\RS_{E}$ as
\begin{equation}
\RS_{E} = \Ric(\psi^{\lambda}_{E}, \psi_{\lambda}) \equiv R(\psi^{\mu}_{E}, \psi^{\lambda}_{E}, \psi_{\mu}, \psi_{\lambda}).
\end{equation}
We use the name Courant-Ricci as to indicate that $\<\cdot,\cdot\>_{E}$ is used for calculation of the trace. 
\end{definice}
Although, at the moment, the definitions of $R$ and $T$ lack a clear geometric interpretation, they interplay together in the following analogue of the algebraic Bianchi identity. This was proved  in a similar fashion in \cite{Hohm:2012mf}. Notice that the right-hand side is different from the ordinary manifold case. 

\begin{theorem}[Algebraic Bianchi identity] Let $\cD$ be a Courant algebroid connection. Then its curvature operator $R$ satisfies the identity
\begin{equation} \label{eq_bianchi} \begin{split}
R(\psi,\psi')\psi'' + cyclic(\psi,\psi',\psi'') = & \ \frac{1}{2} \{ (\cD_{\psi}T_{G})(\psi',\psi'',\psi_{\lambda}) \cdot \psi^{\lambda}_{E} - T(\psi,T(\psi',\psi'')) \\
& + cyclic(\psi,\psi',\psi'') - (\cD_{\psi_{\lambda}} T_{G})(\psi,\psi',\psi'') \cdot \psi^{\lambda}_{E} \},
\end{split}
\end{equation}
for all $\psi,\psi',\psi'' \in \Gamma(E)$. In particular, if $T = 0$, one has 
\begin{equation}
R(\psi,\psi')\psi'' + cyclic(\psi,\psi',\psi'') = 0.
\end{equation}
\end{theorem}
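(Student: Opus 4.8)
The plan is to reduce the identity to a purely algebraic manipulation of the defining formula (\ref{eq_Rtensor}) for $R$, exploiting the fact that the ``Hohm--Zwiebach correction terms'' are symmetric in the relevant pairs and thus drop out of the cyclic sum in a controlled way. Concretely, write $R(\psi,\psi')\psi'' = \tfrac12\{ R^{(0)}(\psi,\psi')\psi'' + \widetilde{R}^{(0)}(\psi,\psi')\psi'' + C(\psi,\psi')\psi''\}$, where $\widetilde{R}^{(0)}$ is the ``flipped'' copy obtained from the second $R^{(0)}$ term in (\ref{eq_Rtensor}) and $C$ is the frame-dependent third term. The first step is to compute $R^{(0)}(\psi,\psi')\psi'' + cyclic$. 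Since $R^{(0)}$ is the naive curvature operator $\cD_\psi\cD_{\psi'} - \cD_{\psi'}\cD_\psi - \cD_{[\psi,\psi']_E}$, its cyclic sum is handled exactly as in the classical first Bianchi identity: the double-covariant-derivative terms recombine, using the definition of the torsion operator $T$ to replace $\cD_\psi\psi' - \cD_{\psi'}\psi$ by $[\psi,\psi']_E + T(\psi,\psi') - \langle\cD_{\psi_\lambda}\psi,\psi'\rangle_E \psi^\lambda_E$, and then the Leibniz identity (\ref{eq_leibnizidentity}) for $[\cdot,\cdot]_E$ kills the bracket-of-bracket terms. What survives is a combination of $\cD_\psi T(\psi',\psi'')$, $T(\psi,T(\psi',\psi''))$, $T([\psi,\psi']_E,\psi'')$ and the frame-dependent pieces coming from the extra summand in $T$ and from $\cD_{\langle\cdot\rangle\psi^\lambda_E}$ type terms.

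The second step is to massage these surviving terms into the covariant-derivative-of-$T_G$ form appearing on the right-hand side of (\ref{eq_bianchi}). Here I would use metric compatibility (\ref{eq_conncomp}) to move the inner product inside $\cD$, so that e.g. $\langle(\cD_\psi T)(\psi',\psi''),\psi_\lambda\rangle_E \psi^\lambda_E = (\cD_\psi T_G)(\psi',\psi'',\psi_\lambda)\,\psi^\lambda_E$ plus correction terms involving $\cD T$ paired against frame vectors, and the identity $\langle\cD_{\psi_\lambda}\phi,\psi'\rangle_E\,\psi^\lambda_E$-type contractions get reassembled using $\langle\psi_\lambda,\eta\rangle_E\psi^\lambda_E = \eta$. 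The expression $T([\psi,\psi']_E,\psi'')$ should be rewritten using $[\psi,\psi']_E = \cD_\psi\psi' - \cD_{\psi'}\psi - T(\psi,\psi') + \langle\cD_{\psi_\lambda}\psi,\psi'\rangle_E\psi^\lambda_E$ and then $\cif$-linearity of $T$, producing $(\cD_\psi T)$-like terms via the Leibniz rule $\cD_\psi(T(\psi',\psi'')) = (\cD_\psi T)(\psi',\psi'') + T(\cD_\psi\psi',\psi'') + T(\psi',\cD_\psi\psi'')$. The $-(\cD_{\psi_\lambda}T_G)(\psi,\psi',\psi'')\psi^\lambda_E$ term on the right-hand side is exactly what accounts for the asymmetry of the Hohm--Zwiebach curvature relative to the classical one and should emerge from the $\widetilde{R}^{(0)}$ and $C$ contributions, which individually are symmetric in pairs like $(\psi,\psi')\leftrightarrow(\phi,\phi')$ and hence, under the cyclic sum over the ``$\psi$-slot'' inputs, collapse onto a single frame-contracted divergence-type term.

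The main obstacle is bookkeeping: tracking all the frame-dependent terms $\langle\cD_{\psi_\lambda}\cdots\rangle_E\psi^\lambda_E$ through the three ingredients of (\ref{eq_Rtensor}) under the cyclic permutation, and making sure the ones that should cancel actually do while the ones that should survive assemble into $(\cD_{\psi_\lambda}T_G)(\psi,\psi',\psi'')\psi^\lambda_E$ with the correct sign and coefficient $\tfrac12$. A useful organizing principle is to first prove everything ``on the nose'' for the operator/covariant-tensor $R^{(0)}$ alone — i.e. establish $R^{(0)}(\psi,\psi')\psi'' + cyclic = \tfrac12\{(\cD_\psi T_G)(\psi',\psi'',\psi_\lambda)\psi^\lambda_E - T(\psi,T(\psi',\psi'')) + cyclic\} + (\text{correction})$ — and only then add the symmetrizing $\widetilde{R}^{(0)}$ and $C$ terms, observing that their cyclic sums contribute precisely the $-(\cD_{\psi_\lambda}T_G)(\psi,\psi',\psi'')\psi^\lambda_E$ piece and cancel the leftover correction. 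The final specialization $T=0$ is then immediate, since every term on the right-hand side of (\ref{eq_bianchi}) is manifestly at least linear in $T$ or $T_G$.
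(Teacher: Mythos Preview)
Your overall architecture matches the paper's: compute the cyclic sum of $R^{(0)}$ first via $\cD_{\psi}\psi' - \cD_{\psi'}\psi = [\psi,\psi']_{E} + T(\psi,\psi') - \fK(\psi,\psi')$ with $\fK(\psi,\psi') := \<\cD_{\psi_{\lambda}}\psi,\psi'\>_{E}\,\psi^{\lambda}_{E}$, then handle the remaining pieces of (\ref{eq_Rtensor}). However, there is a concrete error in your first step. You write that ``the Leibniz identity (\ref{eq_leibnizidentity}) for $[\cdot,\cdot]_{E}$ kills the bracket-of-bracket terms.'' That is false for a Courant algebroid: the bracket is not skew-symmetric, so the cyclic sum $[\psi,[\psi',\psi'']_{E}]_{E} + cyc$ does \emph{not} vanish. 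Using (\ref{eq_leibnizidentity}) together with (\ref{eq_courantsympart}) and (\ref{eq_courantprops}) one finds instead
\[
\<[\psi,[\psi',\psi'']_{E}]_{E},\phi\>_{E} + cyc = \rho(\phi).\<[\psi,\psi']_{E},\psi''\>_{E} + \rho(\phi).\<\psi',\D\<\psi,\psi''\>_{E}\>_{E},
\]
and these extra terms are essential: they later combine with the $\fK$-contributions to produce precisely $-\rho(\phi).T_{G}(\psi,\psi',\psi'')$, which is what feeds the $(\cD_{\phi}T_{G})$-piece.

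Your plan for the second and third summands of (\ref{eq_Rtensor}) is also too optimistic. You expect $\widetilde{R}^{(0)}$ and $C$ to ``collapse onto a single frame-contracted divergence-type term'' under the cyclic sum. What actually happens in the paper is subtler: the cyclic sum of $R^{(0)}$ throws off bracket terms $\<[\fK(\psi,\psi'),\psi'']_{E},\phi\>_{E} + cyc$, and a separate computation using (\ref{eq_invarianceznovu}), (\ref{eq_courantsympart}) and (\ref{eq_conncomp}) shows that this cyclic sum equals $-R^{(0)}(\psi',\psi,\psi'',\phi) + cyc$ plus lower-order pieces. That is the mechanism by which the ``flipped'' $\widetilde{R}^{(0)}$ is cancelled; it does not survive as an independent divergence contribution. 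So the bookkeeping you flag as the main obstacle is not merely bookkeeping --- it hides the two substantive identities (the non-vanishing cyclic Leibniz sum, and the $[\fK,\cdot]$-to-$R^{(0)}$ conversion) that make the proof work.
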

\begin{proof}
The proof is rather technical but straightforward. In the proof, we will use the abbreviation $cyc$ for $cyclic(\psi,\psi',\psi'')$. 
Let us start with reordering the Leibniz identity (\ref{eq_leibnizidentity}). One uses (\ref{eq_courantsympart}) and (\ref{eq_courantprops}) to find 
\begin{equation} 
[\psi,[\psi',\psi'']_{E}]_{E} + cyc = \D\< [\psi,\psi']_{E},\psi''\>_{E} + \D\<\psi', \D{\<\psi,\psi''\>_{E}}\>_{E}.
\end{equation}
Contracting this with $\phi \in \Gamma(E)$ gives 
\begin{equation} \label{eq_leibnizreordered}
\< [\psi,[\psi',\psi'']_{E}]_{E}, \phi\>_{E} + cyc = \rho(\phi).\<[\psi,\psi']_{E}, \psi''\>_{E} + \rho(\phi).\<\psi', \D{\<\psi,\psi''\>_{E}} \>_{E}.
\end{equation}
We will now derive the Bianchi identity for the map $R^{(0)}$. This is analogous to the usual proof, except that we now have to use $\cD_{\psi}\psi' - \cD_{\psi'}\psi = [\psi,\psi']_{E} + T(\psi,\psi') - \fK(\psi,\psi')$, where $\fK(\psi,\psi') = \< \cD_{\psi_{\lambda}}\psi,\psi'\>_{E} \cdot \psi^{\lambda}_{E}$. One obtains
\begin{equation}
\begin{split}
R^{(0)}(\phi,\psi'',\psi,\psi') + cyc = & \ \< \cD_{\psi}( \cD_{\psi'}\psi'' - \cD_{\psi''}\psi'), \phi\>_{E} - \< \cD_{[\psi,\psi']_{E}}\psi'', \phi\>_{E} + cyc \\
= & \ \< \cD_{\psi}( [\psi',\psi'']_{E} + T(\psi',\psi'') - \fK(\psi',\psi'')), \phi\>_{E} + cyc \\
  & - \< \cD_{[\psi,\psi']_{E}}, \psi'', \phi\>_{E} + cyc \\
= & \ \< \cD_{\psi}(T(\psi',\psi'')) + T(\psi,[\psi',\psi'']_{E}),\phi\>_{E} + cyc \\
& - \< \cD_{\psi} (\fK(\psi',\psi'')), \phi\>_{E} - \< \fK(\psi,[\psi',\psi'']_{E}), \phi\>_{E} + cyc \\
& + \<[\psi,[\psi',\psi'']_{E}]_{E}, \phi\>_{E} + cyc. 
\end{split}
\end{equation}
Using (\ref{eq_leibnizreordered}) and the definition of $\fK$, we can rewrite the above result as
\begin{equation}
\begin{split}
R^{(0)}(\phi,\psi'',\psi,\psi') + cyc = & \ \< \fT(\psi,\psi',\psi''), \phi \>_{E} \\
& - \< \cD_{\psi}( \fK(\psi',\psi'')), \phi \>_{E} - \< \cD_{\phi}\psi, [\psi',\psi'']_{E} \>_{E} + cyc \\
& + \rho(\phi).\<[\psi,\psi']_{E},\psi''\>_{E} + \rho(\phi).\<\psi',\D\<\psi,\psi''\>_{E}\>_{E},
\end{split}
\end{equation}
where by $\fT$ we have denoted the following expression 
\begin{equation}
\fT(\psi,\psi',\psi'') = \< \cD_{\psi}( T(\psi',\psi'')) + T(\psi,[\psi',\psi'']_{E}), \phi\>_{E} + cyc. 
\end{equation}
For the ordinary manifold case, this is exactly the right-hand side of the Bianchi identity. However, for a Courant algebroid connection, $\fT$ is not a tensor. Now, note that
\begin{equation}
\< \cD_{\psi_{\lambda}}\psi,\psi'\>_{E} \cdot \< \cD_{\psi^{\lambda}_{E}}\psi'',\phi\>_{E} + cyc = \< \cD_{\fK(\psi',\psi'')}\psi, \phi\>_{E} + cyc,
\end{equation}
and we will thus combine this term with the Bianchi identity for $R^{(0)}$ above in order to find
\[
\begin{split}
R^{(0)}(\phi,\psi'',\psi,\psi') + \<\cD_{\fK(\psi',\psi'')}\psi,\phi\>_{E} + cyc = & \ \< \fT'(\psi,\psi',\psi''), \phi \>_{E} \\
& +  \ \< [\fK(\psi,\psi'), \psi'']_{E}, \phi \>_{E} + cyc \\
& - \< \cD_{\phi}\psi, [\psi',\psi'']_{E}\>_{E} + cyc \\
& - \< \cD_{\phi}( \fK(\psi',\psi'')), \psi \>_{E} + cyc \\
& + \rho(\phi).\<[\psi,\psi']_{E},\psi''\>_{E} + \rho(\phi).\<\psi',\D\<\psi,\psi''\>_{E}\>_{E},
\end{split}
\]
where $\fT'$ is a covariant tensor on $E$ defined by 
\begin{equation}
\fT'(\psi,\psi',\psi'') = \< \cD_{\psi}(T(\psi',\psi'')) + T(\psi, [\psi',\psi'']_{E} - \fK(\psi',\psi'')), \phi\>_{E} + cyc. 
\end{equation}
Now, one has to deal with the remaining terms. Recall that $\< \fK(\psi,\psi'), \phi \>_{E} = \<\cD_{\phi}\psi, \psi'\>_{E}$. The most complicated is the second one. Using (\ref{eq_invarianceznovu}), (\ref{eq_courantsympart}) and the metric compatibility (\ref{eq_conncomp}), one finds 
\[
\begin{split}
\< \fK(\psi,\psi'), \psi'']_{E}, \phi\>_{E} = & \ \rho(\phi).\< \fK(\psi,\psi'), \psi''\>_{E} - \< [\psi'', \fK(\psi,\psi')]_{E}, \phi \>_{E} \\
= & \ \rho(\phi).\< \fK(\psi,\psi'), \psi''\>_{E} - \rho(\psi''). \< \fK(\psi,\psi'), \phi \>_{E} + \< \fK(\psi,\psi'), [\psi'',\phi]_{E} \>_{E} \\
= & \ -R^{(0)}(\psi',\psi,\psi'',\phi) + \< \cD_{\psi''}\psi, \cD_{\phi}\psi'\>_{E} - \<\cD_{\phi}\psi, \cD_{\psi''}\psi'\>_{E}.
\end{split}
\]
Summing both sides over the cyclic permutations, one finds 
\begin{equation}
\begin{split}
\< [\fK(\psi,\psi'),\psi'']_{E},\phi\>_{E} + cyc = & \ -R^{(0)}(\psi',\psi,\psi'',\phi) + \< \cD_{\psi}\psi' - \cD_{\psi'}\psi, \cD_{\phi}\psi'' \>_{E} + cyc \\
= & \ - R^{(0)}(\psi',\psi,\psi'',\phi) + \< T(\psi,\psi') + [\psi,\psi']_{E}, \cD_{\phi}\psi''\>_{E} + cyc \\
& - \< \fK(\psi,\psi'), \cD_{\phi}\psi''\>_{E} + cyc. 
\end{split}
\end{equation}
This means that this term cancels the second copy of $R^{(0)}$. We can now plug into the definition:
\begin{equation}
\begin{split}
R(\phi,\psi'',\psi,\psi') + cyc = & \ \frac{1}{2} \big\{ \< \fT'(\psi,\psi',\psi''), \phi \>_{E} \\
& - \< \cD_{\phi}( \fK(\psi,\psi')), \psi''\>_{E} - \< \fK(\psi,\psi'), \cD_{\phi}\psi''\>_{E} + cyc \\
& + \rho(\phi).\<[\psi,\psi']_{E},\psi''\>_{E} + \rho(\phi).\<\psi', \D{\<\psi,\psi''\>_{E}}\>_{E} \\
& + \< T(\psi,\psi'), \cD_{\phi}\psi''\>_{E} + cyc \big\}. 
\end{split}
\end{equation}
The terms on the second line can be rewritten using the metric compatibility (\ref{eq_conncomp}) and combined with the two terms on the third line to find 
\begin{equation}
- \rho(\phi). \<\cD_{\psi}\psi' - \cD_{\psi'}\psi - [\psi,\psi']_{E} + \fK(\psi,\psi'), \psi''\>_{E} = - \rho(\phi).T_{G}(\psi,\psi',\psi''). 
\end{equation}
Finally, this combines with the very last line to give the tensorial expression
\begin{equation}
R(\phi,\psi'',\psi,\psi') + cyc = \frac{1}{2} \{ \< \fT'(\psi,\psi',\psi''), \phi\>_{E} - (\cD_{\phi}T_{G})(\psi,\psi',\psi'') \}. 
\end{equation}
To finish the proof, it remains to prove that the first term $\fT'$ can be rewritten to match (\ref{eq_bianchi}). This is straightforward, as $[\psi,\psi']_{E} - \fK(\psi,\psi') = - T(\psi,\psi') + \cD_{\psi}\psi' - \cD_{\psi'}\psi$, and thus 
\begin{equation}
\begin{split}
\< \fT'(\psi,\psi',\psi''),\phi\>_{E} = & \ \< \cD_{\psi}(T(\psi',\psi'')), \phi \>_{E} + T_{G}(\psi, \cD_{\psi'}\psi'' - \cD_{\psi''}\psi' - T(\psi',\psi''), \phi) + cyc \\
= & \ (\cD_{\psi}T_{G})(\psi',\psi'',\phi) - T_{G}(\psi,T(\psi',\psi''),\phi) + cyc.
\end{split}
\end{equation}
Plugging into the above formula, we conclude that 
\begin{equation}
\begin{split}
R(\phi,\psi'',\psi,\psi') + cyclic(\psi,\psi',\psi'') = & \ \frac{1}{2} \{ (\cD_{\psi}T_{G})(\psi',\psi'',\phi) - T_{G}(\psi,T(\psi',\psi''),\phi) \\
& + cyclic(\psi,\psi',\psi'') - (\cD_{\phi}T_{G})(\psi,\psi',\psi'') \},
\end{split}
\end{equation}
which is exactly the algebraic Bianchi identity (\ref{eq_bianchi}). 
\end{proof}
\begin{rem}
We have included this painful proof mainly in order to demonstrate how non-trivially the axioms of Courant algebroid and of the metric compatibility (\ref{eq_conncomp}) interplay to give the result fully expressible in terms of the torsion $3$-form $T_{G}$ and its covariant derivatives. In particular, note that unlike anywhere before, one uses the Leibniz identity (\ref{eq_leibnizidentity}) in the proof. 
\end{rem}
To conclude this section, we point out one very important property of the tensors $T$ and $R$, namely how they transform under Courant algebroid isomorphisms. The answer supports our arguments why it is a good idea to take the definitions (\ref{eq_torsionop}) and (\ref{eq_Rtensor}) seriously. 
\begin{tvrz} \label{tvrz_RGcovariance}
Let $(E,\rho,\<\cdot,\cdot\>_{E},[\cdot,\cdot]_{E})$ and $(E',\rho',\<\cdot,\cdot\>_{E'}, [\cdot,\cdot]_{E'})$ be two isomorphic Courant algebroids, and let $\F \in \Hom(E,E')$ be the isomorphism. Assume that $\cD$ and $\cD'$ are two connections related as in (\ref{eq_isoconn}). Let $(T_{G},R)$ correspond to $\cD$ and let $(T'_{G},R')$ correspond to $\cD'$. 

Then $T_{G} = \F^{\ast} T'_{G}$ and $R = \F^{\ast} R'$. Consequently, $\Ric = \F^{\ast} \Ric'$ and $\RS_{E} = \RS_{E'}$. 
\end{tvrz}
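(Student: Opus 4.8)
The plan is to prove the covariance statement by reducing everything to the definitions and then invoking the fact that $\F$ intertwines every piece of structure that enters those definitions. Since $\F$ is a Courant algebroid isomorphism, we have $\rho = \rho' \circ \F$, $\<\psi,\psi'\>_{E} = \<\F\psi,\F\psi'\>_{E'}$, $\F([\psi,\psi']_{E}) = [\F\psi,\F\psi']_{E'}$, and by construction $\F(\cD_{\psi}\psi') = \cD'_{\F\psi}\F\psi'$. The only slightly less obvious ingredient is the behaviour of the frame contractions: if $\{\psi_{\lambda}\}$ is a local frame on $E$ with $g_{E}$-dual frame $\{\psi^{\lambda}_{E}\}$ (characterised by $\<\psi_{\lambda},\psi^{\mu}_{E}\>_{E} = \delta_{\lambda}^{\mu}$), then $\{\F\psi_{\lambda}\}$ is a local frame on $E'$ whose $g_{E'}$-dual is $\{\F\psi^{\lambda}_{E}\}$, precisely because $\F$ is a fibre-wise isometry. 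I would record this as the first lemma-level observation, since it is what makes the trace terms in $T$ and $R$ transform correctly.

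First I would treat the torsion. Writing out $T'_{G}(\F\psi,\F\psi',\F\psi'')$ using the formula for $T_G$ on $E'$, I replace $\cD'_{\F\psi}\F\psi'$ by $\F(\cD_{\psi}\psi')$, the bracket $[\F\psi,\F\psi']_{E'}$ by $\F([\psi,\psi']_E)$, the pairing $\<\cdot,\cdot\>_{E'}$ by $\<\cdot,\cdot\>_{E}$ (absorbing the $\F$'s), and the frame $\{\F\psi_\lambda\}$, $\{\F\psi^\lambda_E\}$ by the image frames just discussed; every term matches $T_G(\psi,\psi',\psi'')$ term by term, giving $T_{G} = \F^{\ast}T'_{G}$. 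Equivalently $\F(T(\psi,\psi')) = T'(\F\psi,\F\psi')$ for the operator version. The identical bookkeeping applied to $R^{(0)}$ first — note $\F(\cD_{\psi}\cD_{\psi'}\phi) = \cD'_{\F\psi}\cD'_{\F\psi'}\F\phi$ by iterating (\ref{eq_isoconn}), and $\cD'_{[\F\psi,\F\psi']_{E'}}\F\phi = \F(\cD_{[\psi,\psi']_E}\phi)$ — shows $R^{(0)}(\phi',\phi,\psi,\psi') = R^{(0)\prime}(\F\phi',\F\phi,\F\psi,\F\psi')$, and then the definition (\ref{eq_Rtensor}) of $R$ as a combination of two copies of $R^{(0)}$ plus a frame-contraction term (whose invariance again follows from the dual-frame observation and the metric compatibility) yields $R = \F^{\ast}R'$.

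Finally, $\Ric$ is a contraction of $R$ using the frame $\{\psi_\lambda\}$ and its $g_E$-dual, so $\Ric = \F^{\ast}\Ric'$ follows from $R = \F^{\ast}R'$ together with the dual-frame compatibility; and $\RS_{E}$ is a further $\<\cdot,\cdot\>_E$-trace of $\Ric$, which is $\F$-invariant by the same argument, so $\RS_E = \RS_{E'}$. (One can also note this is consistent with Lemma \ref{lem_XcDisavf} and the remarks around $V_{\cD}$: those invariant quantities are special cases of the same principle.)

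I do not expect a genuine obstacle here — the proof is essentially "apply the isomorphism to the definition" — but the one place that requires a moment of care, and which I would state explicitly rather than leave to the reader, is the transformation of the dual frame: $g_{E'} \circ \F = (\F^{-1})^{T} \circ g_E$ is what gives $g_{E'}^{-1}(\F^{-1})^{T}(\psi^\lambda) = \F(g_E^{-1}(\psi^\lambda)) = \F(\psi^\lambda_E)$, i.e. the $g_{E'}$-dual of $\{\F\psi_\lambda\}$ is $\{\F\psi^\lambda_E\}$. Since all the non-tensorial correction terms in $T$ and $R$ are built from exactly such frame contractions, getting this right is the whole content of the proof; the rest is substitution.
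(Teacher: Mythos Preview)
Your proposal is correct and follows essentially the same approach as the paper: a direct term-by-term verification from the definitions using (\ref{eq_iso}) and (\ref{eq_isoconn}). The paper's proof only writes out the computation for $T_{G}$ and declares the case of $R$ similar; your explicit remark that the $g_{E'}$-dual of the frame $\{\F\psi_{\lambda}\}$ is $\{\F\psi^{\lambda}_{E}\}$ is the one point the paper leaves implicit, and it is exactly what is needed to handle the frame-contraction terms in both $T$ and $R$.
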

\begin{proof}
This is a straightforward calculation using (\ref{eq_iso}) and (\ref{eq_isoconn}). For example, one has 
\begin{equation}
\begin{split}
T_{G}(\psi,\psi',\psi'') = & \ \< \cD_{\psi}\psi' - \cD_{\psi'}\psi - [\psi,\psi']_{E}, \psi''\>_{E} + \< \cD_{\psi''}\psi,\psi'\>_{E} \\
= & \ \< \F\{ \cD_{\psi}\psi' - \cD_{\psi'}\psi -[\psi,\psi']_{E} \}, \F(\psi'')\>_{E'} + \< \F(\cD_{\psi''}\psi), \F(\psi')\>_{E'} \\
= & \ \< \cD'_{\F(\psi)} \F(\psi') - \cD'_{\F(\psi')} \F(\psi) - [\F(\psi),\F(\psi')]_{E'}, \F(\psi'') \>_{E'} \\
& + \< \cD'_{\F(\psi'')} \F(\psi), \F(\psi') \>_{E'} \\
= & \ T'_{G}( \F(\psi),\F(\psi'),\F(\psi'')),
\end{split}
\end{equation}
for all $\psi,\psi',\psi'' \in \Gamma(E)$. The proof for $R$ is similar.
\end{proof}
\section{Levi-Civita connections} \label{sec_LCconnections}
Assume that $(E,\rho,\<\cdot,\cdot\>_{E},[\cdot,\cdot]_{E})$ is a Courant algebroid, equipped with a generalized Riemannian metric $\gm$. It is thus natural to consider Courant algebroid connections which are compatible with the fiber-wise metric $\gm$. First, we can reinterpret this requirement in terms of other structures induced by a generalized metric. 
\begin{lemma}
Let $\cD$ be a Courant algebroid connection. 

Then the following statements are equivalent:
\begin{enumerate}
\item $\cD$ is compatible with $\gm$, that is $\cD \gm = 0$. 
\item $\cD$ commutes with the map $\tau$, that is $\cD_{\psi}(\tau(\psi')) = \tau( \cD_{\psi}\psi')$ for all $\psi,\psi' \in \Gamma(E)$.
\item $\cD$ preserves the subbundle $V_{+}$, that is $\cD_{\psi}(V_{+}) \subseteq V_{+}$ for all $\psi \in \Gamma(E)$.
\end{enumerate}
\end{lemma}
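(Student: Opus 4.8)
The plan is to prove the chain of equivalences (1) $\Leftrightarrow$ (2) $\Leftrightarrow$ (3), using the three equivalent descriptions of a generalized metric (the fiber-wise metric $\gm$, the involution $\tau$, the subbundle $V_+$) already established in Section~\ref{sec_genmetric}, together with the identity $\gm(\psi,\psi') = \<\psi,\tau(\psi')\>_E$ and the metric compatibility $\cD \<\cdot,\cdot\>_E = 0$.

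For (1) $\Leftrightarrow$ (2): recall that $\cD$ acts on the endomorphism $\tau \in \Gamma(\End(E))$ and on the tensor $\gm \in \T^0_2(E)$ by the usual tensorial formulas. First I would compute
\begin{equation}
(\cD_\psi \gm)(\psi',\psi'') = \rho(\psi).\gm(\psi',\psi'') - \gm(\cD_\psi\psi',\psi'') - \gm(\psi',\cD_\psi\psi'').
\end{equation}
Substituting $\gm(\cdot,\cdot) = \<\cdot,\tau(\cdot)\>_E$ and using the metric compatibility (\ref{eq_conncomp}) for $\<\cdot,\cdot\>_E$ applied to the triple $(\psi,\psi',\tau(\psi''))$, the terms $\rho(\psi).\<\psi',\tau(\psi'')\>_E$ and $\<\cD_\psi\psi',\tau(\psi'')\>_E$ cancel, leaving
\begin{equation}
(\cD_\psi\gm)(\psi',\psi'') = \<\psi', \cD_\psi(\tau(\psi'')) - \tau(\cD_\psi\psi'')\>_E = \<\psi', (\cD_\psi\tau)(\psi'')\>_E.
\end{equation}
Since $\<\cdot,\cdot\>_E$ is non-degenerate, $\cD_\psi\gm = 0$ for all $\psi$ is equivalent to $(\cD_\psi\tau)(\psi'') = 0$ for all $\psi,\psi''$, which is statement (2).

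For (2) $\Leftrightarrow$ (3): if $\cD$ commutes with $\tau$, then for $\psi' \in \Gamma(V_+) = \ker(\tau - 1)$ one has $\tau(\cD_\psi\psi') = \cD_\psi(\tau(\psi')) = \cD_\psi\psi'$, so $\cD_\psi\psi' \in \Gamma(V_+)$, giving (3). Conversely, if $\cD$ preserves $V_+$, I would note that compatibility with $\<\cdot,\cdot\>_E$ forces $\cD$ to preserve the orthogonal complement $V_- = V_+^\perp$ as well: for $\psi_- \in \Gamma(V_-)$ and any $\psi_+ \in \Gamma(V_+)$, $\<\cD_\psi\psi_-,\psi_+\>_E = \rho(\psi).\<\psi_-,\psi_+\>_E - \<\psi_-,\cD_\psi\psi_+\>_E = 0$ since both pairings vanish, so $\cD_\psi\psi_- \in \Gamma(V_-)$. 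Then writing an arbitrary section as $\psi' = \psi'_+ + \psi'_-$ and using $\tau = \mathrm{id}$ on $V_+$, $\tau = -\mathrm{id}$ on $V_-$, a short computation shows $\cD_\psi(\tau(\psi')) = \cD_\psi\psi'_+ - \cD_\psi\psi'_- = \tau(\cD_\psi\psi'_+ + \cD_\psi\psi'_-) = \tau(\cD_\psi\psi')$, which is (2). I do not expect any serious obstacle here; the only point requiring a little care is the cancellation in the (1) $\Leftrightarrow$ (2) step, where one must correctly invoke metric compatibility with $\tau(\psi'')$ rather than $\psi''$ in the third slot, and the observation in the converse of (2) $\Leftrightarrow$ (3) that preserving $V_+$ automatically yields preservation of $V_-$ thanks to orthogonality.
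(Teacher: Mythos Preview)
Your proof is correct and uses essentially the same ingredients as the paper's: metric compatibility of $\cD$ with $\<\cdot,\cdot\>_E$, the eigenbundle description of $\tau$, and the fact that $V_- = V_+^\perp$ forces $\cD$ to preserve $V_-$ once it preserves $V_+$. The only organizational difference is that the paper argues via the cycle $1 \Rightarrow 2 \Rightarrow 3 \Rightarrow 1$ (using the factorization $\tau = g_E^{-1}\gm$ for the first step and the block-diagonal form $\gm = \mathrm{BlockDiag}(g_E^+, -g_E^-)$ for the last), whereas you prove the two bi-implications directly; both routes are equally short.
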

\begin{proof}
First, prove $1. \Rightarrow 2.$, so assume that $\cD \gm = 0$. This and (\ref{eq_conncomp}) imply that for every $\psi \in \Gamma(E)$,  $\cD_{\psi}$ commutes with both isomorphisms $g_{E} \in \Hom(E,E^{\ast})$ and $\gm \in \Hom(E,E^{\ast})$, induced by the fiber-wise metrics. As $\tau = g_{E}^{-1} \gm$, this proves the assertion. The implication $2. \Rightarrow 3.$ is trivial, as $V_{+}$ is the $+1$ eigenbundle of $\tau$. To show $3. \Rightarrow 1.$, one first proves that $\cD_{\psi}$ preserves also $V_{-}$. This follows from (\ref{eq_conncomp}) and the fact that $V_{-} = V_{+}^{\perp}$. 
This proves that for each $\psi \in \Gamma(E)$, $\cD_{\psi}$ is block-diagonal with respect to the decomposition $E = V_{+} \oplus V_{-}$. The same holds for $g_{E}$ and $\gm$, namely, we have the following formal block forms of the involved objects:
\begin{equation} \label{eq_blocksconnmetrics}
\cD_{\psi} = \bm{\cD_{\psi}^{+}}{0}{0}{\cD_{\psi}^{-}}, \; \; g_{E} = \bm{g_{E}^{+}}{0}{0}{g_{E}^{-}}, \; \; \gm = \bm{g_{E}^{+}}{0}{0}{-g^{-}_{E}},
\end{equation}
where by $\cD^{\pm}_{\psi}$ and $g_{E}^{\pm}$ we denote the induced objects on $V_{\pm}$. Note that by construction of generalized metric, $g_{E}^{+} > 0$ and $g_{E}^{-} < 0$. (\ref{eq_conncomp}) is then equivalent to $\cD^{\pm}_{\psi}(g^{\pm}_{E}) = 0$. This in turn implies $\cD_{\psi}\gm = 0$, as the only difference is the sign in front of $g_{E}^{-}$. 
\end{proof}

This proof also answers the question about existence of a Courant algebroid connection compatible with the generalized metric $\gm$. One simply has to find a pair of vector bundle connections $\cD'^{\pm}$ on $V_{\pm}$ compatible with $g_{E}^{\pm}$. This is always possible and we can, using (\ref{eq_blocksconnmetrics}), construct a vector bundle connection $\cD'$ on $E = V_{+} \oplus V_{-}$ compatible with both $g_{E}$ and $\gm$. Then set $\cD_{\psi} = \cD'_{\rho(\psi)}$ to obtain an example of a Courant algebroid connection compatible with $\gm$. 

As we have introduced the operator of torsion, it is natural to consider connections which are torsion-free. Note that, for the consistency of this condition, it is important that $T$ is indeed a tensor. To follow the terminology in Riemannian geometry, we refer to such connections as Levi-Civita connections.
\begin{definice}
Let $(E,\rho,\<\cdot,\cdot\>_{E},[\cdot,\cdot]_{E})$ be a Courant algebroid, let $\gm$ be a generalized metric on $E$, and let $\cD$ be a Courant algebroid connection. 

We say that $\cD$ is a \textbf{Levi-Civita connection} on $E$ with respect to the generalized metric $\gm$, if $\cD \gm = 0$ and $\cD$ is a torsion-free connection, that is $T_{G} = 0$. 
\end{definice}
The question of existence of Levi-Civita connections is more intriguing. We do not have a resolute answer to this question. First, observe that an analogue of the usual formula for the Levi-Civita connection does not even define a connection. One can try to follow the derivation of the ordinary Levi-Civita connection. The crucial point of the derivation is to rewrite the combination $\cD_{\psi}\psi' - \cD_{\psi'}\psi$ using the torsion and the bracket, which eventually leads to the concept of a contortion tensor. This does not work here because the operator $T$ defined by (\ref{eq_torsionop}) is more complicated. As we will now demonstrate, this is rather a conceptual problem - in general, there can be infinitely many Levi-Civita connections. This is shown in the following lemma:
\begin{lem} \label{lem_LCconnections}
Let $(E,\rho,\<\cdot,\cdot\>_{E},[\cdot,\cdot]_{E})$ be a Courant algebroid equipped with a generalized  metric $\gm$. Let $\LC(E,\gm)$ denote the set of all Levi-Civita connections $\cD$ on $E$ with respect to $\gm$. Assume that $\LC(E,\gm) \neq \emptyset$. Then $\LC(E,\gm)$ is an affine space, where its associated vector space is a $\cif$-module of sections $\Gamma(\LC_{0}(E,\gm))$ of a vector bundle $\LC_{0}(E,\gm)$ of rank
\begin{equation} \label{eq_LCEGrank}
\rank(\LC_{0}(E,\gm)) = \frac{1}{3} p(p^{2}-1) + \frac{1}{3}q(q^{2}-1),
\end{equation}
where $(p,q)$ is the signature of the fiber-wise metric $\<\cdot,\cdot\>_{E}$. 
\end{lem}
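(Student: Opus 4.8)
The plan is to show that the difference of two Levi-Civita connections is a section of a vector bundle with the stated rank, and conversely that adding such a section to a Levi-Civita connection yields another one. Let $\cD, \cD' \in \LC(E,\gm)$ and set $\A(\psi,\psi') = \cD'_{\psi}\psi' - \cD_{\psi}\psi'$. The first step is routine: from the two Leibniz rules \eqref{eq_connleib} the quantity $\A$ is $\cif$-linear in both arguments, hence $\A \in \T^{1}_{2}(E)$; equivalently, using $g_{E}$, we get a covariant tensor $\A_{G}(\psi,\psi',\psi'') := \<\A(\psi,\psi')\psi', \psi''\>_{E}$ — wait, more precisely $\A_{G}(\psi,\psi',\psi'') := \<\A(\psi,\psi'),\psi''\>_{E} \in \T^{0}_{3}(E)$. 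I would then extract the linear constraints imposed on $\A_{G}$ by the two defining conditions. Metric compatibility $\cD\gm = 0$ and $\cD g_{E} = 0$ for both connections forces $\A_{G}$ to be skew-symmetric in the last two slots and, by the block-diagonal description \eqref{eq_blocksconnmetrics}, to be ``block diagonal'': writing any argument as $\psi_{+} + \psi_{-}$, the tensor $\A_{G}(\psi, \cdot, \cdot)$ vanishes unless its last two arguments lie in the same subbundle $V_{+}$ or $V_{-}$. So $\A_{G}$ splits as $\A_{G}^{+} \oplus \A_{G}^{-}$ where $\A_{G}^{\pm} \in \Gamma(E^{\ast} \otimes \Lambda^{2} V_{\pm}^{\ast})$.

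The second step is to impose torsion-freeness. Since both $\cD$ and $\cD'$ have $T_{G} = 0$, and the bracket terms cancel in the difference, the torsion $3$-form of the difference tensor must vanish: the total antisymmetrization of $\A_{G}$ in all three arguments is zero. Here I must be careful that the ``extra'' term $\<\cD_{\psi''}\psi,\psi'\>_{E}$ in $T_{G}$ contributes, so the precise statement is that the completely skew-symmetric part of $\A_{G}$ (with a suitable combinatorial factor coming from that term) vanishes; concretely $\A_{G}(\psi,\psi',\psi'') - \A_{G}(\psi',\psi,\psi'') + \A_{G}(\psi'',\psi,\psi') = 0$, which says the cyclic sum of $\A_{G}$ vanishes. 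I would combine this with the first-slot-versus-last-two-slots structure from step one. On each factor $V_{\pm}$ separately we are left with a tensor $S \in \Gamma(W^{\ast} \otimes \Lambda^{2} W^{\ast})$, $W = V_{\pm}$, satisfying the cyclic identity; the claim is that the space of such tensors has rank $\tfrac13 \dim W (\dim W^2 - 1)$.

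The third step is the representation-theoretic count. The cyclic-plus-(skew in last two) conditions on $S \in W^{\ast} \otimes \Lambda^{2} W^{\ast}$ cut out exactly the kernel of the total antisymmetrization map $W^{\ast}\otimes \Lambda^2 W^{\ast} \to \Lambda^3 W^{\ast}$ intersected appropriately; decomposing $W^{\ast} \otimes \Lambda^{2} W^{\ast}$ into $\GL(W)$-irreducibles (the ``hook'' Young diagram piece plus $\Lambda^{3}W^{\ast}$), one finds the relevant piece has dimension $\dim W \cdot \binom{\dim W}{2} - \binom{\dim W}{3} = \tfrac13 \dim W(\dim W^{2}-1)$. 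Doing this for $W = V_{+}$ (dimension $p$) and $W = V_{-}$ (dimension $q$) and adding gives \eqref{eq_LCEGrank}. Finally I would check the converse: given $\cD \in \LC(E,\gm)$ and any section $\A_{G}$ of this bundle $\LC_{0}(E,\gm)$, raising an index and setting $\cD' = \cD + \A$ produces a genuine Courant algebroid connection (Leibniz rules are automatic since $\A$ is tensorial) which is still $\gm$-compatible and torsion-free by the same computations run backwards — this shows $\LC(E,\gm)$ is a full affine space over $\Gamma(\LC_{0}(E,\gm))$, not merely a subset.

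I expect the main obstacle to be bookkeeping: correctly accounting for the anomalous term $\<\cD_{\psi''}\psi,\psi'\>_{E}$ in the torsion $3$-form when translating $T_{G}=0$ into a linear condition on $\A_{G}$, and then making sure the two conditions (block structure from metric compatibility, cyclic identity from torsion-freeness) combine to give precisely the hook-diagram representation and no more. The dimension count itself, once the representation is correctly identified, is standard. One subtlety worth flagging: strictly the ``block diagonal'' statement needs $\cD_{\psi}$ to preserve $V_{\pm}$, which was established in the preceding lemma, so I would invoke that explicitly rather than re-derive it.
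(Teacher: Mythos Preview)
Your proposal is correct and follows essentially the same route as the paper: define the difference tensor $\K$, use compatibility with $g_{E}$ and with $\gm$ to obtain skew-symmetry in the last two slots together with the block-diagonal restriction, impose torsion-freeness to get the cyclic identity (which, combined with the block structure, forces all three slots into the same subbundle $V_{\pm}$), and then count dimensions. The paper phrases the final count as the quotient $(\Omega^{1}(V_{\pm})\otimes\Omega^{2}(V_{\pm}))/\Omega^{3}(V_{\pm})$ rather than invoking the hook Young diagram, and does not explicitly write out the converse direction in the proof, but otherwise the arguments coincide.
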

\begin{proof}
By assumption, there exists at least one $\cD \in \LC(E,\gm)$. Let $\cD' \in \LC(E,\gm)$ be any other Levi-Civita connection. Define a map $\K: \Gamma(E)^{\otimes 3} \rightarrow C^{\infty}(M)$ as 
\begin{equation}
\K(\psi,\psi',\psi'') = \< \cD'_{\psi}\psi' - \cD_{\psi}\psi', \psi''\>_{E}, 
\end{equation}
for all $\psi,\psi',\psi'' \in \Gamma(E)$. The property (\ref{eq_connleib}) of both connections implies that $\K \in \T_{3}^{0}(E)$. Moreover, the compatibility with $\<\cdot,\cdot\>_{E}$ (\ref{eq_conncomp}) of both connections forces $\K \in \Omega^{1}(E) \otimes \Omega^{2}(E)$, that is $\K$ is skew-symmetric in last two inputs. Next, the vanishing torsion $3$-forms for both $\cD$ and $\cD'$ give
\begin{equation} \label{eq_Kcyclic}
\K(\psi,\psi',\psi'') + cyclic(\psi,\psi',\psi'') = 0, 
\end{equation}
for all $\psi,\psi',\psi'' \in \Gamma(E)$. For $K \in \Omega^{1}(E) \otimes \Omega^{2}(E)$, this is in fact equivalent to the requirement $\K_{a} = 0$, where $\K_{a}$ denotes the complete skew-symmetrization of the tensor $\K$. Finally, as $\cD$ and $\cD'$ are both compatible with $\gm$, we find that $\K(\psi,\psi_{+},\psi_{-}) = \K(\psi,\psi_{-},\psi_{+}) = 0$ for all $\psi \in \Gamma(E)$ and $\psi_{\pm} \in \Gamma(V_{\pm})$. Combined with (\ref{eq_Kcyclic}), this shows that $\K$ has to live in the $\cif$-submodule isomorphic to $(\Omega^{1}(V_{+}) \otimes \Omega^{2}(V_{+})) \oplus (\Omega^{1}(V_{-}) \otimes \Omega^{2}(V_{-}))$. In other words, $\K$ has non-trivial values only if all inputs are either from $\Gamma(V_{+})$ or from $\Gamma(V_{-})$. Moreover, we still have to impose (\ref{eq_Kcyclic}). This can be written as 
\begin{equation}
\Gamma(\LC_{0}(E,\gm)) \cong \frac{\Omega^{1}(V_{+}) \otimes \Omega^{2}(V_{+})}{\Omega^{3}(V_{+})} \oplus \frac{\Omega^{1}(V_{-}) \otimes \Omega^{2}(V_{-})}{\Omega^{3}(V_{-})}
\end{equation}
As $\rank(V_{+}) = p$ and $\rank(V_{-}) = q$, the formula (\ref{eq_LCEGrank}) follows. 
\end{proof}
We see that Levi-Civita connections is unique if and only if $p,q \in \{0,1\}$. The proof of the lemma states that given a fixed Levi-Civita connection $\cD \in \LC(E,\gm)$, every other one can be written as 
\begin{equation} \label{eq_LC'asLCandK}
\cD'_{\psi}\psi' = \cD_{\psi}\psi + g_{E}^{-1} \K(\psi,\psi',\cdot),
\end{equation}
for all $\psi,\psi' \in \Gamma(E)$, with $\K$ being a tensor with the above described properties. In the next section, it will be convenient to express the Riemann tensor, as well as the Ricci tensor and the Courant-Ricci scalar of $\cD'$ in terms of $\cD$ and $\K$. Let $\cD'$ and $\cD$ be related by (\ref{eq_LC'asLCandK}). First, by a straightforward calculation, one gets
\begin{equation}
\begin{split}
R'^{(0)}(\phi',\phi,\psi,\psi') = & \ R^{(0)}(\phi',\phi,\psi,\psi') - \K(\fK(\psi,\psi'),\phi,\phi') \\
& + (\cD_{\psi}\K)(\psi',\phi,\phi') - (\cD_{\psi'}\K)(\psi,\phi,\phi') \\
& + \K(\psi, g_{E}^{-1}\K(\psi',\phi,\cdot),\phi') - \K(\psi',g_{E}^{-1}\K(\psi,\phi,\cdot),\phi'). 
\end{split}
\end{equation}
Here we used the torsion-free condition for $\cD$ in order to rewrite the bracket $[\cdot,\cdot]_{E}$ using $\cD$ and the map $\fK(\psi,\psi') = \< \cD_{\psi_{\lambda}}\psi,\psi'\>_{E} \cdot \psi^{\lambda}_{E}$. The relation between the respective third terms in (\ref{eq_Rtensor}) can be recast as
\begin{equation}
\begin{split}
\< \cD'_{\psi_{\lambda}}\psi,\psi'\>_{E} \cdot \< \cD'_{\psi^{\lambda}_{E}}\phi,\phi'\>_{E} = & \ \< \cD_{\psi_{\lambda}}\psi,\psi'\>_{E} \cdot \< \cD_{\psi^{\lambda}_{E}}\phi,\phi'\>_{E} \\
& + \K(\fK(\psi,\psi'),\phi,\phi') + \K(\fK(\phi,\phi'),\psi,\psi') \\
& + \K(g_{E}^{-1}\K(\cdot,\psi,\psi'),\phi,\phi'). 
\end{split}
\end{equation}
The two terms with $\fK$ cancel those coming from the two copies of $R'^{(0)}$ and we find
\begin{equation} \label{eq_R'andR}
\begin{split}
R'(\phi',\phi,\psi,\psi') = & \ R(\phi',\phi,\psi,\psi') + \frac{1}{2}\{(\cD_{\psi}\K)(\psi',\phi,\phi') - (\cD_{\psi'}\K)(\psi,\phi,\phi') \\
& + (\cD_{\phi}\K)(\phi',\psi,\psi') - (\cD_{\phi'}\K)(\phi,\psi,\psi') + \K(g_{E}^{-1}\K(\cdot,\psi,\psi'),\phi,\phi') \\
& + \K(\psi,g_{E}^{-1}\K(\psi',\phi,\cdot),\phi') - \K(\psi',g_{E}^{-1}\K(\psi,\phi,\cdot),\phi') \\
& + \K(\phi,g_{E}^{-1}\K(\phi',\psi,\cdot),\psi') - \K(\phi',g_{E}^{-1}\K(\phi,\psi,\cdot),\psi')\}.
\end{split}
\end{equation}
Observe that this relation is consistent with both sides being tensors on $E$. One can now easily find the relation between the two respective Ricci curvature tensors. Note that one uses the fact that $\cD$ commutes with the contractions using the fiber-wise metric $\<\cdot,\cdot\>_{E}$. The resulting expression is
\begin{equation} \label{eq_Ric'andRic}
\begin{split}
\Ric'(\psi,\psi') = & \Ric(\psi,\psi') + \frac{1}{2}\{(\cD_{\psi_{\lambda}} \K)(\psi,\psi',\psi^{\lambda}_{E}) + (\cD_{\psi_{\lambda}}\K)(\psi',\psi,\psi^{\lambda}_{E}) \\
& + (\cD_{\psi} \K')(\psi') + (\cD_{\psi'} \K')(\psi) - \K'(\psi_{\lambda}) \{ \K(\psi,\psi',\psi^{\lambda}_{E}) + \K(\psi',\psi,\psi^{\lambda}_{E}) \} \\
& + \K(g_{E}^{-1}\K(\cdot,\psi_{\lambda},\psi'),\psi,\psi^{\lambda}_{E}) - \K(\psi',g_{E}^{-1}\K(\psi_{\lambda},\psi,\cdot), \psi^{\lambda}_{E}) \\
& - \K(\psi^{\lambda}_{E}, g_{E}^{-1}\K(\psi,\psi_{\lambda},\cdot), \psi')\}. 
\end{split}
\end{equation}
In the above formula the 1-form $\K'$ on $E$ is defined as $\K':= \K(\psi_{\lambda},\psi^{\lambda}_{E},.)$.
Finally, we can compare the two Courant-Ricci scalars $\RS'_{E}$ and $\RS_{E}$. We introduce the covariant divergence of the $1$-form $\K'$ as $\Div_{\cD}(\K') := (\cD_{\psi_{\lambda}}\K')(\psi^{\lambda}_{E})$ and a (pseudo)norm $\rVert \K' \rVert_{E}$ of the $1$-form $\K'$ with respect to the fiber-wise metric $\<\cdot,\cdot\>_{E}$ as $\rVert \K' \rVert^{2}_{E} = g_{E}^{-1}( \K', \K') = \K'(\psi_{\lambda})K'(\psi^{\lambda}_{E})$. Then
\begin{equation}
\begin{split}
\RS'_{E} = & \ \RS_{E} + 2 \Div_{\cD}(\K') - \rVert \K' \rVert^{2}_{E} \\
& + \frac{1}{2} \{ \K(g_{E}^{-1}\K(\cdot,\psi_{\lambda},\psi_{\mu}), \psi^{\mu}_{E}, \psi^{\lambda}_{E}) - 2 \K(\psi^{\mu}_{E}, g_{E}^{-1}\K(\psi_{\lambda},\psi_{\mu},\cdot), \psi^{\lambda}_{E}) \}. 
\end{split}
\end{equation}
This expression can be significantly simplified using the property (\ref{eq_Kcyclic}) of $\K$. 
\begin{lem}
We have
\begin{equation} \label{eq_Ktraceequations}
\K(g_{E}^{-1}\K(\cdot,\psi_{\lambda},\psi_{\mu}), \psi^{\mu}_{E}, \psi^{\lambda}_{E}) - 2 \K(\psi^{\mu}_{E}, g_{E}^{-1}\K(\psi_{\lambda},\psi_{\mu},\cdot), \psi^{\lambda}_{E}) = 0.
\end{equation}
\end{lem}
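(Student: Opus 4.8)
The plan is to fix an arbitrary local frame $\{\psi_{\lambda}\}$ on $E$ with dual frame $\{\psi^{\lambda}\}$ and metric‑dual frame $\psi^{\lambda}_{E} := g_{E}^{-1}(\psi^{\lambda})$, so that $\<\psi_{\lambda},\psi^{\mu}_{E}\>_{E} = \delta^{\mu}_{\lambda}$, and to rewrite each of the two summands in (\ref{eq_Ktraceequations}) as a scalar contraction of the tensor $\K \in \Omega^{1}(E)\otimes\Omega^{2}(E)$ with itself. The goal is to show that, using only the skew‑symmetry of $\K$ in its last two entries and the vanishing of its cyclic sum (\ref{eq_Kcyclic}), both summands reduce to explicit multiples of the single frame‑independent scalar
\[
\|\K\|^{2}_{E} := \K(\psi^{\lambda}_{E},\psi^{\mu}_{E},\psi^{\nu}_{E})\cdot\K(\psi_{\lambda},\psi_{\mu},\psi_{\nu}),
\]
after which the identity follows by arithmetic.

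First I would eliminate the two occurrences of $g_{E}^{-1}$. Since $\<g_{E}^{-1}(\K(\cdot,\psi_{\lambda},\psi_{\mu})),\psi_{a}\>_{E} = \K(\psi_{a},\psi_{\lambda},\psi_{\mu})$, one has $g_{E}^{-1}(\K(\cdot,\psi_{\lambda},\psi_{\mu})) = \K(\psi_{a},\psi_{\lambda},\psi_{\mu})\,\psi^{a}_{E}$, and likewise $g_{E}^{-1}(\K(\psi_{\lambda},\psi_{\mu},\cdot)) = \K(\psi_{\lambda},\psi_{\mu},\psi_{a})\,\psi^{a}_{E}$. Substituting the first of these into the first term of (\ref{eq_Ktraceequations}) turns it into $\K(\psi_{a},\psi_{\lambda},\psi_{\mu})\cdot\K(\psi^{a}_{E},\psi^{\mu}_{E},\psi^{\lambda}_{E})$, i.e.\ a contraction of two copies of $\K$ in which the first slots are paired with each other while the second slot of one copy is paired with the third of the other and vice versa. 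Using $\K(\psi^{a}_{E},\psi^{\mu}_{E},\psi^{\lambda}_{E}) = -\K(\psi^{a}_{E},\psi^{\lambda}_{E},\psi^{\mu}_{E})$ and relabelling the dummy indices, this equals $-\|\K\|^{2}_{E}$.

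For the second term, the same substitution gives $\K(\psi_{\lambda},\psi_{\mu},\psi_{a})\cdot\K(\psi^{\mu}_{E},\psi^{a}_{E},\psi^{\lambda}_{E})$, which after relabelling is the ``cyclic'' self‑contraction $X := \K(\psi^{a}_{E},\psi^{b}_{E},\psi^{c}_{E})\cdot\K(\psi_{c},\psi_{a},\psi_{b})$. To compute $X$, I would contract the cyclic identity (\ref{eq_Kcyclic}), written as $\K(\psi_{a},\psi_{b},\psi_{c})+\K(\psi_{b},\psi_{c},\psi_{a})+\K(\psi_{c},\psi_{a},\psi_{b}) = 0$, with $\K(\psi^{a}_{E},\psi^{b}_{E},\psi^{c}_{E})$. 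This produces $\|\K\|^{2}_{E}$ together with the two cyclic self‑contractions of $\K$; but these two coincide, since each is obtained from the other by interchanging the two factors and relabelling the summation indices. Hence $\|\K\|^{2}_{E}+2X = 0$, so $X = -\tfrac{1}{2}\|\K\|^{2}_{E}$, and the second term of (\ref{eq_Ktraceequations}) equals $-\tfrac{1}{2}\|\K\|^{2}_{E}$. Combining, the left‑hand side of (\ref{eq_Ktraceequations}) equals $-\|\K\|^{2}_{E}-2\big(-\tfrac{1}{2}\|\K\|^{2}_{E}\big) = 0$.

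The computation uses no property of $\K$ beyond its membership in $\Omega^{1}(E)\otimes\Omega^{2}(E)$ and the cyclic constraint (\ref{eq_Kcyclic}); in particular the $V_{\pm}$‑block structure of $\K$ is not needed here. The one place demanding care is the index bookkeeping — keeping straight which slot of each copy of $\K$ is contracted with which after the two substitutions for $g_{E}^{-1}$, and then matching the resulting two patterns against $\|\K\|^{2}_{E}$ via the skew‑symmetry in the last two slots and (\ref{eq_Kcyclic}). Once that is done the proof is a one‑line count.
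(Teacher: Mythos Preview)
Your proof is correct and follows essentially the same route as the paper: both arguments expand the two terms as frame contractions of $\K$ with itself and then reduce using only the skew-symmetry in the last two slots together with the cyclic identity (\ref{eq_Kcyclic}). The only cosmetic difference is that the paper keeps the two terms together as $\K(\psi^{\nu}_{E},\psi^{\mu}_{E},\psi^{\lambda}_{E})\{\K(\psi_{\nu},\psi_{\lambda},\psi_{\mu}) - 2\K(\psi_{\lambda},\psi_{\nu},\psi_{\mu})\}$ and finishes with a symmetric-versus-antisymmetric argument in $(\mu,\lambda)$, whereas you evaluate each term separately as $-\|\K\|^{2}_{E}$ and $-\tfrac{1}{2}\|\K\|^{2}_{E}$; the underlying manipulation is the same.
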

\begin{proof}
The equation above can be rewritten as 
\begin{equation}
\K( \psi^{\nu}_{E}, \psi^{\mu}_{E}, \psi^{\lambda}_{E}) \cdot \{ \K(\psi_{\nu},\psi_{\lambda}, \psi_{\mu}) - 2 \K(\psi_{\lambda},\psi_{\nu}, \psi_{\mu}) \} = 0.
\end{equation}
We can now use (\ref{eq_Kcyclic}) to rewrite the terms in the curly brackets to obtain the equation
\begin{equation}
\K( \psi^{\nu}_{E}, \psi^{\mu}_{E}, \psi^{\lambda}_{E}) \cdot \{ \K(\psi_{\mu},\psi_{\nu},\psi_{\lambda}) + \K(\psi_{\lambda},\psi_{\nu},\psi_{\mu}) \} = 0.
\end{equation}
The term in the curly brackets is symmetric in $(\mu,\lambda)$, whereas the term with the upper indices is skew-symmetric in $(\mu,\lambda)$. This proves the equation (\ref{eq_Ktraceequations}).
\end{proof}
We thus find a very simple relation of the two scalar curvatures, which we record in terms of a proposition.
\begin{tvrz} \label{tvrz_RS'asRSandK'}
Let $\cD,\cD' \in \LC(E,\gm)$ be two Levi-Civita connections related by (\ref{eq_LC'asLCandK}). Then the relation of the respective Courant-Ricci scalars has the form
\begin{equation} \label{eq_RS'asRSandK'}
\RS'_{E} = \RS_{E} + 2 \Div_{\cD}(\K') - \rVert \K' \rVert^{2}_{E}.
\end{equation}
\end{tvrz}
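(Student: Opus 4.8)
The plan is to obtain (\ref{eq_RS'asRSandK'}) directly from the Ricci relation (\ref{eq_Ric'andRic}) by taking the full $\<\cdot,\cdot\>_{E}$-trace of both sides, and then to discard the leftover quadratic-in-$\K$ contribution using the cyclicity property (\ref{eq_Kcyclic}) of $\K$, i.e.\ exactly via the identity (\ref{eq_Ktraceequations}) proved in the Lemma immediately above. Because that Lemma already isolates the only non-trivial cancellation, the argument is short.

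Concretely, I would substitute $(\psi,\psi') = (\psi^{\mu}_{E},\psi_{\mu})$ in (\ref{eq_Ric'andRic}) and sum over $\mu$. By definition of the Courant-Ricci scalar the left-hand side becomes $\RS'_{E}$ and the zeroth-order term on the right becomes $\RS_{E}$. For the remaining terms I would repeatedly use two elementary facts: that $\cD$ commutes with $\<\cdot,\cdot\>_{E}$-contractions, and that a trace over the metric is insensitive to slot order, $\sum_{\mu} A(\psi^{\mu}_{E},\psi_{\mu}) = \sum_{\mu} A(\psi_{\mu},\psi^{\mu}_{E})$. Using $\sum_{\mu}\K(\psi^{\mu}_{E},\psi_{\mu},\cdot) = \K'$, each of the four first-derivative terms collapses to $\Div_{\cD}(\K')$; the term $-\K'(\psi_{\lambda})\{\K(\psi,\psi',\psi^{\lambda}_{E}) + \K(\psi',\psi,\psi^{\lambda}_{E})\}$ collapses to $-2\,\rVert \K' \rVert^{2}_{E}$; and the three cubic-in-$\K$ terms, after relabeling the frame indices and using skew-symmetry of $\K$ in its last two slots, combine into $\K(g_{E}^{-1}\K(\cdot,\psi_{\lambda},\psi_{\mu}),\psi^{\mu}_{E},\psi^{\lambda}_{E}) - 2\,\K(\psi^{\mu}_{E},g_{E}^{-1}\K(\psi_{\lambda},\psi_{\mu},\cdot),\psi^{\lambda}_{E})$. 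Carrying the overall factor $\tfrac12$ through, this yields the intermediate identity
\begin{equation*}
\RS'_{E} = \RS_{E} + 2\,\Div_{\cD}(\K') - \rVert \K' \rVert^{2}_{E} + \tfrac12\big\{ \K(g_{E}^{-1}\K(\cdot,\psi_{\lambda},\psi_{\mu}),\psi^{\mu}_{E},\psi^{\lambda}_{E}) - 2\,\K(\psi^{\mu}_{E},g_{E}^{-1}\K(\psi_{\lambda},\psi_{\mu},\cdot),\psi^{\lambda}_{E})\big\}.
\end{equation*}

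Finally I would invoke the Lemma: identity (\ref{eq_Ktraceequations}) asserts precisely that the braced expression vanishes, so (\ref{eq_RS'asRSandK'}) follows. As for where the real content sits: essentially everything is routine index bookkeeping except the vanishing of that cubic remainder, and this is already handled in the preceding Lemma via a short symmetry argument --- after expanding the $g_{E}^{-1}$'s as frame contractions of three copies of $\K$, cyclicity (\ref{eq_Kcyclic}) turns one factor into something symmetric in a pair of frame indices while the other factor (the one carrying the raised indices) is skew in those indices, forcing the contraction to zero. Thus the only thing I expect to require care is organizing the trace of (\ref{eq_Ric'andRic}) so that the cubic terms appear exactly in the shape to which (\ref{eq_Ktraceequations}) applies.
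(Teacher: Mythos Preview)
Your proposal is correct and follows essentially the same route as the paper: the paper also traces (\ref{eq_Ric'andRic}) with respect to $\<\cdot,\cdot\>_{E}$ to produce exactly your intermediate identity, and then invokes the preceding Lemma (\ref{eq_Ktraceequations}) to kill the cubic remainder. Your bookkeeping on the four derivative terms, the $\K'$-squared term, and the relabeling/frame-swap that brings the three cubic terms into the form (\ref{eq_Ktraceequations}) is all accurate.
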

As we will show in the following sections, the crucial role will be played by another scalar produced from the Ricci tensor $\Ric$. Indeed, we have also the generalized metric $\gm$ to raise the indices. This leads us to the following definition. 
\begin{definice}
Let $\cD$ be a Courant algebroid connection, and let $\gm$ be a generalized metric on $E$. Then the \textbf{Ricci scalar $\RS_{\gm}$ corresponding to $\gm$} is defined by
\begin{equation} \label{eq_RSG}
\RS_{\gm} = \Ric( \gm^{-1}(\psi^{\lambda}), \psi_{\lambda}). 
\end{equation}
\end{definice}
This curvature is invariant under Courant algebroid isomorphisms orthogonal with respect to generalized metrics. This will be of special importance in the following section. 
\begin{lemma} \label{lem_RGisinvariant}
Let all assumptions of Proposition \ref{tvrz_RGcovariance} hold. Moreover, assume that $\cD \in \LC(E,\gm)$. Let $\gm$ be a generalized metric on $E$ and let $\gm'$ be a generalized metric on $E$, such that (\ref{eq_gmisometry}) holds. 

Then $\cD'$ defined by (\ref{eq_isoconn}) is in $\LC(E',\gm')$, and $\RS'_{\gm'} = \RS_{\gm}$, where $\RS'_{\gm'}$ is the Ricci scalar of $\cD'$ corresponding to $\gm'$. 
\end{lemma}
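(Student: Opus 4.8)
The plan is to combine Proposition \ref{tvrz_RGcovariance} with Lemma \ref{lem_hgcovariant}, reducing everything to the already-established covariance of the Ricci tensor under Courant algebroid isomorphisms. First I would check that $\cD' \in \LC(E',\gm')$: by Proposition \ref{tvrz_RGcovariance} (applied with the brackets, since $\F$ is a genuine Courant algebroid isomorphism) we have $T'_{G} = (\F^{-1})^{\ast} T_{G} = 0$, so $\cD'$ is torsion-free; and the compatibility $\cD' \gm' = 0$ follows from $\cD\gm = 0$ together with the isometry condition (\ref{eq_gmisometry}), because $\F$ intertwines $\cD$ with $\cD'$ and, by (\ref{eq_gmisometry}), intertwines $\gm$ with $\gm'$ as maps $E \to E^{\ast}$, $E' \to E'^{\ast}$ — so $\cD'_{\F(\psi)}$ commuting with $\gm'$ is exactly the $\F$-transport of $\cD_{\psi}$ commuting with $\gm$. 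Hence $\cD' \in \LC(E',\gm')$.

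Next I would establish the scalar identity. By Proposition \ref{tvrz_RGcovariance} we already know $\Ric = \F^{\ast} \Ric'$, i.e. $\Ric(\psi,\psi') = \Ric'(\F(\psi),\F(\psi'))$ for all $\psi,\psi' \in \Gamma(E)$. The definition (\ref{eq_RSG}) of $\RS_{\gm}$ involves the trace of $\Ric$ taken with $\gm^{-1}$ rather than with $g_{E}^{-1}$, so the point is that $\F$ carries the pair $(\Ric, \gm)$ on $E$ to the pair $(\Ric', \gm')$ on $E'$ in a way compatible with that trace. Concretely, pick a local frame $\{\psi_{\lambda}\}$ on $E$ with dual frame $\{\psi^{\lambda}\}$ on $E^{\ast}$; then $\{\F(\psi_{\lambda})\}$ is a local frame on $E'$ whose dual frame on $E'^{\ast}$ is $\{(\F^{-1})^{\ast}(\psi^{\lambda})\}$. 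Writing out $\RS'_{\gm'} = \Ric'(\gm'^{-1}((\F^{-1})^{\ast}\psi^{\lambda}), \F(\psi_{\lambda}))$ and using (\ref{eq_gmisometry}) in the form $\gm'^{-1} \circ (\F^{-1})^{\ast} = \F \circ \gm^{-1}$ (which is just the inverse-transpose rephrasing of $\gm(\psi,\psi') = \gm'(\F(\psi),\F(\psi'))$), this becomes $\Ric'(\F(\gm^{-1}(\psi^{\lambda})), \F(\psi_{\lambda})) = \Ric(\gm^{-1}(\psi^{\lambda}), \psi_{\lambda}) = \RS_{\gm}$, where the middle equality is $\Ric = \F^{\ast}\Ric'$. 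This gives $\RS'_{\gm'} = \RS_{\gm}$.

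The only mild subtlety — and the step I would be most careful about — is the bookkeeping identity $\gm'^{-1} \circ (\F^{-1})^{\ast} = \F \circ \gm^{-1}$ as maps $E'^{\ast} \to E'$. This is elementary linear algebra: condition (\ref{eq_gmisometry}) says $\gm = \F^{\ast} \circ \gm' \circ \F$ as maps $E \to E^{\ast}$ (here $\F^{\ast}: E'^{\ast} \to E^{\ast}$ is the transpose, what the paper would write via $\rho^{T}$-style notation), and inverting both sides gives $\gm^{-1} = \F^{-1} \circ \gm'^{-1} \circ (\F^{\ast})^{-1} = \F^{-1} \circ \gm'^{-1} \circ (\F^{-1})^{\ast}$, which rearranges to the claimed identity. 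Once this is in hand the proof is a one-line computation, essentially identical in spirit to the proof of Lemma \ref{lem_hgcovariant} (``direct calculation''), so I would state it compactly rather than belabor the index manipulations. I expect no genuine obstacle; the content is entirely in correctly tracking how $\F$ and its transpose act on the metrics used to take traces.
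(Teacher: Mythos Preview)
Your argument is correct and follows essentially the same approach as the paper, which simply records this as a direct consequence of the definition (\ref{eq_RSG}), Proposition \ref{tvrz_RGcovariance}, and the isometry condition (\ref{eq_gmisometry}). You have spelled out in detail the trace-compatibility step $\gm'^{-1} \circ (\F^{-1})^{\ast} = \F \circ \gm^{-1}$ that the paper leaves implicit, but the content is the same.
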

\begin{proof}
This is a direct consequence of the definition (\ref{eq_RSG}) and of Proposition \ref{tvrz_RGcovariance} together with the isometry condition (\ref{eq_gmisometry}). 
\end{proof}

We are interested in an analogue of (\ref{eq_RS'asRSandK'}) relating the two Ricci scalars $\RS'_{\gm}$ and $\RS_{\gm}$. To achieve this, we will use the decomposition $E = V_{+} \oplus V_{-}$ together with the fact that $\K$ is uniquely determined by its restrictions onto these two subbundles, which restriction we denote as $\K_{\pm}$. In other words, if $I_{\pm}: V_{\pm} \rightarrow E$ are the inclusions of the respective subbundles, one sets $\K_{\pm} = I_{\pm}^{\ast}(\K)$. Using the same maps, one obtains two positive definite fiber-wise metrics $\gm_{\pm}$ on $V_{\pm}$, namely $\gm_{\pm} = I_{\pm}^{\ast}(\gm)$. Using the notation introduced in (\ref{eq_blocksconnmetrics}), this gives $\gm_{+} = g_{E}^{+}$ and $\gm_{-} = - g_{E}^{-}$, where $g_{E}^{\pm} = I_{\pm}^{\ast}(g_{E})$. One can now define the partial traces of $\K_{\pm} \in \Omega^{1}(V_{\pm}) \otimes \Omega^{2}(V_{\pm})$ over $V_{\pm}$ using the respective fiber-wise metrics $\gm_{\pm}$:
\begin{equation}
\K'_{\pm}(\psi) = \K_{\pm}( \psi_{k}, \gm_{\pm}^{-1}( \psi^{k}), \psi), 
\end{equation}
where $\{ \psi_{k} \}_{k=1}^{\rank(V_{\pm})}$ are some local frames on $V_{\pm}$, respectively. Similarly, one introduces the respective covariant divergences and norms:
\begin{equation}
\Div_{\cD^{\pm}}( \K'_{\pm}) = (\cD^{\pm}_{\psi_{k}} \K'_{\pm})(\gm^{-1}_{\pm}(\psi^{k})), \; \; \rVert \K'_{\pm} \rVert_{\gm_{\pm}}^{2} = \K'_{\pm}( \psi_{k}) \cdot \K'_{\pm}( \gm_{\pm}^{-1}(\psi^{k})). 
\end{equation}
Naturally, we could have used similar quantities defined using the metrics $g_{E}^{\pm}$, which would differ only in signs. We can now state the analogue of Proposition \ref{tvrz_RS'asRSandK'} for the scalar curvatures $\RS'_{\gm}$ and $\RS_{\gm}$.
\begin{tvrz} \label{tvrz_RSG'asRSGandK'}
Let $\cD,\cD' \in \LC(E,\gm)$ be two Levi-Civita connections related by (\ref{eq_LC'asLCandK}). Then the relation between the two respective Courant-Ricci scalars has the form
\begin{equation} \label{eq_RSG'asRSGandK'}
\RS'_{\gm} = \RS_{\gm} + 2 \Div_{\cD^{+}}(\K'_{+}) - 2 \Div_{\cD^{-}}(\K'_{-}) - \rVert \K'_{+} \rVert^{2}_{\gm_{+}} - \rVert \K'_{-} \rVert^{2}_{\gm_{-}}.
\end{equation}
\end{tvrz}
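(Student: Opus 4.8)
The plan is to follow the same strategy as in Proposition \ref{tvrz_RS'asRSandK'}, but now tracking the decomposition $E = V_{+} \oplus V_{-}$ throughout. I would start from the already-derived relation
\begin{equation*}
\RS'_{\gm} = \RS_{\gm} + 2 \Div_{\cD}(\K') - \rVert \K' \rVert^{2}_{E} + \tfrac{1}{2}\{\text{cubic terms in }\K\},
\end{equation*}
which one obtains by tracing (\ref{eq_Ric'andRic}) with $\gm^{-1}$ instead of $g_{E}^{-1}$, exactly as $\RS_{E}$ was obtained by tracing with $g_{E}^{-1}$. The key structural input is that both $\cD$ and $\cD'$ are compatible with $\gm$, so by Lemma (item 3 of the compatibility lemma) the tensor $\K$ vanishes whenever its inputs mix $V_{+}$ and $V_{-}$; hence $\K = \K_{+} \oplus \K_{-}$ with $\K_{\pm} \in \Omega^{1}(V_{\pm}) \otimes \Omega^{2}(V_{\pm})$. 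This block structure has to be combined with the fact that contracting with $\gm^{-1}$ is the same as contracting with $(g_{E}^{+})^{-1}$ on $V_{+}$ and with $-(g_{E}^{-})^{-1} = \gm_{-}^{-1}$ on $V_{-}$ — i.e. the sign flip on the $V_{-}$ block of $\gm$ relative to $g_{E}$ is precisely what produces the relative minus signs in (\ref{eq_RSG'asRSGandK'}).

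Concretely, I would proceed in the following steps. First, split every $\rank(E)$-sum appearing in the $\gm$-trace of (\ref{eq_Ric'andRic}) into a $V_{+}$-sum plus a $V_{-}$-sum by choosing an adapted local frame $\{\psi_{\lambda}\} = \{\psi_{k}^{+}\} \cup \{\psi_{a}^{-}\}$, and use that $\K$ kills mixed inputs to discard all cross terms; this reduces the whole expression to a sum of a "$+$ part" and a "$-$ part" of identical shape. Second, identify on the $V_{+}$ block: $\Div_{\cD}$ restricted there becomes $\Div_{\cD^{+}}$ of $\K'_{+}$ (the partial trace $\K'_{+}(\psi) = \K_{+}(\psi_{k},\gm_{+}^{-1}(\psi^{k}),\psi)$ agrees with the restriction of $\K'$ because $\K$ has no mixed components), and the quadratic term becomes $-\rVert \K'_{+}\rVert^{2}_{\gm_{+}}$; on the $V_{-}$ block the metric used is $\gm_{-} = -g_{E}^{-}$, so the divergence term acquires an overall minus (one factor of $\gm^{-1}$ restricted to $V_{-}$ equals $-\gm_{-}^{-1}$) giving $-2\Div_{\cD^{-}}(\K'_{-})$, while the norm term picks up two sign factors from the two $\gm^{-1}$-contractions and stays $-\rVert \K'_{-}\rVert^{2}_{\gm_{-}}$. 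Third, show the cubic-in-$\K$ remainder vanishes: by the same block-diagonality it splits as a $V_{+}$ piece plus a $V_{-}$ piece, each of which is exactly the expression treated in the Lemma proving (\ref{eq_Ktraceequations}), now applied to $\K_{\pm}$ on $V_{\pm}$ with the cyclic identity (\ref{eq_Kcyclic}) (which restricts to $\K_{\pm}$ since cyclic permutations preserve the $\pm$ grading); the symmetry/skew-symmetry argument there kills each piece separately.

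The main obstacle — really the only subtle point — is the careful bookkeeping of signs in Step 2: one must distinguish the three places a $\gm^{-1}$-contraction enters the $\gm$-trace of (\ref{eq_Ric'andRic}) (the outer Ricci trace, the trace defining $\K'$ inside $\Div$, and the two traces inside the norm), restrict each to $V_{-}$, and replace $\gm^{-1}|_{V_{-}}$ by $-\gm_{-}^{-1}$, keeping track of how many such replacements occur in each term. The divergence term has one such factor (odd $\Rightarrow$ sign flip), the norm term has two (even $\Rightarrow$ no flip), and $\RS_{\gm}$ itself is by definition the trace with $\gm^{-1}$ so it needs no rewriting. Everything else — $\cif$-linearity of $\K$, the fact that $\cD$ commutes with $\gm$-contractions (from $\cD\gm = 0$), and that the partial traces $\K'_{\pm}$ coincide with the restrictions of $\K'$ — is routine given the results already established. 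I would close by noting that (\ref{eq_RSG'asRSGandK'}) is manifestly consistent with Lemma \ref{lem_RGisinvariant}, since $\Div_{\cD^{\pm}}(\K'_{\pm})$ and $\rVert\K'_{\pm}\rVert^{2}_{\gm_{\pm}}$ are built from objects intrinsic to $(E,\<\cdot,\cdot\>_{E},\gm)$.
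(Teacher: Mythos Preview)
Your approach is exactly the one the paper indicates: trace (\ref{eq_Ric'andRic}) with $\gm^{-1}$, exploit the block-diagonality $\K = \K_{+}\oplus\K_{-}$ coming from $\cD,\cD'\in\LC(E,\gm)$, and kill the cubic remainder by applying the argument of the Lemma establishing (\ref{eq_Ktraceequations}) separately on $V_{\pm}$. One notational slip to fix when you write it up: the inner contractions in (\ref{eq_Ric'andRic}) defining $\K'$ and the summands $\psi^{\lambda}_{E}$ are $g_{E}^{-1}$-contractions, not $\gm^{-1}$-contractions, so the sign flips on $V_{-}$ come from $g_{E}^{-1}|_{V_{-}} = -\gm_{-}^{-1}$ (which you state correctly earlier) rather than from ``$\gm^{-1}|_{V_{-}} = -\gm_{-}^{-1}$'', which is false; your parity count (one flip in the divergence term, two in the norm term) is nevertheless correct.
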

\begin{proof}
The proof is analogous to the one leading to (\ref{eq_RS'asRSandK'}). Again, one starts from (\ref{eq_Ric'andRic}), being aware of the fact that $\K$ has no non-trivial components except for those given by $\K_{\pm}$. 
\end{proof}

To conclude this section, we show that there is at least one more interesting property of Levi-Civita connections which is transferred under Courant algebroid isomorphisms preserving the generalized metrics. We will see in the following that it plays a crucial role in physics applications. 
\begin{definice}
Let $\cD$ be a Courant algebroid connection. We say that $\cD$ is \textbf{Ricci compatible with $\gm$}, if its Ricci tensor $\Ric$ is block-diagonal with respect to the decomposition $E = V_{+} \oplus V_{-}$:
\begin{equation}
\Ric(V_{+},V_{-}) = 0. 
\end{equation}
This is equivalent to saying that $\gm^{-1} \Ric \in \End(E)$ preserves either of the subbundles $V_{\pm}$. 
\end{definice}
\begin{lemma} \label{lem_riccompcovariance}
Let all assumptions of Proposition \ref{tvrz_RGcovariance} hold, and let $\gm$ and $\gm'$ be two generalized metrics related as in (\ref{eq_gmisometry}).

Then $\cD$ is Ricci compatible with $\gm$ if and only if $\cD'$ is Ricci compatible with $\gm'$. 
\end{lemma}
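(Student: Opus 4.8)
The plan is to reduce the statement to two ingredients: first, that the isomorphism $\F$ intertwines the two involutions $\tau$ and $\tau'$ determined by $\gm$ and $\gm'$, so that it carries the eigenbundle decomposition $E = V_{+} \oplus V_{-}$ onto $E' = V'_{+} \oplus V'_{-}$; and second, that $\Ric = \F^{\ast}\Ric'$, which is already available from Proposition \ref{tvrz_RGcovariance} (its hypotheses being assumed here). Granting these, Ricci compatibility is preserved essentially by inspection.

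First I would unwind the orthogonality condition in (\ref{eq_iso}) together with the isometry condition (\ref{eq_gmisometry}). Viewing $g_{E}, \gm$ as bundle maps $E \to E^{\ast}$ and $g_{E'}, \gm'$ as maps $E' \to (E')^{\ast}$, orthogonality of $\F$ reads $g_{E} = \F^{T} g_{E'} \F$, while (\ref{eq_gmisometry}) reads $\gm = \F^{T} \gm' \F$. Hence $\tau = g_{E}^{-1} \gm = \F^{-1} (g_{E'}^{-1} \gm') \F = \F^{-1} \tau' \F$, i.e. $\F \circ \tau = \tau' \circ \F$. Recalling (as noted after Definition \ref{def_genmet1}) that $V_{\pm} = \ker(\tau \mp 1)$ and $V'_{\pm} = \ker(\tau' \mp 1)$, this intertwining immediately gives $\F(V_{\pm}) = V'_{\pm}$; in particular $\F$ restricts to a bijection of $\Gamma(V_{\pm})$ onto $\Gamma(V'_{\pm})$.

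Next I would invoke Proposition \ref{tvrz_RGcovariance}, which for connections related by (\ref{eq_isoconn}) yields $\Ric = \F^{\ast}\Ric'$, that is $\Ric(\psi,\psi') = \Ric'(\F(\psi),\F(\psi'))$ for all $\psi,\psi' \in \Gamma(E)$. Taking $\psi_{+} \in \Gamma(V_{+})$ and $\psi_{-} \in \Gamma(V_{-})$ gives $\Ric(\psi_{+},\psi_{-}) = \Ric'(\F(\psi_{+}),\F(\psi_{-}))$ with $\F(\psi_{+}) \in \Gamma(V'_{+})$ and $\F(\psi_{-}) \in \Gamma(V'_{-})$. Since $\F$ is a bijection between the relevant section spaces, the vanishing $\Ric(V_{+},V_{-}) = 0$ holds if and only if $\Ric'(V'_{+},V'_{-}) = 0$, which is exactly the asserted equivalence.

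I do not expect a genuine obstacle here; the only points needing a little care are bookkeeping the transpose relation $g_{E} = \F^{T} g_{E'} \F$ so that it cancels correctly in $\tau = \F^{-1}\tau'\F$, and making explicit that $\F$ really sends $V_{+}$ to $V'_{+}$ (not merely some positive subbundle) — which is what the eigenbundle description of the generalized metric buys us.
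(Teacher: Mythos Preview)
Your proof is correct and follows essentially the same approach as the paper: both argue that $\F(V_{\pm}) = V'_{\pm}$ and then invoke $\Ric = \F^{\ast}\Ric'$ from Proposition~\ref{tvrz_RGcovariance}. The paper simply asserts the first step (``by definition $\F$ has to preserve the subbundles $V_{\pm}$''), whereas you spell it out via the intertwining $\F\circ\tau = \tau'\circ\F$, which is a welcome clarification.
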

\begin{proof}
By definition $\F$ has to preserve the subbundles $V_{\pm}$, that is $\F(V_{\pm}) = V'_{\pm}$. The rest follows from the definition and Proposition \ref{tvrz_RGcovariance}. 
\end{proof}
\begin{rem}
In the following we will impose some conditions on scalar curvatures $\RS_{E}$ and $\RS_{\gm}$ as well as the Ricci compatibility with $\gm$. This does not lead to contradictory requirements, as only the block-diagonal components of $\Ric$ contribute to the Ricci scalars. In particular, imposing the Ricci compatibility gives in general no information about $\RS_{E}$ or $\RS_{\gm}$.
\end{rem}
\begin{example}
Let $(\d,0,\<\cdot,\cdot\>_{\d},[\cdot,\cdot]_{\d})$ be the Courant algebroid corresponding to a quadratic Lie algebra $(\d,\<\cdot,\cdot\>_{\d},[\cdot,\cdot]_{\d})$. Let $\gm$ be a generalized metric defining two linear subspaces $V_{\pm} \subseteq \d$. Let $P_{\pm}: \d \rightarrow V_{\pm}$ be the two projectors and let $x_{\pm} \equiv P_{\pm}(x)$ for all $x \in \d$. Set
\begin{equation}
\begin{split}
\<\cD_{x}y, z\>_{\d} = & \ \frac{1}{3} \<[x_{+},y_{+}]_{\d},z_{+}\>_{\d} + \frac{1}{3} \<[x_{-},y_{-}]_{\d},z_{-}\>_{\d} \\
& + \<[x_{-},y_{+}]_{\d},z_{+}\>_{\d} + \<[x_{+},y_{-}]_{\d},z_{-}\>_{\d},
\end{split}
\end{equation}
for all $x,y,z \in \d$. It is easy to check that $\cD$ is compatible with $\<\cdot,\cdot\>_{\d}$, as the right-hand side is easily seen to be skew-symmetric in $(y,z)$. It also preserves the subspace $V_{+}$, as there is no non-trivial contribution on the right-hand side when $y = y_{+}$ and $z = z_{-}$. It is easy to check that it is also torsion-free. Hence $\cD \in \LC(\gm,\d)$ and this set is non-empty. 
\end{example}
\section{Solution for exact Courant algebroids} \label{sec_exact}
We will now  analyze in detail the set of Levi-Civita connections for $E = \gTM$, equipped with a $H$-twisted Dorfman bracket, see Example \ref{ex_Hdorfman} for the corresponding definitions. Every exact Courant algebroid over a given base manifold $M$ is isomorphic to this case for some closed $H \in \df{3}$, and all involved objects transform expectedly under Courant algebroid isomorphisms. To examine Levi-Civita connections for exact Courant algebroids, it thus suffices to consider this particular case. 

Let $\gm$ be a generalized metric (\ref{eq_gmblock}) corresponding to a pair $(g,B)$. We know that it can be written as a product (\ref{eq_gmasproduct}). Let $\cD$ be a Courant algebroid connection on $(\gTM, \rho, \<\cdot,\cdot\>_{E}, [\cdot,\cdot]_{D}^{H})$. Define the connection $\hcD$ as the twist of $\cD$ by the orthogonal map $e^{B}$ defined by (\ref{eq_etoB}). That is
\begin{equation} \label{eq_conntwist}
e^{B}( \hcD_{\psi}\psi') = \cD_{e^{B}(\psi)} e^{B}(\psi'),
\end{equation}
for all $\psi,\psi' \in \Gamma(E)$. Moreover, note that 
\begin{equation}
e^{B}( [\psi,\psi']_{D}^{H+dB}) = [e^{B}(\psi),e^{B}(\psi')]_{D}^{H}, \; \; \< \psi,\psi'\>_{E} = \<e^{B}(\psi),e^{B}(\psi')\>_{E} 
\end{equation}
hold, for all $\psi,\psi' \in \Gamma(E)$. Also, $\rho(\psi) = \rho(e^{B}(\psi))$, for all $\psi \in \Gamma(E)$. In other words, the map $e^{B}$ is a Courant algebroid isomorphism of $(\gTM,\rho,\<\cdot,\cdot\>_{E},[\cdot,\cdot]_{D}^{H+dB})$ and $(\gTM, \rho, \<\cdot,\cdot\>_{E}, [\cdot,\cdot]_{D}^{H})$. It follows from Proposition \ref{tvrz_RGcovariance} that $\cD \in \LC(\gTM,\gm)$ if and only if $\hcD$ is a Levi-Civita connection on $\gTM$ with $(H+dB)$-twisted Dorfman bracket with respect to the block-diagonal generalized metric $\G$. From now on, let $H' \equiv H + dB$. This shows that it suffices to examine the set $\LC(\gTM,\G)$, where $\gTM$ is endowed with the $H'$-twisted Dorfman bracket. 

In view of Lemma \ref{lem_LCconnections}, the most important question is the existence of at least one Levi-Civita connection. As $\G = \BlockDiag(g,g^{-1})$, it is easy to find a Courant algebroid connection compatible with the generalized metric. Indeed, set 
\begin{equation} \label{eq_hcDL}
\hcDL_{(X,\xi)} = \bm{\cDL_{X}}{0}{0}{\cDL_{X}},
\end{equation}
for all $(X,\xi) \in \Gamma(\gTM)$. Here $\cDL$ is an ordinary Levi-Civita connection on the metric manifold $(M,g)$. We will often use this formal block form for operators on $\Gamma(\gTM)$. For example, the definition (\ref{eq_hcDL}) is equivalent to 
\begin{equation}
\hcDL_{(X,\xi)}(Y,\eta) = (\cDL_{X}Y, \cDL_{X}\eta), 
\end{equation}
for all $(Y,\eta) \in \Gamma(\gTM)$. The connection $\hcDL$ is not torsion-free, as $\widehat{T}^{LC}_{G} = \rho^{\ast}(H')$. Now, we look for $\H \in \Omega^{1}(E) \otimes \Omega^{2}(E)$, such that the connection $\hcD^{0}$ defined by 
\begin{equation} \label{eq_hcD0}
\hcD^{0}_{\psi}\psi' = \hcDL_{\psi}\psi' + g_{E}^{-1}\H(\psi,\psi',\cdot),
\end{equation}
for all $\psi,\psi' \in \Gamma(\gTM)$, becomes a Levi-Civita connection. It is easy to verify that the following $\H$ solves this problem:
\begin{equation} \label{eq_Hforcon}
\begin{split}
\H((X,\xi),(Y,\eta),(Z,\zeta)) = & \ \frac{1}{6} H'(g^{-1}(\xi),Y,g^{-1}(\zeta)) + \frac{1}{6} H'(g^{-1}(\xi),g^{-1}(\eta),Z) \\
& - \frac{1}{3} H'(X,Y,Z) - \frac{1}{3}	H'(X,g^{-1}(\eta),g^{-1}(\zeta)),
\end{split}
\end{equation}
for all $(X,\xi),(Y,\eta),(Z,\zeta) \in \Gamma(\gTM)$. Indeed, the requirement $\hcD^{0} \in \LC(\gTM,\G)$ is equivalent to the following list of conditions imposed on $\H$: 
\begin{align}
\label{eq_condonH1} 0 & = \H(\psi,\psi',\psi'') + \H(\psi,\psi'',\psi'), \\
\label{eq_condonH2} 0 & = \H(\psi,\psi',\tau(\psi'')) + \H(\psi,\psi'',\tau(\psi')), \\
\label{eq_condonH3} 0 & = \H(\psi,\psi',\psi'') + cyclic(\psi,\psi',\psi'') + H'(\rho(\psi),\rho(\psi'),\rho(\psi''),
\end{align}
for all $\psi,\psi',\psi'' \in \Gamma(\gTM)$. Note that $\tau(X,\xi) = (g^{-1}(\xi),g(X))$, for all $(X,\xi) \in \Gamma(\gTM)$. The first condition is forced by the compatibility $(\ref{eq_conncomp})$ of $\hcD^{0}$ with $\<\cdot,\cdot\>_{E}$, the second one is forced by the compatibility with $\G$, that is $\hcD^{0} \G = 0$, and the last one follows from the torsion-freeness condition. All three equations are easily verified to hold for (\ref{eq_Hforcon}). 
\begin{tvrz}
There exists a Levi-Civita connection $\hcD^{0} \in \LC(\gTM,\G)$, given by formulas (\ref{eq_hcDL}), (\ref{eq_hcD0}),  and (\ref{eq_Hforcon}). We call it the \textbf{minimal Levi-Civita connection} on $\gTM$. In the block form, it can be written as 
\begin{equation} \label{eq_minimalblock}
\hcD_{(X,\xi)} ^{0} = \bm{\cDL_{X} + \frac{1}{6} g^{-1}H'(g^{-1}(\xi),\star,\cdot)}{-\frac{1}{3}g^{-1}H'(X,g^{-1}(\star),\cdot)}{-\frac{1}{3}H'(X,\star,\cdot)}{\cDL_{X} + \frac{1}{6}H'(g^{-1}(\xi),g^{-1}(\star),\cdot)},
\end{equation}
for all $(X,\xi) \in \Gamma(\gTM)$. The symbols $\star$ denote the "inputs". In other words, one has 
\begin{equation}
\begin{split}
\hcD^{0}_{(X,\xi)}(Y,\eta) = \big( & \cDL_{X}Y + \frac{1}{6} g^{-1}H'(g^{-1}(\xi),Y,\cdot) - \frac{1}{3} g^{-1}H'(X,g^{-1}(\eta),\cdot), \\
& \cDL_{X}\eta - \frac{1}{3} H'(X,Y,\cdot) + \frac{1}{6} H'(g^{-1}(\xi),g^{-1}(\eta),\cdot) \big),
\end{split}
\end{equation}
for all $(X,\xi),(Y,\eta) \in \Gamma(\gTM)$. 
\end{tvrz}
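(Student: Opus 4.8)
The plan is to verify the three conditions defining membership of $\hcD^{0}$ in $\LC(\gTM,\G)$ --- that it is a Courant algebroid connection, that it is compatible with $\G$, and that it is torsion-free --- and then to obtain the block form (\ref{eq_minimalblock}) by one direct substitution. First I would record that $\hcDL$ of (\ref{eq_hcDL}) is itself a Courant algebroid connection: the two halves of (\ref{eq_connleib}) are inherited from the ordinary Levi-Civita connection $\cDL$ on $(M,g)$ together with $\rho(X,\xi)=X$, and compatibility (\ref{eq_conncomp}) with the pairing (\ref{eq_canpairing}) follows from the identity $(\cDL_{X}\eta)(Z)=X.(\eta(Z))-\eta(\cDL_{X}Z)$ applied to the two cotangent arguments of (\ref{eq_canpairing}). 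Since $\H\in\Omega^{1}(E)\otimes\Omega^{2}(E)$ is a tensor, the correction $g_{E}^{-1}\H(\psi,\psi',\cdot)$ in (\ref{eq_hcD0}) is $\cif$-bilinear in $(\psi,\psi')$, so $\hcD^{0}$ still satisfies (\ref{eq_connleib}); and because $\<g_{E}^{-1}\H(\psi,\psi',\cdot),\psi''\>_{E}=\H(\psi,\psi',\psi'')$, compatibility of $\hcD^{0}$ with $\<\cdot,\cdot\>_{E}$ is exactly the skew-symmetry (\ref{eq_condonH1}).

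Next, as $\cDL$ is metric, $\hcDL$ commutes with the involution $\tau$ of $\G$, hence $\hcDL\G=0$; the $\G$-compatibility of the corrected connection then reduces, via $\G(g_{E}^{-1}\H(\psi,\psi',\cdot),\psi'')=\H(\psi,\psi',\tau(\psi''))$, to (\ref{eq_condonH2}). Using the torsion $\widehat{T}^{LC}_{G}=\rho^{\ast}(H')$ of $\hcDL$ and the fact that the tensorial correction changes the torsion $3$-form by $\H(\psi,\psi',\psi'')-\H(\psi',\psi,\psi'')+\H(\psi'',\psi,\psi')$, which by (\ref{eq_condonH1}) equals $\H(\psi,\psi',\psi'')+cyclic(\psi,\psi',\psi'')$, torsion-freeness of $\hcD^{0}$ becomes exactly (\ref{eq_condonH3}). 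It therefore remains to check (\ref{eq_condonH1})--(\ref{eq_condonH3}) for the explicit $\H$ of (\ref{eq_Hforcon}): (\ref{eq_condonH1}) holds termwise by total antisymmetry of the $3$-form $H'$; (\ref{eq_condonH2}) follows after substituting $\tau(X,\xi)=(g^{-1}(\xi),g(X))$ and using $g^{-1}(g(Z))=Z$, upon which the eight terms cancel in four pairs by antisymmetry of $H'$; and for (\ref{eq_condonH3}) I would note that each term of $\H$ carries $H'$ evaluated either on three vector arguments or on one vector and two $g^{-1}(\cdot)$ arguments, types which are preserved under cyclic permutation, so that the three "three-vector" terms sum to $-H'(\rho(\psi),\rho(\psi'),\rho(\psi''))$ (cancelling the explicit $H'$-term), while for each fixed vector argument the three "one-vector" terms carry coefficients $-\frac{1}{3},\frac{1}{6},\frac{1}{6}$ and cancel once rewritten, via antisymmetry and cyclicity of $H'$, in a common form such as $H'(X,g^{-1}(\eta),g^{-1}(\zeta))$.

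Finally, for the block form I would identify $E^{\ast}\cong T^{\ast}M\oplus TM$ through the pairing (\ref{eq_canpairing}) and evaluate $\H((X,\xi),(Y,\eta),\cdot)$ on a test section $(Z,\zeta)$: collecting the $Z$-linear and $\zeta$-linear parts and applying $g_{E}^{-1}$ (which reverses the two summands), the $Z$-linear part becomes the $T^{\ast}M$-component $-\frac{1}{3}H'(X,Y,\cdot)+\frac{1}{6}H'(g^{-1}(\xi),g^{-1}(\eta),\cdot)$ and the $\zeta$-linear part becomes the $TM$-component $g^{-1}\big(\frac{1}{6}H'(g^{-1}(\xi),Y,\cdot)-\frac{1}{3}H'(X,g^{-1}(\eta),\cdot)\big)$ of $g_{E}^{-1}\H((X,\xi),(Y,\eta),\cdot)$; adding $\hcDL_{(X,\xi)}(Y,\eta)=(\cDL_{X}Y,\cDL_{X}\eta)$ then yields precisely (\ref{eq_minimalblock}). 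The whole argument is essentially bookkeeping; the one step demanding care is the cyclic-sum verification of (\ref{eq_condonH3}), where the six terms must be sorted by their vector/covector content and repeatedly rewritten using the symmetries of the $3$-form $H'$, but there is no conceptual obstacle.
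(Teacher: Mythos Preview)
Your proof is correct and follows exactly the approach the paper itself takes: reduce $\hcD^{0}\in\LC(\gTM,\G)$ to the three conditions (\ref{eq_condonH1})--(\ref{eq_condonH3}) on $\H$, verify them for (\ref{eq_Hforcon}), and then read off the block form. The paper disposes of the verification with the phrase ``All three equations are easily verified to hold for (\ref{eq_Hforcon})'', whereas you actually carry out the bookkeeping (the termwise antisymmetry for (\ref{eq_condonH1}), the four cancelling pairs for (\ref{eq_condonH2}), and the $-\tfrac{1}{3}+\tfrac{1}{6}+\tfrac{1}{6}$ cancellation for each one-vector block in (\ref{eq_condonH3})); there is no substantive difference in strategy.
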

Since we have now demonstrated that $\LC(\gTM,\G) \neq 0$, we know that there are infinitely many Levi-Civita connections on $\gTM$ thanks to Lemma \ref{lem_LCconnections}. 

In particular, one has $\rank(\LC_{0}(\gTM,\G)) = \frac{2}{3} n(n^{2}-1)$, where $n = \dim(M)$. Every other Levi-Civita connection $\hcD \in \LC(\gTM,\G)$ is now given a by formula 
\begin{equation} \label{eq_hcDashcD0andK}
\hcD_{\psi}\psi' = \hcD^{0}_{\psi}\psi' + g_{E}^{-1}\K(\psi,\psi',\cdot),
\end{equation}
where $\K$ has to satisfy (\ref{eq_condonH1}, \ref{eq_condonH2}) with $\H$ replaced by $\K$, together with (\ref{eq_Kcyclic}). For $E = \gTM$, the tensor $\K$ on $E$ can  conveniently be parametrized by a pair of ordinary tensors $J \in \vf{} \otimes \vf{2}$ and $W \in \df{1} \otimes \df{2}$. Namely, the most general $\K$ satisfying (\ref{eq_condonH1}, \ref{eq_condonH2}) can be written as 
\begin{equation} \label{eq_Ktensor}
\begin{split}
\K((X,\xi),(Y,\eta),(Z,\zeta)) = & \ W(g^{-1}(\xi),Y,Z) + W(X, g^{-1}(\eta),Z) \\
& + W(X,Y,g^{-1}(\zeta) + W(g^{-1}(\xi),g^{-1}(\eta),g^{-1}(\zeta)) \\
& - J(g(X),\eta,\zeta) - J(\xi,g(Y),\zeta) \\
& - J(\xi,\eta,g(Z)) - J(g(X),g(Y),g(Z)),
\end{split}
\end{equation}
for all $(X,\xi),(Y,\eta),(Z,\zeta) \in \Gamma(\gTM)$. Plugging into (\ref{eq_Kcyclic}) gives two independent condition on tensors $J$ and $W$, namely 
\begin{equation} \label{eq_JWcyclic}
J(\xi,\eta,\zeta) + cyclic(\xi,\eta,\zeta) = 0, \; \; W(X,Y,Z) + cyclic(X,Y,Z) = 0, 
\end{equation}
for all $\xi,\eta,\zeta \in \df{1}$ and $X,Y,Z \in \vf{}$. Equivalently, one has $J_{a} = 0$ and $W_{a} = 0$, with subscript $a$ denoting the complete skew-symmetrization of the tensor. This gives a complete description of the set $\LC(\gTM,\G)$. 

The main task of the following lines will be to find an expression for both scalar curvatures $\widehat{\RS}_{E}$ and $\widehat{\RS}_{\G}$ of the connection $\hcD$ in terms of the manifold Levi-Civita connection $\cDL$, the $3$-form $H'$ and the pair of tensors $W$ and $J$. In order to achieve this goal, we will employ the propositions \ref{tvrz_RS'asRSandK'} and \ref{tvrz_RSG'asRSGandK'}. In particular, this requires to calculate the scalar curvatures 
$\widehat{\RS}^{0}_{E}$ and $\widehat{\RS}^{0}_{\G}$ of the minimal connection $\hcD^{0}$. We formulate the result in the form of a proposition.
\begin{tvrz} \label{tvrz_hRS}
Let $\hcD^{0} \in \LC(\gTM,\G)$ be the minimal Levi-Civita connection (\ref{eq_minimalblock}). Then its scalar curvatures $\widehat{\RS}^{0}_{E}$ and $\widehat{\RS}^{0}_{\G}$ have the form
\begin{equation} \label{eq_hRS}
\widehat{\RS}^{0}_{E} = 0, \; \; \widehat{\RS}^{0}_{\G} = \RS(g) - \frac{1}{2} \<H',H'\>_{g},
\end{equation}
where $\RS(g)$ is the usual Ricci scalar curvature of the manifold Levi-Civita connection $\cDL$ corresponding to the metric $g$, and $\<\cdot,\cdot\>_{g}$ is the scalar product of differential forms using the Hodge duality operator $\ast_{g}$, that is for $\alpha,\beta \in \df{p}$, define
\begin{equation} 
\alpha \^ \ast_{g} \beta =: \<\alpha,\beta\>_{g} \cdot \vol_{g}, 
\end{equation}
where $\vol_{g} \in \df{n}$ is the canonical volume form on the metric manifold $(M,g)$. In index notation, one can write  $\<H',H'\>_{g} = \frac{1}{6} H'_{ijk} H'^{ijk},$
with indices being raised using $g^{-1}$. .
\end{tvrz}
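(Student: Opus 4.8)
The plan is to compute the full Riemann tensor $\hat R^{0}$ of $\hcD^{0}$ directly from the definition (\ref{eq_Rtensor}) in the block form (\ref{eq_minimalblock}), contract it to the Ricci tensor, and then take the two traces. Throughout I use, as in the paragraphs preceding the proposition, the $H'$-twisted Dorfman bracket and the block-diagonal metric $\G=\BlockDiag(g,g^{-1})$, together with the decomposition $\hcD^{0}=\hcDL+g_{E}^{-1}\H$ from (\ref{eq_hcD0}), where $\hcDL$ is the diagonal lift (\ref{eq_hcDL}) of the Levi-Civita connection of $(M,g)$ and $\H$ is given by (\ref{eq_Hforcon}). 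Since $\H$ is homogeneous of degree one in $H'$, the Riemann tensor splits as $\hat R^{0}=\hat R^{LC}+\hat R^{(1)}+\hat R^{(2)}$, where $\hat R^{LC}$ is the Riemann tensor (in the sense of (\ref{eq_Rtensor})) of the connection $\hcDL$, the term $\hat R^{(1)}$ is linear in $H'$ and built from $\cDL$-covariant derivatives of $\H$, and $\hat R^{(2)}$ is quadratic in $H'$. Note that $\hat R^{LC}$ does not depend on $H'$ at all, because $\hcDL_{\chi}$ depends on $\chi$ only through $\rho(\chi)$ and the $H'$-term of the Dorfman bracket lies in the kernel of $\rho$.

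The first step is to evaluate $\hat R^{LC}$. Working in a local frame $\{\psi_{\lambda}\}=\{(E_{a},0),(0,E^{a})\}$ subordinate to a $g$-orthonormal frame $\{E_{a}\}$ of $TM$, one checks from (\ref{eq_Rtensor}) that the naive curvature $R^{(0)}$ of $\hcDL$ depends on its direction arguments only through their anchors and is just the diagonally lifted Riemann tensor of $\cDL$, while the correction term $\<\cD_{\psi_{\lambda}}\psi,\psi'\>_{E}\<\cD_{\psi^{\lambda}_{E}}\phi,\phi'\>_{E}$ in (\ref{eq_Rtensor}) vanishes for this frame, since every frame vector and its metric dual have complementary anchors -- one of the two is always a covector with zero anchor, and $\hcDL$ along such a direction vanishes. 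A short contraction then shows that the Ricci tensor of $\hcDL$ is the pullback along $\rho$ of the ordinary Ricci tensor $\Ric(g)$. Because a frame vector and its metric dual never simultaneously have nonzero anchor, tracing this with $\<\cdot,\cdot\>_{E}$ gives $\widehat{\RS}^{LC}_{E}=0$, whereas tracing with $\G$ (which does see the tangent directions) gives $\widehat{\RS}^{LC}_{\G}=\RS(g)$; this is (\ref{eq_hRS}) in the special case $H'=0$.

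Next I would dispose of the linear term. Contracting $\hat R^{(1)}$ produces only expressions built from $\cDL_{k}H'_{ij\ell}$: the totally antisymmetric combination vanishes because $dH'=0$, and the remaining combination is of divergence type. A direct check shows that the trace of $\H$ vanishes (already for the frame above, $\H(\psi_{\lambda},\psi^{\lambda}_{E},\cdot)=0$ by antisymmetry of $H'$), so upon passing to either scalar curvature the surviving divergence terms are $\cDL$-divergences of traces of $H'$, which vanish identically by antisymmetry of $H'$. Hence $\hat R^{(1)}$ contributes nothing to either $\widehat{\RS}^{0}_{E}$ or $\widehat{\RS}^{0}_{\G}$.

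It remains to treat $\hat R^{(2)}$, which is the core of the proof. Assembling $\hat R^{(2)}$ from (\ref{eq_Rtensor}) and (\ref{eq_Hforcon}) -- with contributions from $\H\otimes\H$ inside the two copies of $R^{(0)}$, from the $\hcDL$--$\H$ and $\H$--$\H$ parts of the correction term $\<\hcD^{0}_{\psi_{\lambda}}\psi,\psi'\>_{E}\<\hcD^{0}_{\psi^{\lambda}_{E}}\phi,\phi'\>_{E}$, and from the cross terms produced by the Ricci contraction -- one is left with a handful of $H'\otimes H'$ contractions carrying explicit rational coefficients. For $\widehat{\RS}^{0}_{E}$ one takes the trace with $\<\cdot,\cdot\>_{E}$; here total antisymmetry of $H'$, together with the cyclic identities (\ref{eq_condonH3}) satisfied by $\H$ and the symmetries (\ref{eq_Rsym1})--(\ref{eq_Rsym4}), forces the whole sum to cancel, so $\widehat{\RS}^{0}_{E}=0$. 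For $\widehat{\RS}^{0}_{\G}$ one traces with $\G$ instead; now the same manipulations collapse all the $H'\otimes H'$ terms onto a single multiple of $H'_{ijk}H'^{ijk}$, and carrying the coefficients $\tfrac16$ and $-\tfrac13$ from (\ref{eq_minimalblock}) through the contraction, this multiple turns out to be exactly $-\tfrac12\<H',H'\>_{g}$. Together with $\widehat{\RS}^{LC}_{\G}=\RS(g)$ this gives $\widehat{\RS}^{0}_{\G}=\RS(g)-\tfrac12\<H',H'\>_{g}$, completing (\ref{eq_hRS}). The hard part is precisely this last bookkeeping: the quadratic terms come from several different places with different combinatorial weights, and one must combine them with the correct coefficients and repeatedly invoke the antisymmetry of $H'$ to recognise both the delicate total cancellation in the neutral trace and the clean collapse in the positive-definite one.
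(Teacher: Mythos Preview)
Your strategy is sound and leads to the correct result, but it is organised differently from the paper's proof. The paper does not split by $H'$-degree; instead it writes $\hcD^{0}=\hcDL+g_{E}^{-1}\H$ and applies a torsion-corrected analogue of the Ricci comparison formula (\ref{eq_Ric'andRic}) (the extra terms $\H(\widehat T^{LC}(\cdot,\cdot),\cdot,\cdot)$ are needed because $\hcDL$ has Gualtieri torsion $\rho^{\ast}(H')$). This produces the full Ricci tensor $\hRic^{0}$ in closed form, namely (\ref{eq_hRic0final}), and the two scalars drop out by a one-line trace. Your grading argument is cleaner in spirit---the observation that $\hat R^{LC}$ is genuinely $H'$-independent because $\hcDL$ sees only the anchor of the bracket is a nice shortcut---and your handling of the linear piece (traces of $\nabla H'$ vanish by antisymmetry) is correct, though the paper makes this explicit rather than arguing structurally. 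The trade-off is that the paper's route also delivers the explicit off-diagonal block of $\hRic^{0}$, which is immediately reused in Lemma~\ref{lem_hRic0Riccomp} and Theorem~\ref{thm_Riccomp}; your approach would have to revisit the computation to get those. Also, be aware that in your quadratic bookkeeping the bracket contributes an $H'$ term inside $\H([\psi,\psi']_{D}^{H'},\cdot,\cdot)$, so $\hat R^{(2)}$ receives a piece from there in addition to the $\H\otimes\H$ products---you allude to this but it is easy to drop when doing the coefficients by hand.
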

\begin{proof}
We start from the expression (\ref{eq_hcD0}). One can repeat the calculation leading to the formula (\ref{eq_Ric'andRic}), except that $\hcDL$ is not torsion-free for nonzero $H'$, which leads also to terms containing the torsion $\widehat{T}^{LC}(\psi,\psi')$. Note that in this case, the partial trace $\H'$ vanishes, as $H' \in \df{3}$. One thus finds
\begin{equation} \label{eq_hRic0}
\begin{split}
\hRic{}^{0}(\psi,\psi') = & \ \hRic{}^{LC}(\psi,\psi') + \frac{1}{2} \{ (\hcDL_{\psi_{\lambda}}\H)(\psi,\psi',\psi^{\lambda}_{E}) + (\hcDL_{\psi_{\lambda}}\H)(\psi',\psi,\psi^{\lambda}_{E}) \\
& - \H(\psi', g_{E}^{-1}\H(\psi_{\lambda},\psi,\cdot), \psi^{\lambda}_{E}) - \H(\psi^{\lambda}_{E}, g_{E}^{-1}\H(\psi,\psi_{\lambda},\cdot),\psi') \\
& + \H(g_{E}^{-1}\H(\cdot,\psi_{\lambda},\psi'),\psi,\psi^{\lambda}_{E}) \\
& + \H(\widehat{T}^{LC}(\psi_{\lambda},\psi'),\psi,\psi^{\lambda}_{E}) + \H(\widehat{T}^{LC}(\psi,\psi^{\lambda}_{E}),\psi_{\lambda},\psi')\},
\end{split}
\end{equation}
where $\hRic{}^{LC}$ is the Ricci tensor corresponding to the connection $\hcDL$. 
All ingredients on the right-hand side can be easily calculated by hand from (\ref{eq_hcDL}) and (\ref{eq_Hforcon}). First, one finds
\begin{equation}
\hRic{}^{LC}(\psi,\psi') = \Ric{}^{LC}(\rho(\psi),\rho(\psi')), 
\end{equation}
for all $\psi,\psi' \in \Gamma(\gTM)$, where $\Ric{}^{LC}$ is the ordinary Ricci curvature tensor of the Levi-Civita connection $\cDL$ corresponding to the metric $g$. Next, one has 
\begin{equation}
\begin{split}
(\hcDL_{\psi_{\lambda}}\H')((X,\xi),(Y,\eta), \psi^{\lambda}_{E}) = & \ \frac{1}{6} (\cDL_{\psi_{k}}H')(g^{-1}(\xi), Y, \psi^{k}_{g}) \\
& - \frac{1}{3} (\cDL_{\psi_{k}}H')(X,g^{-1}(\eta),\psi^{k}_{g}),
\end{split}
\end{equation}
for all $(X,\xi),(Y,\eta) \in \Gamma(\gTM)$. Here $\{ \psi_{k} \}_{k=1}^{\dim(M)}$ is an arbitrary local frame on $TM$ and $\psi^{k}_{g} = g^{-1}(\psi^{k})$. Calculation of the terms in (\ref{eq_hRic0}) quadratic in $\H$ is straightforward by plugging in the expression (\ref{eq_Hforcon}), yet one has to be careful with signs and prefactors. We get
\begin{equation}
-\H((Y,\eta),g_{E}^{-1}\H(\psi_{\lambda},(X,\xi),\cdot),\psi^{\lambda}_{E}) = \frac{1}{18} \< \io_{g^{-1}(\xi)}H', \io_{g^{-1}(\eta)} H'\>_{g} - \frac{1}{9} \< \io_{X}H', \io_{Y}H'\>_{g}, 
\end{equation}
for all $(X,\xi), (Y,\eta) \in \Gamma(\gTM)$. The second quadratic term differs only in the order of sections $(X,\xi)$ and $(Y,\eta)$, and thus yields the same result. For third one, we have 
\begin{equation}
\begin{split}
\H(g_{E}^{-1}\H(\cdot,\psi_{\lambda},(Y,\eta)),(X,\xi), \psi^{\lambda}_{E}) = & \ \frac{2}{9} \<\io_{g^{-1}(\xi)}H', \io_{g^{-1}(\eta)}H'\>_{g} + \frac{2}{9} \<\io_{X}H',\io_{Y}H'\>_{g}.
\end{split}
\end{equation}
Finally, the first term containing the torsion $\widehat{T}^{LC}$ gives 
\begin{equation}
\H(\widehat{T}^{LC}(\psi_{\lambda},(Y,\eta)), (X,\xi), \psi^{\lambda}_{E}) = - \frac{1}{3} \< \io_{X}H',\io_{Y}H'\>_{g}
\end{equation}
and the second one gives the same result. 
We can now and collect the terms in (\ref{eq_hRic0}) and obtain the final expression
\begin{equation} \label{eq_hRic0final}
\begin{split}
\hRic{}^{0}((X,\xi),(Y,\eta)) = & \ \Ric{}^{LC}(X,Y) \\
& - \frac{1}{4} \{ (\cDL_{\psi_{k}}H')(X,g^{-1}(\eta),\psi^{k}_{g}) + (\cDL_{\psi_{k}}H')(Y,g^{-1}(\xi), \psi^{k}_{g}) \} \\
& - \frac{1}{3} \<\io_{X}H', \io_{Y}H'\>_{g} + \frac{1}{6} \<\io_{g^{-1}(\xi)}H', \io_{g^{-1}(\eta)}H'\>_{g}. 
\end{split}
\end{equation}
Calculation of the two scalar curvatures $\widehat{\RS}_{E}$ and $\widehat{\RS}_{\G}$ is now easy. 
\begin{equation}
\begin{split}
\widehat{\RS}_{E} = & \ \hRic{}^{0}((\psi_{k},0),(0,\psi^{k})) + \hRic{}^{0}((0,\psi^{k}),(\psi_{k},0)) \\
= & \ -\frac{1}{2} \{ ( \cDL_{\psi_{m}}H')(\psi_{k},\psi^{k}_{g}, \psi^{m}_{g}) \} = 0,
\end{split} 
\end{equation}
as $H'$ is completely skew-symmetric. Similarly, one gets 
\begin{equation}
\begin{split}
\widehat{\RS}_{\G} = & \ \Ric{}^{LC}(\psi_{k},\psi^{k}_{g}) + \hRic{}^{0}((\psi_{k},0),(\psi^{k}_{g},0)) + \hRic{}^{0}((0,\psi^{k}),(0,\psi_{k}^{g})) \\
= & \ \RS(g) - \frac{1}{2} \<H',H'\>_{g},
\end{split} 
\end{equation}
where we have used the notation $\psi_{k}^{g}$ for $g(\psi_{k}) \in \df{1}$. 
\end{proof}
Now, we can examine when the connection $\hcD^{0}$ is Ricci compatible with the generalized metric $\G$, as we have the explicit description of its Ricci tensor $\hRic{}^{0}$ in the proof above. This gives a neat equation restricting the background fields $(g,B)$. 
\begin{lemma} \label{lem_hRic0Riccomp}
The minimal connection $\hcD^{0}$ is Ricci compatible with $\G$, if and only if 
\begin{equation} \label{eq_hRic0Riccomp}
\Ric{}^{LC}(X,Y) - \frac{1}{2} (\delta_{g}H')(X,Y) - \frac{1}{2} \<\io_{X}H',\io_{Y}H'\>_{g} = 0,
\end{equation}
where $\delta_{g}: \Omega^{\bullet}(M) \rightarrow \Omega^{\bullet - 1}(M)$ is a codifferential induced by the metric on the manifold $(M,g)$. 
\end{lemma}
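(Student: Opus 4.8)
The plan is to read everything off from the explicit formula (\ref{eq_hRic0final}) for the Ricci tensor $\hRic{}^{0}$ of the minimal connection, established in the proof of Proposition \ref{tvrz_hRS}, and then to impose $\hRic{}^{0}(V_{+},V_{-})=0$. Since the generalized metric $\G=\BlockDiag(g,g^{-1})$ corresponds to the pair $(g,0)$, its eigenbundles are the graphs $V_{\pm}=\{\,\fPsi_{\pm}(X)=(X,\pm g(X))\ :\ X\in\vf{}\,\}$, and because $\hRic{}^{0}$ is a tensor it suffices to evaluate it on $\psi=\fPsi_{+}(X)$, $\psi'=\fPsi_{-}(Y)$ for vector fields $X,Y$. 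Concretely this amounts to substituting $\xi=g(X)$, $\eta=-g(Y)$ into (\ref{eq_hRic0final}), so that $g^{-1}(\xi)=X$ and $g^{-1}(\eta)=-Y$.

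Then I would simply collect the four resulting terms. The $\Ric{}^{LC}$ term is unchanged. The two covariant-derivative terms become $-\tfrac14\{(\cDL_{\psi_{k}}H')(X,-Y,\psi^{k}_{g})+(\cDL_{\psi_{k}}H')(Y,X,\psi^{k}_{g})\}$; using that $\cDL_{\psi_{k}}H'$ is a $3$-form, hence skew in all three slots, each summand equals $-(\cDL_{\psi_{k}}H')(X,Y,\psi^{k}_{g})$, so the term collapses to $+\tfrac12(\cDL_{\psi_{k}}H')(X,Y,\psi^{k}_{g})$. The two quadratic terms, $-\tfrac13\langle\io_{X}H',\io_{Y}H'\rangle_{g}+\tfrac16\langle\io_{g^{-1}(\xi)}H',\io_{g^{-1}(\eta)}H'\rangle_{g}$, combine after the substitution — the minus sign coming from $g^{-1}(\eta)=-Y$ flips the second — into $-\tfrac12\langle\io_{X}H',\io_{Y}H'\rangle_{g}$. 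The only non-mechanical step is identifying the surviving derivative term as a codifferential: with the standard convention $\delta_{g}\omega=-\io_{\psi^{k}_{g}}\cDL_{\psi_{k}}\omega$ for the Levi-Civita connection and using skew-symmetry of $H'$ to move the contracted slot to the last position, one gets $(\delta_{g}H')(X,Y)=-(\cDL_{\psi_{k}}H')(X,Y,\psi^{k}_{g})$, so that term is exactly $-\tfrac12(\delta_{g}H')(X,Y)$.

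Putting the pieces together yields
\[
\hRic{}^{0}(\fPsi_{+}(X),\fPsi_{-}(Y))=\Ric{}^{LC}(X,Y)-\tfrac12(\delta_{g}H')(X,Y)-\tfrac12\langle\io_{X}H',\io_{Y}H'\rangle_{g},
\]
for all $X,Y\in\vf{}$. Since $\fPsi_{\pm}$ are isomorphisms onto $V_{\pm}$ and $\hRic{}^{0}$ is symmetric, the two off-diagonal blocks vanish simultaneously, so $\hRic{}^{0}(V_{+},V_{-})=0$ — i.e.\ Ricci compatibility of $\hcD^{0}$ with $\G$ — holds if and only if the right-hand side vanishes identically, which is precisely (\ref{eq_hRic0Riccomp}). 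I do not expect a genuine obstacle here, as the real work is already contained in Proposition \ref{tvrz_hRS}; the only things to watch are the bookkeeping of the signs produced by the $-g(Y)$ component of $\fPsi_{-}$ and fixing the sign/normalization convention for $\delta_{g}$ so that it matches the statement, keeping in mind throughout that $\{\psi_{k}\}$ is an arbitrary local frame on $TM$ with $\psi^{k}_{g}=g^{-1}(\psi^{k})$ rather than an orthonormal one.
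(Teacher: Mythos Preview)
Your proposal is correct and follows essentially the same approach as the paper: you substitute $\psi=(X,g(X))\in\Gamma(V_{+}^{0})$ and $\psi'=(Y,-g(Y))\in\Gamma(V_{-}^{0})$ into the explicit formula (\ref{eq_hRic0final}) for $\hRic{}^{0}$ obtained in Proposition \ref{tvrz_hRS}, and identify the resulting covariant-derivative term with the codifferential via $(\delta_{g}H')(X,Y)=-(\cDL_{\psi_{k}}H')(\psi^{k}_{g},X,Y)$. Your bookkeeping of the signs and the remark that symmetry of $\hRic{}^{0}$ makes the two off-diagonal blocks vanish together are both accurate; the paper's own proof is just a terser version of exactly this computation.
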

\begin{proof}
The generalized metric $\G$ gives a decomposition $\G = V_{+}^{0} \oplus V_{-}^{0}$, where $V_{\pm}^{0}$ are graphs of $\pm g \in \Hom(TM,T^{\ast}M)$. In other words, $\hcD^{0}$ is Ricci compatible, if and only if 
\begin{equation}
\hRic{}^{0}( (X,g(X)), (Y,-g(Y)) = 0, 
\end{equation}
for all $X,Y \in \vf{}$. Plugging into (\ref{eq_hRic0final}), one obtains (\ref{eq_hRic0Riccomp}). Recall that for any $H' \in \df{3}$, one can express the codifferential using the Levi-Civita connection as 
\begin{equation} (\delta_{g}H')(X,Y) = - (\cDL_{\psi_{k}}H')(\psi^{k}_{g},X,Y). 
\end{equation}
\end{proof}
The worst part of the calculation leading to the final expressions for scalar curvatures of $\hcD \in \LC(\gTM,\G)$ defined by (\ref{eq_hcDashcD0andK}) and (\ref{eq_Ktensor}) is successfully behind us. It only remains to apply the propositions \ref{tvrz_RS'asRSandK'} and \ref{tvrz_RSG'asRSGandK'}. We state the result as a theorem.
\begin{theorem} \label{thm_scalars} Consider $\gTM$ equipped with the $H'$-twisted Dorfman bracket and the generalized metric $\G$.
Let $\hcD \in \LC(\gTM,\G)$ be a general Levi-Civita connection on $\gTM$ parametrized by two tensors $J \in \vf{} \otimes \vf{2}$ and $W \in \df{1} \otimes \df{2}$ as in (\ref{eq_hcDashcD0andK}) and (\ref{eq_Ktensor}). Then 
\begin{align}
\label{eq_hRSEfinal} \widehat{\RS}_{E} & = - 4 \Div_{g}(J') + 8 \<J', W'\>, \\
\label{eq_hRSGfinal} \widehat{\RS}_{\G} & = \RS(g) - \frac{1}{2} \<H',H'\>_{g} + 4 \Div_{g}(W') - 4 \rVert W' \rVert_{g}^{2} - 4 \rVert J' \rVert_{g}^{2},
\end{align}
where $\Div_{g}$ is the covariant divergence using the manifold Levi-Civita connection $\cDL$, and $J'$ and $W'$ are partial traces defined by
\begin{equation}
J'(\zeta) = J'( \psi^{k}, \psi_{k}^{g}, \zeta), \; \; W'(Z) = W'( \psi_{k}, \psi^{k}_{g}, Z),
\end{equation}
for all $\zeta \in \df{1}$ and $Z \in \vf{}$. Again, $\{\psi_{k}\}_{k=1}^{\dim{M}}$ is an arbitrary local frame on $TM$,  $\{\psi^{k}\}_{k=1}^{\dim{M}}$ the dual one on $T^\ast M$ and $\psi^{k}_{g} := g^{-1}(\psi^{k})$, $\psi_{k}^{g} := g(\psi_{k})$. Finally, $\rVert \cdot \rVert_{g}$ denotes the usual (point-wise) norm of (co)vector fields induced by the metric $g$. 
\end{theorem}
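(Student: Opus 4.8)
The plan is to treat the general Levi-Civita connection $\hcD$ of (\ref{eq_hcDashcD0andK}) as a deformation of the minimal connection $\hcD^{0}$ by the tensor $\K$ parametrized in (\ref{eq_Ktensor}), and to apply Propositions \ref{tvrz_RS'asRSandK'} and \ref{tvrz_RSG'asRSGandK'} with the base connection $\hcD^{0}$. Since $\widehat{\RS}^{0}_{E} = 0$ and $\widehat{\RS}^{0}_{\G} = \RS(g) - \tfrac{1}{2}\<H',H'\>_{g}$ are supplied by Proposition \ref{tvrz_hRS}, everything reduces to evaluating three pieces of data built from $\K$: its partial traces, the covariant divergences of those traces with respect to $\hcD^{0}$ and to its restrictions $\hcD^{0,\pm}$ to $V_{\pm}^{0}$, and the corresponding (pseudo)norms.

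First I would compute the trace $\K'$ entering (\ref{eq_RS'asRSandK'}). Using the adapted local frame $\{(\psi_{k},0),(0,\psi^{k})\}$ on $\gTM$ (built from a local frame $\{\psi_{k}\}$ on $TM$ with dual $\{\psi^{k}\}$) together with its $g_{E}$-dual and substituting (\ref{eq_Ktensor}), most of the eight terms drop out and one gets $\K'(Z,\zeta) = 2W'(Z) - 2J'(\zeta)$, equivalently $g_{E}^{-1}(\K') = (-2J', 2W')$. Hence $\rVert\K'\rVert^{2}_{E} = \<(-2J',2W'),(-2J',2W')\>_{E} = -8\<J',W'\>$, where $\<J',W'\>$ denotes the natural pairing $W'(J')$. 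Next I would record the elementary fact that for any section $(A,\alpha)\in\Gamma(\gTM)$ one has $\Div_{\hcD^{0}}((A,\alpha)) = \Div_{g}(A)$: writing $\hcD^{0}$ in the block form (\ref{eq_minimalblock}) and contracting against the adapted frame, every $H'$-dependent contribution becomes a trace of the $3$-form $H'$ over two of its arguments and therefore vanishes. Passing between $\K'$ and $g_{E}^{-1}(\K')$ via $\hcD^{0}g_{E}=0$ gives $\Div_{\hcD^{0}}(\K') = -2\Div_{g}(J')$, and substituting into (\ref{eq_RS'asRSandK'}) produces (\ref{eq_hRSEfinal}).

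For $\widehat{\RS}_{\G}$ I would transport the $V_{\pm}^{0}$-data to $TM$ along the isomorphisms $\fPsi_{\pm}\colon X\mapsto(X,\pm g(X))$. One checks $\fPsi_{\pm}^{\ast}(\G) = 2g$; that the restrictions act as $\hcD^{0,\pm}_{X}Y = \cDL_{X}Y \mp \tfrac{1}{6}g^{-1}H'(X,Y,\cdot)$, which follows from (\ref{eq_minimalblock}) and $\cDL g = 0$; and that $\fPsi_{\pm}^{\ast}(\K) = \pm 4W - 4\tilde{J}$ as covariant $3$-tensors on $M$, with $\tilde{J}(X,Y,Z) := J(g(X),g(Y),g(Z))$. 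Taking the $\gm_{\pm}$-traces then yields $\K'_{\pm} = \pm 2W' - 2g(J')$. From here the norms are $\rVert\K'_{\pm}\rVert^{2}_{\gm_{\pm}} = \tfrac{1}{2}\rVert\K'_{\pm}\rVert^{2}_{g}$, summing to $4\rVert W'\rVert^{2}_{g} + 4\rVert J'\rVert^{2}_{g}$ because the cross terms $\mp 8\<J',W'\>$ cancel; and, using once more the vanishing of traces of $H'$, $\Div_{\hcD^{0,\pm}}(\K'_{\pm}) = \tfrac{1}{2}\Div_{g}(\K'_{\pm})$, whence $2\Div_{\hcD^{0,+}}(\K'_{+}) - 2\Div_{\hcD^{0,-}}(\K'_{-}) = 4\Div_{g}(W')$. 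Substituting all of this together with $\widehat{\RS}^{0}_{\G}$ into (\ref{eq_RSG'asRSGandK'}) gives (\ref{eq_hRSGfinal}).

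The conceptual content is light and already appeared in Proposition \ref{tvrz_hRS}: every $H'$-correction to $\hcD^{0}$ or to $\hcD^{0,\pm}$ dies under the relevant traces because $H'$ is a $3$-form. The real work — and where I expect to have to be most careful — is the bookkeeping of numerical prefactors and signs: the factor $2$ in $\fPsi_{\pm}^{\ast}(\G) = 2g$ that rescales every $V_{\pm}$-trace, the coefficients produced when the eight-term parametrization (\ref{eq_Ktensor}) is restricted to $V_{\pm}^{0}$ and contracted, and the relative sign between the $V_{+}$- and $V_{-}$-contributions, which is exactly what makes the $\Div_{g}(J')$ terms cancel in $\widehat{\RS}_{\G}$ while the $\Div_{g}(W')$ terms reinforce.
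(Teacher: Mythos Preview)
Your proposal is correct and follows essentially the same route as the paper's own proof: compute $\K'$ from (\ref{eq_Ktensor}), apply Proposition \ref{tvrz_RS'asRSandK'} for $\widehat{\RS}_{E}$, then pull back along $\fPsi^{0}_{\pm}$ to identify $\gm_{\pm}=2g$, $\cD^{\pm}$, and $\K_{\pm}$, and apply Proposition \ref{tvrz_RSG'asRSGandK'} for $\widehat{\RS}_{\G}$. The only cosmetic difference is that you pass to the section $g_{E}^{-1}(\K')$ and use $\Div_{\hcD^{0}}(A,\alpha)=\Div_{g}(A)$, while the paper computes the $1$-form divergence $\Div_{\hcD^{0}}(\K')$ directly; since $\hcD^{0}g_{E}=0$, these are the same computation.
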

\begin{proof}
To prove the formula for $\widehat{\RS}_{E}$, we employ the equation (\ref{eq_RS'asRSandK'}) together with (\ref{eq_hRS}). To do so, we must find the $1$-form $\K' \in \Omega^{1}(E)$, starting from (\ref{eq_Ktensor}). One finds
\begin{equation} \label{eq_KprimeasWandJ}
\K'(Z,\zeta) = 2 W'(Z) - 2 J'(\zeta),
\end{equation}
for all $(Z,\zeta) \in \Gamma(\gTM)$. For the covariant divergence $\Div_{\hcD^{0}}(\K')$, we find
\begin{equation}
\begin{split}
\Div_{\hcD^{0}}( \K') = & \ \rho(\psi_{\lambda}). \<\K', \psi^{\lambda}_{E}\> - \< \K', \hcD^{0}_{\psi_{\lambda}}( \psi^{\lambda}_{E}) \> \\
= & \ - 2 \cdot ( \psi_{k}.\<J', \psi^{k}\> - \<J', \cDL_{\psi_{k}}\psi^{k} \> ) \\
= & \ - 2 \cdot \Div_{g}(J'). 
\end{split}
\end{equation}
The norm $\rVert \K' \rVert^{2}_{E}$ can be calculated even more easily, giving
\begin{equation}
\rVert \K' \rVert^{2}_{E} \equiv \K'(\psi_{\lambda}) \cdot \K'(\psi^{\lambda}_{E}) = - 8 W'(\psi_{k}) \cdot J'(\psi^{k}) = - 8 \<J',W'\>. 
\end{equation}
Plugging these two partial results and (\ref{eq_hRS}) into (\ref{eq_RS'asRSandK'}) gives the equation (\ref{eq_hRSEfinal}). To prove (\ref{eq_hRSGfinal}), we will first employ the two isomorphisms $\fPsi_{\pm}^{0} \in \Hom(TM,V^{0}_{\pm})$, where $V^{0}_{\pm}$ are the two subbundles induced by $\G$. They are given by $\fPsi_{\pm}^{0}(X) = (X,\pm g(X))$, for all $X \in \vf{}$. We can now pull everything "downstairs" and view the connections $\cD^{\pm}$, fiber-wise metrics $\gm_{\pm}$ and tensors $\K_{\pm}$ as standard objects on $M$. We will denote them by the same letters. The ordinary manifold connections $\cD^{\pm}$ are determined by the equations
\begin{equation}
\hcD^{0}_{\fPsi^{0}_{+}(X)} \fPsi^{0}_{+}(Y) = \fPsi^{0}_{+}( \cD_{X}^{+}Y), \; \; \hcD^{0}_{\fPsi^{0}_{-}(X)} \fPsi^{0}_{-}(Y) = \fPsi^{0}_{-}( \cD_{X}^{-}Y),
\end{equation}
for all $X,Y \in \vf{}$. Plugging into (\ref{eq_minimalblock}), one finds 
\begin{equation}
\cD^{\pm}_{X}Y = \cDL_{X}Y \mp \frac{1}{6} g^{-1}H'(X,Y,\cdot),
\end{equation}
for all $X,Y \in \vf{}$. The two metrics $\gm_{\pm}$ are induced by 
\begin{equation}
\gm_{\pm}(X,Y) = \gm(\fPsi^{0}_{\pm}(X), \fPsi^{0}_{\pm}(Y)) = 2 g(X,Y),
\end{equation}
for all $X,Y \in \vf{}$, as  $\G = \BlockDiag(g,g^{-1})$ is block diagonal. Hence, $\gm_{\pm} = 2g$. Finally, from (\ref{eq_Ktensor}), one gets
\begin{equation}
\K_{\pm}(X,Y,Z) = \pm 4 W(X,Y,Z) - 4 J(g(X),g(Y),g(Z)). 
\end{equation}
Consequently, the partial traces $\K'_{\pm}$ defined using $\gm_{\pm}$ give
\begin{equation}
\K'_{\pm}(Z) \equiv \K_{\pm}( \psi_{k}, \gm_{\pm}^{-1}(\psi^{k}), Z) = \pm 2 W'(Z) - 2 J'(g(Z)),
\end{equation}
for all $Z \in \vf{}$. It remains to calculate the covariant divergences and norms. We find
\begin{equation}
\begin{split}
\Div_{\cD^{\pm}}(\K'_{\pm}) = & \ (\cD^{\pm}_{\psi_{k}} \K'_{\pm})( \gm_{\pm}^{-1}(\psi^{k})) = \frac{1}{2} ( \cD^{\pm}_{\psi_{k}} \K'_{\pm} ) (\psi^{k}_{g}) = \frac{1}{2} ( \pm \Div_{g}(W') - 2 \Div_{g}(J')) \\
= & \ \pm \Div_{g}(W') - \Div_{g}(J'). 
\end{split}
\end{equation}
Finally, we obtain 
\begin{equation}
\begin{split}
\rVert \K'_{\pm} \rVert^{2}_{\gm_{\pm}} = & \ \K'_{\pm}(\psi_{k}) \cdot \K'_{\pm}( \gm_{\pm}^{-1}(\psi^{k})) = \frac{1}{2} \K'_{\pm}(\psi_{k}) \cdot \K'_{\pm}(\psi^{k}_{g}) \\
= & \ 2 \rVert W' \rVert_{g}^{2} \mp 4 \<J',W'\> + 2 \rVert J' \rVert^{2}_{g}. 
\end{split}
\end{equation}
Plugging the last two results into (\ref{eq_RSG'asRSGandK'}) and using (\ref{eq_hRS}) gives precisely (\ref{eq_hRSGfinal}).
\end{proof}
It is quite interesting, yet not very surprising, that the resulting expressions (\ref{eq_hRSEfinal}, \ref{eq_hRSGfinal}) depend only on the scalars produced from partial traces $J'$ and $W'$. 

We can any time revert to the original untwisted connection $\cD$, which will have the same scalar curvatures, $\RS_{E} = \widehat{\RS}_{E}$ and $\RS_{\gm} = \widehat{\RS}_{\G}$. Before doing so, we will examine one very interesting property of $\hcD$, Namely, its Ricci compatibility with the generalized metric $\G$. The result is in fact surprisingly simple. 
\begin{theorem} \label{thm_Riccomp}
Let $\hcD \in \LC(\gTM,\G)$ be the most general Levi-Civita connection on $\gTM$ equipped with the $H'$-twisted Dorfman bracket and the generalized metric $\G$, parametrized by two tensors $J \in \vf{} \otimes \vf{2}$ and $W \in \df{1} \otimes \df{2}$, as in (\ref{eq_hcDashcD0andK}) and (\ref{eq_Ktensor}). Then $\hcD$ is Ricci compatible if and only if 
\begin{equation} \label{eq_hcDRiccomp1}
\begin{split}
0 = & \ \Ric{}^{LC}(X,Y) - \frac{1}{2} (\delta_{g}H')(X,Y) - \frac{1}{2} \<\io_{X}H',\io_{Y}H'\>_{g} \\
& + (\cDL_{X}W')(Y) + (\cDL_{Y}W')(X) + \< \io_{X}\io_{Y}H', W'\>_{g} \\
& + (\cDL_{X}J')(g(Y)) - (\cDL_{Y}J')(g(X)),
\end{split}
\end{equation}
for all $X,Y \in \vf{}$. This is equivalent to a pair of equations
\begin{align}
\label{eq_hcDRiccomp2} 0 = & \ \Ric{}^{LC}(X,Y) - \frac{1}{2}\<\io_{X}H',\io_{Y}H'\>_{g} + (\cDL_{X}W')(Y) + (\cDL_{Y}W')(X), \\
\label{eq_hcDRiccomp3} 0 = & \ (\cDL_{X}J')(g(Y)) - (\cDL_{Y}J')(g(X)) - \frac{1}{2} (\delta_{g}H')(X,Y) + \< \io_{X}\io_{Y}H', W'\>_{g},
\end{align}
imposed on all vector fields $X,Y \in \vf{}$ on $M$. 
\end{theorem}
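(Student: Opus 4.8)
The plan is to apply the same machinery as in the proof of Theorem~\ref{thm_scalars}. I would write $\hcD_{\psi}\psi'=\hcD^{0}_{\psi}\psi'+g_{E}^{-1}\K(\psi,\psi',\cdot)$ with $\K$ parametrised by $(W,J)$ through (\ref{eq_Ktensor}). Since the minimal connection $\hcD^{0}$ is torsion-free, the general relation (\ref{eq_Ric'andRic}) between $\hRic$ and $\hRic{}^{0}$ applies without change. By definition $\hcD$ is Ricci compatible with $\G$ iff $\hRic(V^{0}_{+},V^{0}_{-})=0$, and using the graph isomorphisms $\fPsi^{0}_{\pm}(X)=(X,\pm g(X))$ onto $V^{0}_{\pm}$ (the same ones used in the proof of Theorem~\ref{thm_scalars}), this means
\[
\hRic\big(\fPsi^{0}_{+}(X),\fPsi^{0}_{-}(Y)\big)=0 \quad\text{for all } X,Y\in\vf{}.
\]
So the proof will reduce to evaluating the right-hand side of (\ref{eq_Ric'andRic}) on the pair $(\fPsi^{0}_{+}(X),\fPsi^{0}_{-}(Y))$, whose two entries lie in \emph{different} blocks of $E=V^{0}_{+}\oplus V^{0}_{-}$.

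The decisive point is that this block structure makes almost all of (\ref{eq_Ric'andRic}) collapse. Both $\hcDL$ and $\hcD^{0}$ preserve $V^{0}_{\pm}$ (for $\hcDL$ this is just $\cDL g=0$; for $\hcD^{0}$ one adds that the term $g_{E}^{-1}\H(\psi,\cdot,\cdot)$ respects the blocks, which follows from (\ref{eq_condonH1}) and (\ref{eq_condonH2})), while $\K$ vanishes unless all three of its arguments lie in one and the same block (cf. the proof of Lemma~\ref{lem_LCconnections}). Hence, with $\psi\in\Gamma(V^{0}_{+})$ and $\psi'\in\Gamma(V^{0}_{-})$, every term of (\ref{eq_Ric'andRic}) that is quadratic in $\K$ vanishes, and so do both rank-three derivative terms $(\hcD^{0}_{\psi_{\lambda}}\K)(\psi,\psi',\psi^{\lambda}_{E})$ and $(\hcD^{0}_{\psi_{\lambda}}\K)(\psi',\psi,\psi^{\lambda}_{E})$, since in each of these two of the (block-respecting) arguments land in opposite blocks. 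The $\hRic{}^{0}$ contribution is, by the computation in the proof of Lemma~\ref{lem_hRic0Riccomp}, exactly the first line of (\ref{eq_hcDRiccomp1}):
\[
\hRic{}^{0}\big(\fPsi^{0}_{+}(X),\fPsi^{0}_{-}(Y)\big)=\Ric{}^{LC}(X,Y)-\tfrac12(\delta_{g}H')(X,Y)-\tfrac12\<\io_{X}H',\io_{Y}H'\>_{g}.
\]
Thus only the two trace terms $\tfrac12\{(\hcD^{0}_{\psi}\K')(\psi')+(\hcD^{0}_{\psi'}\K')(\psi)\}$ remain to be worked out.

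These can be handled by splitting $\hcD^{0}=\hcDL+g_{E}^{-1}\H$ and pushing everything down to $M$ through $\fPsi^{0}_{\pm}$, exactly as in the proof of Theorem~\ref{thm_scalars} where $\K'(Z,\zeta)=2W'(Z)-2J'(\zeta)$ was established. The $\hcDL$-part then contributes $(\cDL_{X}W')(Y)+(\cDL_{Y}W')(X)+(\cDL_{X}J')(g(Y))-(\cDL_{Y}J')(g(X))$ — using that $\hcDL$ maps a graph vector $\fPsi^{0}_{\pm}(Z)$ to $\fPsi^{0}_{\pm}(\cDL_{(\cdot)}Z)$ — while the two cross terms $-\K'\big(g_{E}^{-1}\H(\psi,\psi',\cdot)\big)$ and $-\K'\big(g_{E}^{-1}\H(\psi',\psi,\cdot)\big)$, evaluated with the explicit $\H$ of (\ref{eq_Hforcon}), combine to $2\<\io_{X}\io_{Y}H',W'\>_{g}$ (the two $J'$-dependent pieces cancelling against each other). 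Adding the $\hRic{}^{0}$-contribution yields precisely (\ref{eq_hcDRiccomp1}). Finally, (\ref{eq_hcDRiccomp2}) and (\ref{eq_hcDRiccomp3}) follow by splitting (\ref{eq_hcDRiccomp1}) into its symmetric and skew-symmetric parts in $(X,Y)$: the terms $\Ric{}^{LC}(X,Y)$, $\<\io_{X}H',\io_{Y}H'\>_{g}$ and $(\cDL_{X}W')(Y)+(\cDL_{Y}W')(X)$ are symmetric, whereas $(\delta_{g}H')(X,Y)$, $\<\io_{X}\io_{Y}H',W'\>_{g}$ and $(\cDL_{X}J')(g(Y))-(\cDL_{Y}J')(g(X))$ are skew-symmetric.

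The hard part will be the last algebraic step: checking that the two $\H$--$\K'$ cross terms really collapse to $2\<\io_{X}\io_{Y}H',W'\>_{g}$ with no residual $J'$, $W'$ or uncontracted $H'$. This is routine but the numerical prefactors in (\ref{eq_Hforcon}) and the signs coming from $g^{-1}(\mp g(Y))=\mp Y$ on $V^{0}_{\pm}$ must be tracked carefully — essentially the same sort of bookkeeping as behind (\ref{eq_hRic0final}) in the proof of Proposition~\ref{tvrz_hRS}, only a bit longer. Everything else is formal once the block-vanishing statements above are in place.
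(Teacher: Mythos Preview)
Your proposal is correct and follows essentially the same route as the paper: evaluate (\ref{eq_Ric'andRic}) on $(\fPsi^{0}_{+}(X),\fPsi^{0}_{-}(Y))$, use the block structure of $\K$ (and of $\hcD^{0}_{\psi}\K$) to kill all but the two $\K'$ terms, invoke Lemma~\ref{lem_hRic0Riccomp} for the $\hRic{}^{0}$ contribution, and then compute the surviving $(\hcD^{0}_{\psi}\K')(\psi')$ terms explicitly. The only cosmetic difference is that you split $\hcD^{0}=\hcDL+g_{E}^{-1}\H$ and treat the two pieces separately, while the paper plugs directly into the block form (\ref{eq_minimalblock}); both computations produce the same two displayed formulas for $(\hcD^{0}_{\fPsi^{0}_{\pm}(\cdot)}\K')(\fPsi^{0}_{\mp}(\cdot))$, including the cancellation of the $J'$ contributions in the $H'$ cross terms that you anticipate.
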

\begin{proof}
This follows almost immediately from (\ref{eq_Ric'andRic}) and from the proof of Lemma \ref{lem_hRic0Riccomp}. Indeed, (\ref{eq_Ric'andRic}) implies the equality
\begin{equation}
\begin{split}
\hRic(\fPsi_{+}^{0}(X), \fPsi_{-}^{0}(Y)) = & \ \hRic{}^{0}(\fPsi_{+}^{0}(X), \fPsi_{-}^{0}(Y)) \\
& + \frac{1}{2} \{(\hcD^{0}_{\fPsi_{+}^{0}(X)} \K')(\fPsi^{0}_{-}(Y)) + (\hcD^{0}_{\fPsi^{0}_{-}(Y)} \K')(\fPsi^{0}_{+}(X)) \},
\end{split}
\end{equation}
for all $X,Y \in \vf{}$. All other terms containing $\K$ give zero, as $\K$ vanishes whenever evaluated on any pair of sections $(\psi,\psi')$, where $\psi \in \Gamma(V_{+})$ and $\psi' \in \Gamma(V_{-})$. Moreover, the compatibility of $\hcD^{0}$ with $\G$ implies that also $\hcD_{\psi}^{0}\K$ has the same property. Next, it follows from the proof of Lemma \ref{lem_hRic0Riccomp} that $\hRic{}^{0}(\fPsi^{0}_{+}(X),\fPsi^{0}_{-}(Y))$ is exactly the left-hand side of (\ref{eq_hRic0Riccomp}). It remains to evaluate the remaining two terms using (\ref{eq_minimalblock}) and (\ref{eq_KprimeasWandJ}). One has
\begin{align}
(\hcD_{\fPsi_{+}^{0}(X)} \K')(\fPsi_{-}^{0}(Y)) = & \ 2 \{ (\cD_{X}W')(Y) + (\cD_{X}J')(g(Y)) \} \\
& - H'(X,Y,\psi_{k}) \cdot \{ W'(\psi^{k}_{g}) + J'(\psi^{k}) \}, \nonumber \\
(\hcD_{\fPsi_{-}^{0}(Y)} \K')(\fPsi_{+}^{0}(X)) = & \ 2 \{ (\cD_{Y}W')(X) - (\cD_{Y}J')(g(X)) \} \\
& - H'(X,Y,\psi_{k}) \cdot \{ W'(\psi^{k}_{g}) - J'(\psi^{k}) \}. \nonumber 
\end{align}
Summing up the both sides and dividing by two, we obtain the remaining terms in (\ref{eq_hcDRiccomp1}). Equations (\ref{eq_hcDRiccomp2}) and (\ref{eq_hcDRiccomp3}) are the symmetric and skew-symmetric part of (\ref{eq_hcDRiccomp1}) in $(X,Y)$, respectively. 
\end{proof}
The fact that the Ricci compatibility again depends only on the partial traces $J'$ and $W'$ is considerably more interesting  than in the case of above equations (\ref{eq_hRSEfinal}) and 
(\ref{eq_hRSGfinal}). Again, the original connection $\cD \in \LC(\gTM,\gm)$ is Ricci compatible with $\gm$ if and only if $\hcD$ is Ricci compatible with $\G$. This follows from Lemma \ref{lem_riccompcovariance}. Thus, it would be very convenient to have an interpretation for $J'$ and $W'$ directly in terms of the connection $\cD$.  This is, without any doubt, possible for the vector field $J' \in \vf{}$. 
\begin{tvrz} \label{tvrz_J'ascharvf}
Let $\hcD \in \LC(\gTM,\G)$ be the Levi-Civita connection defined by (\ref{eq_hcDashcD0andK}) and (\ref{eq_Ktensor}). Let $X_{\hcD} \in \vf{}$ be the characteristic vector field of $\hcD$, defined by (\ref{eq_charvf}). Then 
\begin{equation} \label{eq_J'ascharvf}
J' = \frac{1}{2} X_{\hcD}. 
\end{equation}
In particular, if $\cD \in \LC(\gTM,\gm)$ is the connection related to $\hcD$ by (\ref{eq_conntwist}), we have $J' = \frac{1}{2} X_{\cD}$, where $X_{\cD}$ is the characteristic vector field of $\cD$. 
\end{tvrz}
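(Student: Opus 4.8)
The plan is to use the splitting $\hcD=\hcD^{0}+g_{E}^{-1}\K$ of (\ref{eq_hcDashcD0andK}) and to track how the composition $X_{\cD}=\Div_{\cD}\circ\D$ changes when the minimal connection is shifted by the tensor $\K$ of (\ref{eq_Ktensor}). First I would record the elementary fact that, for a tensor $\K\in\Omega^{1}(E)\otimes\Omega^{2}(E)$, the covariant divergences of $\hcD$ and $\hcD^{0}$ are related by $\Div_{\hcD}=\Div_{\hcD^{0}}-\K'$, with $\K'=\K(\psi_{\lambda},\psi^{\lambda}_{E},\cdot)$ the trace introduced before Proposition \ref{tvrz_RS'asRSandK'}: indeed $\Div_{\hcD}(\psi)=\<\hcD_{\psi_{\lambda}}\psi,\psi^{\lambda}\>=\Div_{\hcD^{0}}(\psi)+\<g_{E}^{-1}\K(\psi_{\lambda},\psi,\cdot),\psi^{\lambda}\>$, and using $\<v,\psi^{\lambda}\>=\<v,\psi^{\lambda}_{E}\>_{E}$ (since $\psi^{\lambda}_{E}=g_{E}^{-1}\psi^{\lambda}$) together with skew-symmetry of $\K$ in its last two arguments, the extra term equals $\K(\psi_{\lambda},\psi,\psi^{\lambda}_{E})=-\K'(\psi)$. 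Composing with $\D$ gives $X_{\hcD}=X_{\hcD^{0}}-\K'\circ\D$.

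It then remains to evaluate the two terms. On $\gTM$ with the $H'$-twisted Dorfman bracket one has $\D f=(0,df)$, so $X_{\hcD^{0}}(f)=\Div_{\hcD^{0}}((0,df))$; inserting the block form (\ref{eq_minimalblock}) and the adapted frame $\{(\psi_{k},0),(0,\psi^{k})\}$ into the trace defining $\Div$, the $\hcDL$-part drops out because $\cDL$-derivatives of $1$-forms pair to zero under $\<\cdot,\cdot\>_{E}$, and every remaining contribution is a trace over two slots of the totally antisymmetric $H'$ contracted with the symmetric $g^{-1}$, hence vanishes; thus $X_{\hcD^{0}}=0$. For the second term, the $1$-form $\K'$ attached to the general $\hcD$ has already been computed in (\ref{eq_KprimeasWandJ}) as $\K'(Z,\zeta)=2W'(Z)-2J'(\zeta)$; evaluating on $\D f=(0,df)$, whose $TM$-component vanishes, yields $\K'(\D f)=-2J'(df)=-2(J'.f)$, so $X_{\hcD}(f)=-\K'(\D f)=2(J'.f)$, that is, $X_{\hcD}=2J'$, which is (\ref{eq_J'ascharvf}). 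Finally, for the untwisted connection $\cD$ related to $\hcD$ through (\ref{eq_conntwist}): the twisting map $e^{B}$ is a Courant algebroid isomorphism intertwining $\cD$ and $\hcD$, so Lemma \ref{lem_XcDisavf} (invariance of the characteristic vector field under Courant algebroid isomorphisms) gives $X_{\cD}=X_{\hcD}=2J'$.

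I do not expect a genuine conceptual obstacle; the only care needed is bookkeeping — keeping the dual pairing $\<\cdot,\cdot\>$ distinct from the fiber-wise metric $\<\cdot,\cdot\>_{E}$, using the identifications $\psi^{\lambda}_{E}=g_{E}^{-1}\psi^{\lambda}$ and $\D f=(0,df)$, and nailing the sign in $\Div_{\hcD}=\Div_{\hcD^{0}}-\K'$. The one structural point that makes everything collapse is that $\D f$ lands in the isotropic cotangent summand $T^{\ast}M$, which simultaneously kills the $W'$-contribution to $\K'(\D f)$ and all $H'$-terms (as well as the Levi-Civita part) in $\Div_{\hcD^{0}}((0,df))$.
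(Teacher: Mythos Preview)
Your proof is correct and follows essentially the same route as the paper's own proof: establish $\Div_{\hcD}=\Div_{\hcD^{0}}-\K'$, observe $\D f=(0,df)$ and $\Div_{\hcD^{0}}(\D f)=0$, then plug in (\ref{eq_KprimeasWandJ}) to read off $X_{\hcD}=2J'$, and finish with Lemma~\ref{lem_XcDisavf}. You supply slightly more detail on why the $H'$ and $\hcDL$ contributions to $\Div_{\hcD^{0}}(\D f)$ vanish, but the argument is otherwise identical.
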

\begin{proof}
For $\hcD$ given by (\ref{eq_hcDashcD0andK}), we have, directly from the definition of the covariant divergence: 
\begin{equation}
\Div_{\hcD}(\psi) = \Div_{\hcD^{0}}(\psi) - \K'(\psi). 
\end{equation}
Now, note that $\D{f} = (0,df)$ and that, as it is easy to check using (\ref{eq_minimalblock}), also $\Div_{\hcD^{0}}(\D{f}) = 0$. Plugging in from (\ref{eq_KprimeasWandJ}), we thus find the relation 
\begin{equation}
\< X_{\hcD}, df \> \equiv \Div_{\hcD}(\D{f}) = - \K'(0,df) = 2 J'(df). 
\end{equation}
This proves the assertion (\ref{eq_J'ascharvf}). Remaining statements follow from the fact that characteristic vector fields are invariant under Courant algebroid isomorphisms, see Lemma \ref{lem_XcDisavf}.
\end{proof}
For $1$-form $W'$, there also exists a description directly in terms of the connection $\cD$. However, it is significantly more cumbersome when compared to the above expression for $J'$. 
\begin{tvrz} \label{tvrz_K'asjinak}
Let $\hcD \in \LC(\gTM,\G)$ be the Levi-Civita connection defined by (\ref{eq_hcDashcD0andK}) and (\ref{eq_Ktensor}). Let $V_{\hcD} \in \vf{} \otimes \vf{2}$ be the tensor field defined by (\ref{eq_VcD}). Let $h_{\G}$ be the symmetric bilinear form (\ref{eq_hgm}) associated to the generalized metric $\G$. For $\gTM$ being an exact Courant algebroid, $h_{\G} > 0$ is a positive definite fiber-wise metric on $T^{\ast}M$. Then
\begin{equation} 
W(X,Y,Z) = V_{\hcD}( h_{\G}^{-1}(X), h_{\G}^{-1}(Y), h_{\G}^{-1}(Z)),
\end{equation}
for all $X,Y,Z \in \vf{}$. Thus, one has $W'(Z) = V_{\hcD}( \psi^{k}, h_{\G}^{-1}(\psi_{k}), h_{\G}^{-1}(Z))$. Finally, one can express $W$ and $W'$ in the same way using $V_{\cD}$ and $h_{\gm}$ associated to the connection $\cD$ and the generalized metric $\gm$. 
\end{tvrz}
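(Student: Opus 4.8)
The plan is to evaluate $V_{\hcD}$ directly from the definition (\ref{eq_VcD}), using the explicit block form of $\hcD$ on $\gTM$ together with the parametrization (\ref{eq_Ktensor}). First I record two elementary facts about the exact Courant algebroid. A short computation with the canonical pairing (\ref{eq_canpairing}) shows that the induced map $\rho^{\ast}\in\Hom(T^{\ast}M,\gTM)$ is $\rho^{\ast}(\xi)=(0,\xi)$, so that $h_{\G}(\xi,\eta)=\G\big((0,\xi),(0,\eta)\big)=g^{-1}(\xi,\eta)$, i.e.\ $h_{\G}=g^{-1}$, which is positive definite since $\rho$ is surjective; in particular $h_{\G}^{-1}=g$ as a map $TM\to T^{\ast}M$, with $g^{-1}\circ g=1_{TM}$.

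Next I compute $V_{\hcD}(\xi,\eta,\zeta)=\<\hcD_{(0,\xi)}(0,\eta),(0,\zeta)\>_{E}$ via the splitting (\ref{eq_hcDashcD0andK}), $\hcD=\hcD^{0}+g_{E}^{-1}\K(\cdot,\cdot,\cdot)$. For the minimal part, the block form (\ref{eq_minimalblock}) shows that the $TM$-component of $\hcD^{0}_{(0,\xi)}(0,\eta)$ vanishes (its two off-diagonal terms both carry a vanishing $TM$-slot), so $\hcD^{0}_{(0,\xi)}(0,\eta)\in\Gamma(0\oplus T^{\ast}M)$; since any two sections of the isotropic subbundle $0\oplus T^{\ast}M$ are $\<\cdot,\cdot\>_{E}$-orthogonal, the $\hcD^{0}$-term contributes nothing. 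The remaining term is $\<g_{E}^{-1}\K((0,\xi),(0,\eta),\cdot),(0,\zeta)\>_{E}=\K\big((0,\xi),(0,\eta),(0,\zeta)\big)$; evaluating the expansion (\ref{eq_Ktensor}) at $X=Y=Z=0$, only the summand $W(g^{-1}\xi,g^{-1}\eta,g^{-1}\zeta)$ survives (every other term has a vanishing argument). Hence $V_{\hcD}(\xi,\eta,\zeta)=W(g^{-1}\xi,g^{-1}\eta,g^{-1}\zeta)$, and substituting $\xi=h_{\G}^{-1}(X)=g(X)$ and likewise for $Y,Z$, using $g^{-1}\circ g=1_{TM}$, yields $V_{\hcD}(h_{\G}^{-1}X,h_{\G}^{-1}Y,h_{\G}^{-1}Z)=W(X,Y,Z)$, the first claim.

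For the partial trace, recall from Theorem \ref{thm_scalars} that $W'(Z)=W(\psi_{k},\psi^{k}_{g},Z)$ with $\psi^{k}_{g}=g^{-1}(\psi^{k})$ (summation over $k$); inserting the identity just proved gives $W'(Z)=V_{\hcD}(g(\psi_{k}),\psi^{k},g(Z))$. To match the stated form $V_{\hcD}(\psi^{k},h_{\G}^{-1}(\psi_{k}),h_{\G}^{-1}(Z))=V_{\hcD}(\psi^{k},g(\psi_{k}),g(Z))$, I use that $g(\psi_{k})\otimes\psi^{k}=\psi^{k}\otimes g(\psi_{k})=g$ as symmetric bilinear forms on $TM$ (both have components $g_{ab}$), so contracting the first two slots of $V_{\hcD}$ against either tensor gives the same scalar. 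Finally, the statement for $V_{\cD}$ and $h_{\gm}$ is immediate from covariance: by (\ref{eq_conntwist}) the connections $\cD$ and $\hcD$ are intertwined by the Courant algebroid isomorphism $e^{B}$, which by (\ref{eq_gmasproduct}) also satisfies the isometry condition (\ref{eq_gmisometry}) relating $\G$ and $\gm$; hence $V_{\cD}=V_{\hcD}$ (invariance of $V_{\cD}$ under Courant algebroid isomorphisms) and $h_{\gm}=h_{\G}$ (Lemma \ref{lem_hgcovariant}), so the two formulas transport verbatim.

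There is no genuine obstacle here; the only points requiring care are verifying that the minimal connection $\hcD^{0}$ drops out of $V_{\hcD}$ — which rests on the off-diagonal structure of (\ref{eq_minimalblock}) and on $0\oplus T^{\ast}M$ being isotropic — and the brief symmetrization argument used to pass between the two equivalent expressions for $W'$.
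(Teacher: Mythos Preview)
Your proof is correct and follows essentially the same approach as the paper: compute $V_{\hcD}$ on $(0,\xi),(0,\eta),(0,\zeta)$ using $\rho^{\ast}(\xi)=(0,\xi)$ and $h_{\G}=g^{-1}$, observe that the $\hcD^{0}$ contribution vanishes by isotropy of $0\oplus T^{\ast}M$, read off $W(g^{-1}\xi,g^{-1}\eta,g^{-1}\zeta)$ from (\ref{eq_Ktensor}), and then invoke invariance of $V_{\cD}$ and Lemma \ref{lem_hgcovariant} for the untwisted version. The paper's proof is terser (``it is now straightforward\ldots''), and you have usefully made explicit both why $\hcD^{0}$ drops out and the small symmetry-of-$g$ argument needed to swap the first two slots in the trace formula for $W'$; one cosmetic point is that the vanishing of the $TM$-component of $\hcD^{0}_{(0,\xi)}(0,\eta)$ comes from the entire top row of (\ref{eq_minimalblock}) (both entries carry $X=0$ or $Y=0$), not only the off-diagonal ones.
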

\begin{proof}
In the proof of Lemma \ref{lem_hgcovariant}, we have shown that $h_{\G} = g^{-1}$. Also, we have $\rho^{\ast}(\xi) = (0,\xi)$, for all $\xi \in \df{1}$. It is now straightforward to use (\ref{eq_minimalblock}), (\ref{eq_hcDashcD0andK}) and (\ref{eq_Ktensor}) to show that
\begin{equation}
V_{\hcD}(\xi,\eta,\zeta) = W(g^{-1}(\xi),g^{-1}(\eta),g^{-1}(\zeta)) = W(h_{\G}(\xi),h_{\G}(\eta),h_{\G}(\zeta)). 
\end{equation}
The formula for $W'$ then follows. The rest follows from Lemma \ref{lem_hgcovariant}, which asserts that $h_{\gm} = h_{\G}$. Finally, similarly to the discussion below (\ref{eq_VcD}), $V_{\hcD} = V_{\cD}$. 
\end{proof}
\begin{rem}
It might seem that the propositions \ref{tvrz_J'ascharvf} and \ref{tvrz_K'asjinak} are trivial reformulations fitting this particular example. However, it can happen that one works with an exact Courant algebroid $E$ which is isomorphic to $\gTM$ only after a choice of a suitable isotropic splitting. The above propositions allow us to calculate the fields $J'$ and $W'$ directly, working with the original Courant algebroid structure. 
\end{rem}
\section{Equations of motion}\label{sec_EOM}
This section should be a pinnacle of this little piece of writing. We will discuss how the structures introduced in the previous sections provide a geometrical framework for equations arising from string theory. We leave out any overall constants from the picture. An interested reader could find everything in detail e.g. in the classical books \cite{GSW}, \cite{polchinski1998string} or in the lecture notes \cite{Tong:2009np}. 

One considers the bosonic string moving in the target manifold $M$ endowed with background fields $(g,B,\phi)$, where $g > 0$ is a metric, $B \in \df{2}$ and $\phi \in \cif$ is a scalar field called the \textbf{dilaton}. A crucial property for a consistent quantization of such theory is the so called Weyl invariance. After a non-trivial calculation, one can show that this leads to a necessary condition, the vanishing of the so called beta functions. In physics literature, these are usually written in the index notation, where $\{x^{\mu}\}_{\mu=1}^{\dim{M}}$ are some local coordinates on $M$:\footnote{assuming $\dim{M}=26$}
\begin{align}
\label{eq_betag} \beta(g)_{\mu \nu} & = \Ric{}^{LC}_{\mu \nu} - \frac{1}{4} H'_{\mu \lambda \kappa} {H'_{\nu}}^{\lambda \kappa} + 2 (\partial_{\mu} \phi)_{;\nu} , \\
\label{eq_betaB} \beta(B)_{\mu \nu} & = - \frac{1}{2} {H'^{\lambda}}_{\mu \nu ; \lambda} + {H'_{\mu \nu}}^{\lambda} (\partial_{\lambda}\phi), \\
\label{eq_betaphi} \beta(\phi) & = \RS(g) - \frac{1}{12} H'_{\mu \nu \lambda} H'^{\mu \nu \lambda} + 4 (\partial^{\mu} \phi)_{;\mu} - 4 (\partial_{\mu}\phi) (\partial^{\mu}\phi).
\end{align}
Here $H' = dB$. Note that $\beta(g)$ and $\beta(B)$ form a symmetric and skew-symmetric tensor on $M$, respectively. The Weyl invariance imposes the condition $\beta(g)_{\mu \nu} = \beta(B)_{\mu \nu} = \beta(\phi) = 0$. Note that this equations are not independent, as $\beta(g)_{\mu \nu} = 0$ implies $\RS(g) = \frac{1}{4} H'_{\mu \nu \lambda} H'^{\mu \nu \lambda} - 2 (\partial^{\mu} \phi)_{;\mu}$, which can be used to eliminate $\RS(g)$ in $\beta(\phi)$. For example in \cite{Tong:2009np}, they thus use a different beta function:
\begin{equation}
\beta'(\phi) = - \frac{1}{2} (\partial^{\mu}\phi)_{;\mu} + (\partial_{\mu}\phi)(\partial^{\mu}\phi) - \frac{1}{24} H'_{\mu \nu \lambda} H'^{\mu \nu \lambda}. 
\end{equation}
They are related as $\beta'(\phi) = - \frac{1}{4} (\beta(\phi) - \beta(g)_{\mu \nu} g^{\mu \nu} )$. Replacing $\beta(\phi)$ by $\beta'(\phi)$ in the Weyl invariance condition clearly gives the equivalent set of equations. Definitions (\ref{eq_betag} - \ref{eq_betaphi}) can be rewritten in the index-free notation. One finds
\begin{align}
(\beta(g))(X,Y) & = \Ric{}^{LC}(X,Y) - \frac{1}{2} \<\io_{X}H',\io_{Y}H'\>_{g} + (\cDL_{X}d\phi)(Y) + (\cDL_{Y}d\phi)(X), \\
(\beta(B))(X,Y) & = \frac{1}{2} (\delta_{g}H')(X,Y) + H'(X,Y, \cD^{g} \phi) = \frac{1}{2} e^{2\phi} \delta_{g}(e^{-2\phi} H')(X,Y), \\
\beta(\phi) & = \RS(g) - \frac{1}{2} \<H',H'\>_{g} + 4 \Delta_{g}(\phi) - 4 \rVert \cD^{g}\phi \rVert^{2}_{g},
\end{align}
for all $X,Y \in \vf{}$, where $\delta_{g}$ is a codifferential, $\Delta_{g}$ is the Laplace-Bertrami operator induced by $g$, and $\cD^{g}\phi$ denotes a gradient of the function $\phi$. Amazingly, the set of equations $\beta(g) = \beta(B) = \beta(\phi) = 0$ can be obtained as a set of equations of motion of a classical field theory, the \textbf{low-energy effective action} of the bosonic string. From now on, we can consider a more general definition, where $H' = H + dB$ for some closed $3$-form $H \in \df{3}$. 

\begin{tvrz} \label{tvrz_effaction}
Let $M$ be a manifold, and let $S$ be an action functional defined by
\begin{equation} \label{eq_effaction}
S[g,B,\phi] = \int_{M} e^{-2\phi} \{ \RS(g) - \frac{1}{2} \<H',H'\>_{g} + 4 \rVert \cD^{g} \phi \rVert^{2}_{g} \} \cdot \vol_{g},
\end{equation}
where $g$ is a metric, $B \in \df{2}$, and $\phi \in \cif$. Then $(g,B,\phi)$ is an extremal of the functional $S$, if and only if $\beta(g) = \beta(B) = \beta(\phi) = 0$. We say that $(g,B,\phi)$ \textbf{satisfy the equations of motion} of a field theory given by the action $S$. 
\end{tvrz}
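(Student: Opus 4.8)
The plan is to compute the three independent variational derivatives of $S$ — with respect to $B$, to the dilaton $\phi$, and to the metric $g$ — and to identify the resulting Euler–Lagrange equations with $\beta(B)$, $\beta(\phi)$ and a trace-modification of $\beta(g)$. Start with the easiest: $B$ enters $S$ only through $H' = H + dB$ inside the term $-\tfrac12\langle H',H'\rangle_g$, with $\delta H' = d(\delta B)$. Using the pairing $\alpha\^\ast_g\beta = \langle\alpha,\beta\rangle_g\,\vol_g$ and integrating by parts against the weight $e^{-2\phi}\vol_g$ (so that $\delta_g$ appears together with the extra $\io_{\cD^g\phi}$ term coming from differentiating $e^{-2\phi}$), one finds $\delta S/\delta B \propto \delta_g(e^{-2\phi}H') = e^{-2\phi}\big(\delta_g H' - 2\,\io_{\cD^g\phi}H'\big)$, which is precisely $-2e^{-2\phi}\beta(B)$ in the index-free form quoted before the proposition. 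Hence $\delta_B S = 0 \iff \beta(B) = 0$.

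Next the $\phi$-variation: $\delta S = \int_M e^{-2\phi}\big[-2\,\delta\phi\,(\RS(g) - \tfrac12\langle H',H'\rangle_g + 4\lVert\cD^g\phi\rVert_g^2) + 8\,\langle\cD^g\phi,\cD^g\delta\phi\rangle_g\big]\vol_g$. Integrating the last term by parts and using $\Div_g\big(e^{-2\phi}\cD^g\phi\big) = e^{-2\phi}\big(\Delta_g\phi - 2\lVert\cD^g\phi\rVert_g^2\big)$, the $\lVert\cD^g\phi\rVert_g^2$ contributions recombine and one is left with $\delta S = -2\int_M e^{-2\phi}\,\delta\phi\,\big(\RS(g) - \tfrac12\langle H',H'\rangle_g + 4\Delta_g\phi - 4\lVert\cD^g\phi\rVert_g^2\big)\vol_g$, i.e. $\delta S = -2\int_M e^{-2\phi}\,\delta\phi\,\beta(\phi)\,\vol_g$. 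Thus $\delta_\phi S = 0 \iff \beta(\phi) = 0$.

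The metric variation is the substantial step. I would use the standard identities $\delta\vol_g = -\tfrac12 g_{\mu\nu}\,\delta g^{\mu\nu}\,\vol_g$, the Palatini formula $\delta\RS(g) = \Ric^{LC}_{\mu\nu}\,\delta g^{\mu\nu} + \big(g_{\mu\nu}\Delta_g - \cDL_\mu\cDL_\nu\big)\delta g^{\mu\nu}$, the variation $\delta\langle H',H'\rangle_g$ (which, on contracting three copies of $g^{-1}$ and including the $\delta\vol_g$ piece, yields $\langle\io_X H',\io_Y H'\rangle_g$-type terms together with a $-\tfrac12 g_{\mu\nu}\langle H',H'\rangle_g$ term), and $\delta\lVert\cD^g\phi\rVert_g^2 = \partial_\mu\phi\,\partial_\nu\phi\,\delta g^{\mu\nu}$. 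The essential mechanism — and the reason the Hessian-of-dilaton terms of $\beta(g)$ appear at all — is that the total-divergence term $\big(g_{\mu\nu}\Delta_g - \cDL_\mu\cDL_\nu\big)\delta g^{\mu\nu}$ in $\delta\RS(g)$ no longer integrates away because of the prefactor $e^{-2\phi}$: integrating by parts twice transfers both covariant derivatives onto $e^{-2\phi}$, contributing $-2e^{-2\phi}\big(g_{\mu\nu}\Delta_g\phi - \cDL_\mu\cDL_\nu\phi\big) + 4e^{-2\phi}\big(g_{\mu\nu}\lVert\cD^g\phi\rVert_g^2 - \partial_\mu\phi\,\partial_\nu\phi\big)$ to the coefficient of $\delta g^{\mu\nu}$. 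Collecting all contributions, the $\partial\phi\,\partial\phi$ terms cancel and the coefficient of $e^{-2\phi}\delta g^{\mu\nu}$ reduces to the trace-reversed combination $\beta(g)_{\mu\nu} - \tfrac12 g_{\mu\nu}\beta(\phi)$, so $\delta_g S = 0 \iff \beta(g)_{\mu\nu} = \tfrac12 g_{\mu\nu}\beta(\phi)$.

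Assembling: if $(g,B,\phi)$ is an extremal, all three variations vanish, giving $\beta(B) = 0$, $\beta(\phi) = 0$, and then the metric equation yields $\beta(g)_{\mu\nu} = \tfrac12 g_{\mu\nu}\beta(\phi) = 0$; conversely $\beta(g) = \beta(B) = \beta(\phi) = 0$ makes each variation vanish. The only real difficulty is the bookkeeping in the metric variation — the combinatorial prefactors in $\delta\langle H',H'\rangle_g$ and, above all, the double integration by parts of the Palatini boundary term against $e^{-2\phi}$, which is exactly what produces the $(\cDL d\phi)(X,Y)$ terms of $\beta(g)$; everything else is routine. One should also remark that the equations $\beta(g)=\beta(B)=\beta(\phi)=0$ are equivalent to the ones obtained from the modified dilaton beta function $\beta'(\phi)$, since $\beta'(\phi) = -\tfrac14(\beta(\phi) - \Tr_g\beta(g))$, so the precise normalization chosen for the dilaton equation is immaterial.
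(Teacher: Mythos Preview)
Your approach is essentially identical to the paper's: it too computes the three first-order variations and finds that $\delta_{\phi}S$ is proportional to $\beta(\phi)$, $\delta_{B}S$ is proportional to $\beta(B)$, and $\delta_{g}S$ is proportional to the trace-reversed combination $\beta(g)_{\mu\nu} - \tfrac{1}{2}g_{\mu\nu}\beta(\phi)$, then concludes by the same triangular argument. The paper in fact gives less detail than you do (it explicitly declines the ``several pages long'' metric variation and just states the outcomes), so your sketch of the Palatini term being thrown back onto $e^{-2\phi}$ by two integrations by parts is a welcome elaboration; the only blemish is a sign slip in your intermediate formula for $\delta_{g}(e^{-2\phi}H')$ --- the interior-product term should carry $+2\,\io_{\cD^{g}\phi}H'$, matching $2e^{-2\phi}\beta(B)$ rather than $-2e^{-2\phi}\beta(B)$ --- but this does not affect the conclusion.
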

\begin{proof}
We do not provide a full proof of this assertion, as it requires a considerable amount of work to prove how exactly $S$ changes under variations. We greatly recommend the book \cite{fecko2006differential} for a detailed treatment of variational problems coming from the geometrical formulation of classical field theories. First, let $\vartheta \in \cif$ be any function vanishing on the boundary $\partial M$, and let $\epsilon > 0$ be small real parameter. One finds 
\begin{equation}
S[g,B,\phi'] = S[g,B,\phi] - 2 \epsilon \int_{M} e^{-2\phi} \{ \beta(\phi) \cdot \vartheta \} \cdot \vol_{g} + o(\epsilon^{2}).
\end{equation}
Next, let $g' = g + \epsilon h$, where $h$ is arbitrary symmetric bilinear form vanishing on $\partial M$. Then\footnote{This particular "then" requires a calculation several pages long.}
\begin{equation}
S[g',B,\phi] = S[g,B,\phi] - \epsilon \int_{M} e^{-2\phi} h^{\mu \nu} \{ \beta(g)_{\mu \nu} - \frac{1}{2} \beta(\phi) g_{\mu \nu} \} \cdot \vol_{g} + o(\epsilon^{2}),
\end{equation}
where in $h^{\mu \nu}$, the indices are raised using $g$.  Finally, let $B' = B + \epsilon C$ for any $C \in \df{2}$ vanishing on $\partial M$. One obtains the expression 
\begin{equation}
S[g,B',\phi] = S[g,B,\phi] - 2 \epsilon \int_{M} e^{-2\phi} \<C, \beta(B)\>_{g} \cdot \vol_{g} + o(\epsilon^{2}).
\end{equation}
It follows that $(g,B,\phi)$ is an extremal of $S$, if an only if all terms containing the first power of $\epsilon$ in all three above expressions vanish for all $\theta$, $h$ and $C$, which is in turn equivalent to $\beta(\phi) = 0$, $\beta(g)_{\mu \nu} - \frac{1}{2} \beta(\phi) g_{\mu \nu} = 0$, and $\beta(B) = 0$. This finishes the proof. 
\end{proof}

As the reader familiar with the previous sections already suspects, we can now formulate the equations of motion in terms of the Courant algebroid connections. We already have  prepared all the tools required to prove the corresponding statement.  It is now an easy consequence of the previous theorems. Let us note that the idea to describe the vanishing of beta functions using the generalizations of connections and their curvatures is not new \cite{2013arXiv1304.4294G}. In double field theory \cite{Hull:2009mi}, this is one of central ideas, see in particular \cite{Hohm:2012mf} or some survey papers \cite{Aldazabal:2013sca, Hohm:2013bwa}. Let us now formulate and prove the central theorem relating beta functions to Courant algebroid connections. 

\begin{theorem} \label{thm_central}
Let $(E,\rho,\<\cdot,\cdot\>_{E},[\cdot,\cdot]_{E})$ be the Courant algebroid on $E = \gTM \equiv TM \oplus T^{\ast}M$, described in Example \ref{ex_Hdorfman}. Let $\gm$ be a generalized metric on $\gTM$ corresponding to a pair $(g,B)$, where $g > 0$ is a Riemannian metric, and $B \in \df{2}$. 

Let $\cD \in \LC(\gTM,\gm)$ be a Levi-Civita connection with vanishing characteristic vector field (\ref{eq_charvf}), that is $X_{\cD} = 0$. Let $\phi \in \cif$ be a scalar field, and assume that 
\begin{equation} \label{eq_vcDisdphi}
V_{\cD}( \psi^{k}, h_{\gm}^{-1}(\psi_{k}), h_{\gm}^{-1}(Z)) = (d\phi)(Z), 
\end{equation}
for all $Z \in \vf{}$, where $V_{\cD} \in \vf{} \otimes \vf{2}$ is defined by (\ref{eq_VcD}) and $h_{\gm}$ is the fiber-wise metric defined by (\ref{eq_hgm}). Here, $\{\psi_{k}\}_{k=1}^{\dim(M)}$ is an arbitrary local frame on $TM$. Let $\dim(M) > 1$. 

Then $(g,B,\phi)$ satisfies the equations of motion, $\beta(g) = \beta(B) = \beta(\phi) = 0$, if and only if its Ricci scalar curvature $\RS_{\gm}$ vanishes, and it is Ricci compatible with $\gm$, that is 
\begin{equation}
\RS_{\gm} = 0, \ \ \Ric(V_{+},V_{-}) = 0, 
\end{equation}
where $V_{\pm} \subseteq \gTM$ are two vector subbundles induced by generalized metric $\gm$. 
\end{theorem}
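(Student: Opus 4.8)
The plan is to transport everything to the twisted picture of Section~\ref{sec_exact} and then read off the answer from Theorems~\ref{thm_scalars} and~\ref{thm_Riccomp}. First I would replace $\cD$ by the connection $\hcD$ obtained from it by the $e^{B}$-twist~(\ref{eq_conntwist}); then $\hcD \in \LC(\gTM,\G)$ for the $H'$-twisted Dorfman bracket with $H' = H + dB$. Since $e^{B}$ is a Courant algebroid isomorphism preserving generalized metrics, Proposition~\ref{tvrz_RGcovariance} and Lemma~\ref{lem_riccompcovariance} give $\RS_{\gm} = \widehat{\RS}_{\G}$ and that $\cD$ is Ricci compatible with $\gm$ if and only if $\hcD$ is Ricci compatible with $\G$; likewise $X_{\hcD} = X_{\cD}$ by Lemma~\ref{lem_XcDisavf}, $V_{\hcD} = V_{\cD}$, and $h_{\G} = h_{\gm}$ by Lemma~\ref{lem_hgcovariant}. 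It therefore suffices to prove the statement for $\hcD$ and $\G$.

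Next I would write $\hcD$ in the parametrized form~(\ref{eq_hcDashcD0andK}),~(\ref{eq_Ktensor}) and pin down the two partial traces. By Proposition~\ref{tvrz_J'ascharvf}, $J' = \tfrac{1}{2} X_{\hcD} = \tfrac{1}{2} X_{\cD} = 0$ by the hypothesis $X_{\cD} = 0$. By Proposition~\ref{tvrz_K'asjinak}, using $h_{\G} = g^{-1}$, one has $W'(Z) = V_{\hcD}(\psi^{k}, h_{\G}^{-1}(\psi_{k}), h_{\G}^{-1}(Z)) = V_{\cD}(\psi^{k}, h_{\gm}^{-1}(\psi_{k}), h_{\gm}^{-1}(Z)) = (d\phi)(Z)$ by the assumption~(\ref{eq_vcDisdphi}). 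So the only data of $\hcD$ entering the scalar curvatures and the Ricci compatibility are $J' = 0$ and $W' = d\phi$.

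Now I would substitute $J' = 0$, $W' = d\phi$ into~(\ref{eq_hRSGfinal}). Using the standard identities $\Div_{g}(d\phi) = \Delta_{g}\phi$ and $\rVert d\phi\rVert_{g} = \rVert \cD^{g}\phi\rVert_{g}$, the right-hand side of~(\ref{eq_hRSGfinal}) becomes $\RS(g) - \tfrac{1}{2}\<H',H'\>_{g} + 4\Delta_{g}\phi - 4\rVert\cD^{g}\phi\rVert^{2}_{g}$, which is exactly $\beta(\phi)$; hence $\RS_{\gm} = \widehat{\RS}_{\G} = \beta(\phi)$, so $\RS_{\gm} = 0$ is equivalent to $\beta(\phi) = 0$. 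Next I would plug $J' = 0$, $W' = d\phi$ into the equations~(\ref{eq_hcDRiccomp2}) and~(\ref{eq_hcDRiccomp3}) of Theorem~\ref{thm_Riccomp}: the first turns verbatim into the index-free form of $\beta(g)(X,Y) = 0$, while the second, after rewriting $\<\io_{X}\io_{Y}H',d\phi\>_{g} = -H'(X,Y,\cD^{g}\phi)$, becomes $\tfrac{1}{2}(\delta_{g}H')(X,Y) + H'(X,Y,\cD^{g}\phi) = 0$, i.e. $\beta(B)(X,Y) = 0$. Thus $\hcD$ is Ricci compatible with $\G$ if and only if $\beta(g) = \beta(B) = 0$. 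Combining this with the previous display, $\RS_{\gm} = 0$ together with $\Ric(V_{+},V_{-}) = 0$ is equivalent to $\beta(g) = \beta(B) = \beta(\phi) = 0$, i.e. to the equations of motion of Proposition~\ref{tvrz_effaction}.

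Essentially all of the substance is already carried by the earlier Theorems~\ref{thm_scalars},~\ref{thm_Riccomp} and Propositions~\ref{tvrz_J'ascharvf},~\ref{tvrz_K'asjinak}, so what remains is mostly careful bookkeeping: matching signs and normalizations (notably in the $\io$-contractions appearing in~(\ref{eq_hcDRiccomp3}) and in the identification of $\Div_{g}(d\phi)$ with $\Delta_{g}\phi$), and noting that the hypothesis $\dim M > 1$ serves to discard the degenerate one-dimensional case, where the only Levi-Civita connection is the minimal one, $W$ is forced to vanish, and the statement collapses to a triviality. The one genuine checking step is verifying that~(\ref{eq_hcDRiccomp2}) and~(\ref{eq_hcDRiccomp3}) with $J' = 0$, $W' = d\phi$ reproduce $\beta(g)$ and $\beta(B)$ on the nose, which is where the index-free expressions for the beta functions recorded just before Proposition~\ref{tvrz_effaction} are used.
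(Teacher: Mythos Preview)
Your proposal is correct and follows essentially the same route as the paper: pass to the $e^{B}$-twisted picture, use Propositions~\ref{tvrz_J'ascharvf} and~\ref{tvrz_K'asjinak} to read off $J'=0$ and $W'=d\phi$ from the hypotheses, and then invoke Theorems~\ref{thm_scalars} and~\ref{thm_Riccomp} to match $\RS_{\gm}$ and the Ricci compatibility with $\beta(\phi)$, $\beta(g)$, $\beta(B)$. The only point the paper treats more explicitly is \emph{existence}: it first constructs a concrete $\cD$ satisfying the hypotheses by exhibiting $J=0$ and the tensor $W$ in~(\ref{eq_Wchoice}), which is where $\dim M>1$ actually enters (so that $W'=d\phi$ is achievable); your remark about the one-dimensional case gestures at this but does not quite say that the hypothesis guarantees the theorem is non-vacuous.
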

\begin{proof}
First, one has to show that some connection $\cD$ satisfying the assumptions of the theorem exists. Using the propositions \ref{tvrz_J'ascharvf} and \ref{tvrz_K'asjinak} and the classification of Levi-Civita connections in the previous section, this is equivalent to finding $J$ and $W$, such that $J' = 0$ and $W' = d\phi$. Clearly, we can choose $J = 0$ and we can define $W$ to have the form 
\begin{equation} \label{eq_Wchoice}
W(X,Y,Z) = (1/\dim(M)) \{ g(X,Y) \<d\phi,Z\> - g(X,Z) \< d\phi, Y\> \},
\end{equation}
for all $X,Y,Z \in \vf{}$. See that $W_{a} = 0$. Taking the partial trace, we find $W' = d\phi$. With this choice of $W$ and $J$, we obtain the connection $\cD$ of required properties. 

The rest of the proof is simple. First, one employs Theorem \ref{thm_scalars}, in particular the equation (\ref{eq_hRSGfinal}). We find that for $\cD$ satisfying the assumptions, we have $\RS_{\gm} = \widehat{\RS}_{\G} = \beta(\phi)$. We know, cf. Theorem \ref{thm_Riccomp}, that $\cD$ is Ricci compatible with $\gm$ iff the corresponding connection (\ref{eq_conntwist}) $\hcD$ is Ricci compatible with $\G$. Using Theorem \ref{thm_Riccomp}, we see that $\hcD$ is Ricci compatible with $\G$ if and only if (\ref{eq_hcDRiccomp2}, \ref{eq_hcDRiccomp3}) hold. But for $J' = 0$ and $W' = d\phi$, the equation (\ref{eq_hcDRiccomp2}) becomes $\beta(g) = 0$ and (\ref{eq_hcDRiccomp3}) becomes $\beta(B) = 0$. In fact, note that 
\begin{equation}
\Ric(\fPsi^{0}_{+}(X),\fPsi^{0}_{-}(Y)) = (\beta(g))(X,Y) - (\beta(B))(X,Y),
\end{equation}
for all $X,Y \in \vf{}$. This finishes the proof. 
\end{proof}
Thus we have shown that vanishing beta functions or equivalently the equations of motion of the low-energy effective action can be fully reformulated in terms of Courant algebroid connections. Background fields $(g,B)$ come from the generalized metric $\gm$, whereas the dilaton field $\phi$ enters through the connection and equation (\ref{eq_vcDisdphi}). At this moment, there is no clear geometrical interpretation of the conditions imposed on the connection. 
\begin{rem}
Note that for any $\cD \in \LC(\gTM,\gm)$ with $X_{\cD} = 0$, one has $\RS_{E} = 0$. However, these two conditions are not equivalent. 
\end{rem}
\section{Application: Background independent gauge} \label{sec_BIG}
To demonstrate the power of the geometrical interpretation of equations of motion given by Theorem \ref{thm_central}, we derive an equivalence of the low-energy effective action (\ref{eq_effaction}) with a theory whose fields consist of a metric, twisted Poisson bivector and a dilaton. In fact, this bivector is assumed to be non-degenerate, hence, according to \cite{Blumenhagen:2012nt}, we should call this action a symplectic gravity. Before wading through the actual proof, we have to discuss the following generalization of Section \ref{sec_exact}. 

\begin{rem} \label{rem_ADorfmanatd}

Assume that $(A,a,[\cdot,\cdot]_{A})$ is a Lie algebroid over a base manifold $M$. Let $\Li{}^{A}$ and $d^{A}$ be operators on the exterior algebra $\Omega^{\bullet}(A)$ induced by the bracket $[\cdot,\cdot]_{A}$ as in the discussion below (\ref{eq_rhoishom}). 

One can form a Courant algebroid structure on $E = A \oplus A^{\ast}$ similar to the Example \ref{ex_Hdorfman}. The fiber-wise metric $\<\cdot,\cdot\>_{E}$ is defined using the same formula as (\ref{eq_canpairing}), the anchor takes the form $\rho = a \circ pr_{A}$, and one sets $[\cdot,\cdot]_{E}$ to be the \textbf{$H$-twisted $A$-Dorfman bracket}:
\begin{equation} \label{eq_AHDorfman}
[(\varphi,\vartheta),(\varphi',\vartheta')]_{E} =  ([\varphi,\varphi]_{A}, \Li{\varphi}^{A} \vartheta' - \io_{\varphi'} (d^{A}\varphi) - H(\varphi,\varphi',\cdot)),
\end{equation}
for all $(\varphi,\vartheta), (\varphi',\vartheta') \in \Gamma(E)$, where $H \in \Omega^{3}(A)$ satisfies $d^{A}H = 0$. It is easy to check that $(E,\rho,\<\cdot,\cdot\>_{E},[\cdot,\cdot]_{E})$ forms a Courant algebroid. It is regular (or transitive) if and only if the original Lie algebroid is. It is exact if and only if $a$ is a vector bundle isomorphism. It follows that one can now redo everything derived in Section \ref{sec_exact}, with $\gm$ corresponding to a pair $(g_{A},B_{A})$, where $g_{A}$ is now a positive definite fiber-wise metric on $A$, and $B_{A} \in \Omega^{2}(A)$. We get $H' = H + d^{A}B_{A}$. 

Instead of the manifold Levi-Civita connection, now one uses a Lie algebroid Levi-Civita connection $\cDL: \Gamma(A) \times \Gamma(A) \rightarrow \Gamma(A)$ with respect to $g_{A}$, given by the usual formula 
\begin{equation} \label{eq_cDLA}
\cDL_{\varphi}\varphi' = \frac{1}{2}\{ [\varphi,\varphi']_{A} + g^{-1}_{A}( \Li{\varphi}^{A}(g_{A}(\varphi')) + \io_{\varphi'}(d^{A}(g_{A}(\varphi))) \},
\end{equation}
for all $\varphi,\varphi' \in \Gamma(A)$. This connection is torsion-free in the usual sense, that is 
\begin{equation}
\cDL_{\varphi} \varphi' - \cDL_{\varphi'}\varphi - [\varphi,\varphi']_{A} = 0,
\end{equation}
for all $\varphi,\varphi' \in \Gamma(A)$. Also, one can define the corresponding curvature operator $R^{LC}$ using the usual formula
\begin{equation}
R^{LC}(\varphi,\varphi')\varphi'' = \cDL_{\varphi} \cDL_{\varphi'}\varphi'' - \cDL_{\varphi'} \cDL_{\varphi}\varphi'' - \cDL_{[\varphi,\varphi']_{A}}\varphi'' ,
\end{equation}
for all $\varphi,\varphi',\varphi'' \in \Gamma(A)$, and define the corresponding Ricci tensor $\Ric{}^{LC} \in \T_{2}^{0}(A)$ and the Ricci scalar $\RS(g_{A}) \in \cif$. All in all, $\cDL$ and its induced quantities $R^{LC}$, $\Ric{}^{LC}$ and $\RS(g_{A})$ enjoy the same properties as the ones for usual Levi-Civita connection. 

We can now, word for word, replicate the whole Section \ref{sec_exact} for $E = A \oplus A^{\ast}$, hence proving that $\LC(A \oplus A^{\ast},\gm) \neq 0$, where all connections are uniquely parametrized by a pair of tensors $J_{A} \in \mathfrak{X}(A) \otimes \mathfrak{X}^{2}(A)$ and $W_{A} \in \Omega^{1}(A) \otimes \Omega^{2}(A)$, such that $(J_{A})_{a} = (W_{A})_{a} = 0$. The only exceptions are Propositions \ref{tvrz_J'ascharvf} and \ref{tvrz_K'asjinak}, as the anchor $\rho = a \circ pr_{A}$ is not in general surjective. Instead, we get the relation
\begin{equation} \label{eq_J'AasXcD}
\< \xi, X_{\cD} \> = J'_{A}( a^{T}(\xi)), 
\end{equation}
for all $\xi \in \df{1}$. Moreover, the bilinear form $h_{\gm}$ is just the pullback of the fiber-wise metric $g^{-1}_{A}$ on $A^{\ast}$, that is $h_{\gm}(\xi,\eta) = g^{-1}_{A}(a^{T}(\xi),a^{T}(\eta))$. It is non-degenerate if and only if $a$ is surjective. Consequently, in general,  we only get the expression 
\begin{equation}
V_{\cD}(\xi,\eta,\zeta) = W_{A}(g_{A}^{-1}a^{T}(\xi), g_{A}^{-1}a^{T}(\eta), g_{A}^{-1}a^{T}(\zeta)),
\end{equation} \label{eq_vcDasWA}
for all $\xi,\eta,\zeta \in \df{1}$. Naturally, there is no direct relation for the (partial) trace. 
\end{rem}
Now, let us turn our attention back to physics. Assume that $(g,B)$ is background on the manifold $M$ equivalently described by the generalized metric $\gm$. Assume that $B$ is an almost symplectic $2$-form, that is $B \in \Hom(TM,T^{\ast}M)$ is a vector bundle isomorphism. Let $\theta \in \vf{2}$ be a bivector on $M$ defined to be its inverse, $\theta = B^{-1}$. It is a well-known fact that such $\theta$ is a $dB$-twisted Poisson tensor on $M$, defined in \cite{Severa:2001qm}. For any $\theta \in \vf{2}$, the Schouten-Nijenhuis bracket $[\theta,\theta]_{S} \in \vf{3}$ of $\theta$ with itself can be written as 
\begin{equation} \label{eq_schouten}
\frac{1}{2} [\theta,\theta]_{S}(\xi,\eta,\cdot) = [\theta(\xi),\theta(\eta)] - \theta( \Li{\theta(\xi)}\eta - \io_{\theta(\eta)}d\xi),
\end{equation}
for all $\xi, \eta \in \df{1}$. See e.g. Kosmann-Schwarzbach \cite{Kosmann1996} for a more detailed discussion. Plugging in $\xi = B(X)$ and $\eta = B(Y)$, the right-hand side gives $-dB(X,Y,\theta(\cdot))$ and we obtain 
\begin{equation} \label{eq_Poissontwisted}
\frac{1}{2} [\theta,\theta]_{S}(\xi,\eta,\zeta) = - dB(\theta(\xi),\theta(\eta),\theta(\zeta)),
\end{equation}
for all $\xi,\eta,\zeta \in \df{1}$. This is the defining equation of a \textbf{dB-twisted Poisson manifold}. Equivalently, for any $H \in \df{3}$ we may define a bracket $[\cdot,\cdot]_{\theta}^{H}: \df{1} \times \df{1} \rightarrow \df{1}$ as
\begin{equation}
[\xi,\eta]^{H}_{\theta} = \Li{\theta(\xi)} \eta - \io_{\theta(\eta)} d\xi + H(\theta(\xi),\theta(\eta),\theta(\cdot)),
\end{equation}
for all $\xi,\eta \in \df{1}$. Sometimes, it is called an \textbf{$H$-twisted Koszul bracket}. The twisted Jacobi identity (\ref{eq_Poissontwisted}) can be now reformulated as follows. 
\begin{tvrz} \label{tvrz_Koszul}
Any $\theta \in \vf{2}$ satisfies (\ref{eq_Poissontwisted}) if and only if the triple $(T^{\ast}M, \theta, [\cdot,\cdot]_{\theta}^{dB})$ defines a Lie algebroid on $T^{\ast}M$. 
\end{tvrz}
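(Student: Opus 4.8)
The plan rests on two facts. First, the bracket $[\cdot,\cdot]_{\theta}^{dB}$ is $\R$-bilinear by construction, and a short computation with the Cartan formulas shows that, for \emph{any} $\theta \in \vf{2}$, it is skew-symmetric and obeys the Leibniz rule (\ref{eq_lebnizrule}) with anchor $\theta$: the symmetrization of $\Li{\theta(\xi)}\eta - \io_{\theta(\eta)}d\xi$ in $(\xi,\eta)$ equals $d(\<\eta,\theta(\xi)\> + \<\xi,\theta(\eta)\>) = 0$ by skew-symmetry of $\theta$, the $dB$-term is manifestly skew, and the Leibniz rule uses only $\cif$-linearity of $\theta \in \Hom(T^{\ast}M,TM)$ together with $dB$ being a genuine $3$-form. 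Hence $(T^{\ast}M,\theta,[\cdot,\cdot]_{\theta}^{dB})$ is a Lie algebroid if and only if the Leibniz identity (\ref{eq_leibnizidentity}) holds, and the proposition reduces to the equivalence of (\ref{eq_Poissontwisted}) with the Jacobi identity for $[\cdot,\cdot]_{\theta}^{dB}$.

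Second — and this already settles the direction ``Lie algebroid $\Rightarrow$ (\ref{eq_Poissontwisted})'' — expanding $\theta([\xi,\eta]_{\theta}^{dB})$ with the explicit formula for the Koszul bracket and substituting the Schouten–Nijenhuis identity (\ref{eq_schouten}) for the term $\theta(\Li{\theta(\xi)}\eta - \io_{\theta(\eta)}d\xi)$, one obtains, after pairing with an arbitrary $\zeta \in \df{1}$,
\[
\<\zeta,\ [\theta(\xi),\theta(\eta)] - \theta\big([\xi,\eta]_{\theta}^{dB}\big)\> = \frac{1}{2}[\theta,\theta]_{S}(\xi,\eta,\zeta) + dB\big(\theta(\xi),\theta(\eta),\theta(\zeta)\big),
\]
where the right-hand side has been simplified using skew-symmetry of $\theta$ (so the signs follow the conventions of (\ref{eq_schouten})–(\ref{eq_Poissontwisted})). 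If the triple is a Lie algebroid then it is in particular a Leibniz algebroid, so by Lemma \ref{lem_rhoishom} its anchor $\theta$ is a bracket morphism, the left-hand side vanishes identically, and therefore so does the right-hand side — which is exactly (\ref{eq_Poissontwisted}).

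For the converse ``(\ref{eq_Poissontwisted}) $\Rightarrow$ Lie algebroid'', the substantive half, I would avoid a direct Jacobiator computation and argue via Dirac structures. Put on $\gTM = TM \oplus T^{\ast}M$ the $dB$-twisted Dorfman bracket of Example \ref{ex_Hdorfman}; since $dB$ is closed this is a genuine Courant algebroid. Let $L_{\theta} \subseteq \gTM$ be the graph of $\theta$; from skew-symmetry of $\theta$ and the form (\ref{eq_canpairing}) of the pairing, $L_{\theta}$ is isotropic, and being of rank $\dim M$ it is maximal isotropic. Computing $[(\theta(\xi),\xi),(\theta(\eta),\eta)]_{D}^{dB}$ from the Dorfman formula and once more using (\ref{eq_schouten}) to identify its $TM$-component with $\theta$ applied to its $T^{\ast}M$-component, one finds that the bracket stays in $\Gamma(L_{\theta})$ for all $\xi,\eta \in \df{1}$ precisely when (\ref{eq_Poissontwisted}) holds; i.e., under (\ref{eq_Poissontwisted}) the subbundle $L_{\theta}$ is involutive, hence a Dirac structure. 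As recalled in Section \ref{sec_connections}, an involutive maximal isotropic subbundle of a Courant algebroid is automatically a Lie algebroid for the restricted anchor and bracket (the symmetric part (\ref{eq_courantsympart}) of the bracket vanishes on isotropic sections, so the restriction is skew, while (\ref{eq_lebnizrule}) and (\ref{eq_leibnizidentity}) restrict for free). Transporting this Lie-algebroid structure to $T^{\ast}M$ along the bundle isomorphism $\xi \mapsto (\theta(\xi),\xi)$ and comparing the resulting restricted Dorfman bracket with the definition of $[\cdot,\cdot]_{\theta}^{dB}$, one identifies the transported anchor with $\theta$ and the transported bracket with $[\cdot,\cdot]_{\theta}^{dB}$ — up to the global sign flip $(\rho,[\cdot,\cdot]) \mapsto (-\rho,-[\cdot,\cdot])$, which preserves all Lie-algebroid axioms. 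This gives the claim.

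The step requiring the most care is the converse: carrying out the Dorfman computation on sections of $L_{\theta}$, correctly matching its $TM$- and $T^{\ast}M$-components through (\ref{eq_schouten}), and tracking the sign conventions (the orientation of the graph $L_{\theta}$ and the placement of signs in the $H$-term of $[\cdot,\cdot]_{\theta}^{dB}$) so that ``$L_{\theta}$ involutive'' comes out equivalent to the twisted-Poisson condition with the stated sign. A self-contained alternative, not using $\gTM$, is to verify the Jacobi identity for $[\cdot,\cdot]_{\theta}^{dB}$ directly with the Cartan calculus on $\df{\bullet}$: after repeated use of $d^{2} = 0$ the Jacobiator collapses to a term proportional to $\frac{1}{2}[\theta,\theta]_{S} + dB(\theta(\cdot),\theta(\cdot),\theta(\cdot))$, which vanishes iff (\ref{eq_Poissontwisted}) holds (cf. \cite{Kosmann1996}).
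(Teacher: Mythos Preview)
Your proof is correct. The paper's own proof is a two-line sketch: skew-symmetry and the Leibniz rule are declared obvious, and the Jacobi identity is asserted to reduce to (\ref{eq_Poissontwisted}) via (\ref{eq_schouten}) by direct computation --- essentially the ``self-contained alternative'' you append at the end.

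Your main argument takes a genuinely different route. For the forward implication you extract only the anchor-morphism consequence (Lemma~\ref{lem_rhoishom}) rather than computing the full Jacobiator, which is a clean shortcut: the displayed identity pairing $[\theta(\xi),\theta(\eta)]-\theta([\xi,\eta]_\theta^{dB})$ with $\zeta$ isolates the twisted-Poisson defect immediately. For the converse you invoke the Dirac-structure interpretation: the graph $L_\theta\subseteq\gTM$ is maximal isotropic, and involutive under $[\cdot,\cdot]_D^{dB}$ precisely when (\ref{eq_Poissontwisted}) holds, so it inherits a Lie-algebroid structure which you then transport to $T^{\ast}M$. This is the \v{S}evera--Weinstein viewpoint and connects naturally with the remark in Section~\ref{sec_connections} that Dirac structures are Lie algebroids. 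What it buys is conceptual clarity and an explanation of \emph{why} the twisted Koszul bracket satisfies Jacobi (it is a Dorfman bracket restricted to an isotropic subbundle); what the paper's direct route buys is brevity. The bookkeeping you flag --- matching the $T^{\ast}M$-component of the restricted Dorfman bracket with the paper's $[\cdot,\cdot]_\theta^{dB}$, including the placement of $\theta$ in the $dB$-term and the attendant sign --- is exactly the Cartan-calculus computation the paper hides behind ``readily recast'', so in the end the honest labour is comparable.
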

\begin{proof}
Leibniz rule (\ref{eq_lebnizrule}) and the skew-symmetry of $[\cdot,\cdot]_{\theta}^{dB}$ are clearly valid for any $\theta \in \vf{2}$. Jacobi identity (\ref{eq_leibnizidentity}) can be, using (\ref{eq_schouten}), readily recast into the form (\ref{eq_Poissontwisted}). 
\end{proof}

The existence of a twisted Poisson manifold is now used to define a new kind of orthogonal transformation on $\gTM$, and one examines the corresponding "twisted" Courant algebroid. It turns out that the result fits precisely into the framework of Remark \ref{rem_ADorfmanatd} for $A = T^{\ast}M$ endowed with the Lie algebroid structure described in Proposition \ref{tvrz_Koszul}. 

\begin{tvrz}
Let $\theta \in \vf{2}$ satisfy (\ref{eq_Poissontwisted}). Define $\F_{\theta} \in \End(\gTM)$ as 
\begin{equation}
\F_{\theta}(X,\xi) = (\theta(\xi), \xi - B(X))
\end{equation}
for all $(X,\xi) \in \gTM$. Let $(\gTM,\rho,\<\cdot,\cdot\>_{E},[\cdot,\cdot]_{E})$ be the Courant algebroid structure defined by $H$-twisted Dorfman bracket described in Example \ref{ex_Hdorfman}. Define a new bracket $[\cdot,\cdot]_{E}^{\theta}$ and $\rho_{\theta} \in \Hom(\gTM,TM)$ as a  "twist" of the original structure by the map $\F_{\theta}$: 
\begin{equation}
[\psi,\psi']_{E}^{\theta} = \F_{\theta}^{-1}[\F_{\theta}(\psi), \F_{\theta}(\psi')]_{E}, \; \; \rho_{\theta}(\psi) = \rho( \F_{\theta}(\psi)), 
\end{equation}
for all $\psi,\psi' \in \Gamma(\gTM)$. 

Then $(\gTM, \rho_{\theta}, \<\cdot,\cdot\>_{E}, [\cdot,\cdot]_{E}^{\theta})$ is a Courant algebroid. Moreover, putting $A = (T^{\ast}M, \theta,[\cdot,\cdot]_{\theta}^{dB})$, its bracket is the $H'_{\theta}$-twisted $A$-Dorfman bracket, where $H'_{\theta} \in \vf{3} = \Omega^{3}(A)$ is 
\begin{equation} \label{eq_H'theta}
H'_{\theta}(\xi,\eta,\zeta) = H'(\theta(\xi),\theta(\eta),\theta(\zeta)), 
\end{equation}
for all $\xi,\eta,\zeta \in \Gamma(A) \equiv \df{1}$. We have $H' = H + dB$. Equivalently, thanks to (\ref{eq_Poissontwisted}), we can write
\begin{equation}
H'_{\theta} = - \frac{1}{2}[\theta,\theta]_{S} + H_{\theta},
\end{equation}
where $H_{\theta}$ is defined using $H$ and $\theta$ similarly as in (\ref{eq_H'theta}). 
\end{tvrz}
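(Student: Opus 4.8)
The plan is to derive both assertions from one structural observation: the Courant algebroid axioms are stable under a twist by an orthogonal vector bundle automorphism, so the first claim reduces to checking $\F_{\theta}\in O(\gTM)$, while the second is a concrete, if lengthy, evaluation of $\F_{\theta}^{-1}[\F_{\theta}(\psi),\F_{\theta}(\psi')]_{D}^{H}$ that I will split into its $TM$- and $T^{\ast}M$-components and compare slot by slot with the $A$-Dorfman bracket (\ref{eq_AHDorfman}).

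First I would check that $\F_{\theta}$ is orthogonal. Using that $\theta=B^{-1}$ as mutually inverse bundle maps, i.e. $B\theta=\mathrm{id}_{T^{\ast}M}$, $\theta B=\mathrm{id}_{TM}$, one gets $B(Y)(\theta\xi)=-\xi(Y)$, hence $\<\F_{\theta}(X,\xi),\F_{\theta}(Y,\eta)\>_{E}=(\eta-B(Y))(\theta\xi)+(\xi-B(X))(\theta\eta)=\eta(\theta\xi)+\xi(\theta\eta)+\eta(X)+\xi(Y)$, and the first two terms cancel by skew-symmetry of $\theta$, leaving $\eta(X)+\xi(Y)=\<(X,\xi),(Y,\eta)\>_{E}$ by (\ref{eq_canpairing}). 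I would then record the general fact that for any Courant algebroid $(E,\rho,\<\cdot,\cdot\>_{E},[\cdot,\cdot]_{E})$ and any $\F\in O(E)$ the twist $\rho_{\F}=\rho\circ\F$, $[\psi,\psi']_{E}^{\F}=\F^{-1}[\F(\psi),\F(\psi')]_{E}$ is again a Courant algebroid with $\D_{\F}=\F^{-1}\circ\D$: each of (\ref{eq_lebnizrule}), (\ref{eq_invarianceznovu}), (\ref{eq_courantsympart}) and the Leibniz identity (\ref{eq_leibnizidentity}) for the new data follows from the same axiom for $[\cdot,\cdot]_{E}$ by inserting $\F^{-1}\F$ and using orthogonality — equivalently, $\F$ is tautologically a Courant algebroid isomorphism from the twisted structure onto the original one. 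Applying this to $\F=\F_{\theta}$ settles the first assertion.

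For the identification of the bracket I would first invert $\F_{\theta}$: from $\F_{\theta}(X,\xi)=(\theta\xi,\xi-B(X))$ one reads off $\F_{\theta}^{-1}(X,\xi)=(X-\theta\xi,B(X))$. Feeding $\F_{\theta}(X,\xi)$, $\F_{\theta}(Y,\eta)$ into the $H$-twisted Dorfman bracket, the $TM$-slot of the result is $[\theta\xi,\theta\eta]$ and its $T^{\ast}M$-slot is $\zeta':=\Li{\theta\xi}(\eta-B(Y))-\io_{\theta\eta}d(\xi-B(X))-H(\theta\xi,\theta\eta,\cdot)$; hence after applying $\F_{\theta}^{-1}$ the twisted bracket $[(X,\xi),(Y,\eta)]_{E}^{\theta}$ has $T^{\ast}M$-component $B([\theta\xi,\theta\eta])$ and $TM$-component $[\theta\xi,\theta\eta]-\theta(\zeta')$. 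The $T^{\ast}M$-component is the easy one: the Schouten formula (\ref{eq_schouten}) gives $[\theta\xi,\theta\eta]=\theta(\Li{\theta\xi}\eta-\io_{\theta\eta}d\xi)+\tfrac12[\theta,\theta]_{S}(\xi,\eta,\cdot)$, and applying $B$, using $B\theta=\mathrm{id}$ and the twisted Poisson identity (\ref{eq_Poissontwisted}), turns $B([\theta\xi,\theta\eta])$ into exactly the $dB$-twisted Koszul bracket $[\xi,\eta]_{\theta}^{dB}$, i.e. the Lie bracket of $A=(T^{\ast}M,\theta,[\cdot,\cdot]_{\theta}^{dB})$.

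The remaining step, which I expect to be the main obstacle, is to show that the $TM$-component $[\theta\xi,\theta\eta]-\theta(\zeta')$ coincides with $\Li{\xi}^{A}Y-\io_{\eta}(d^{A}\xi)-H'_{\theta}(\xi,\eta,\cdot)$, the $A^{\ast}$-slot of the $A$-Dorfman bracket. For this I would again substitute the Schouten formula, expand $\Li{\theta\xi}(B(Y))$ and $\io_{\theta\eta}d(B(X))$ with the ordinary Cartan calculus on $M$ ($d\io_{X}B=\Li{X}B-\io_{X}dB$, $[\Li{X},\io_{Y}]=\io_{[X,Y]}$, $B(X)=-\io_{X}B$), and then recognize the outcome as the Lie-algebroid Cartan operators $\Li{\xi}^{A}$, $d^{A}$ of $A$ — which on $\Omega^{\bullet}(A)=\vf{\bullet}$ are built from the Koszul bracket and the anchor $\theta$ — acting on $Y\in\Omega^{1}(A)$ and $\xi\in\Gamma(A)$. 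Tracking which pieces of $H'=H+dB$ are absorbed into the Lie bracket of $A$ and which survive, the leftover three-argument term is precisely $H'_{\theta}(\xi,\eta,\cdot)$ with $H'_{\theta}(\xi,\eta,\zeta)=H'(\theta\xi,\theta\eta,\theta\zeta)$; the alternative form $H'_{\theta}=-\tfrac12[\theta,\theta]_{S}+H_{\theta}$ is then immediate from (\ref{eq_Poissontwisted}). Finally $d^{A}H'_{\theta}=0$ needs no separate argument: by the steps above the $H'_{\theta}$-twisted $A$-Dorfman bracket is already known to satisfy the Leibniz identity — being a twist of a Courant algebroid bracket — and for $A$-Dorfman brackets that identity is equivalent to $d^{A}H'_{\theta}=0$.
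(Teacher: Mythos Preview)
Your proposal is correct and begins exactly as the paper does: you verify $\F_{\theta}\in O(\gTM)$ and invoke the general stability of Courant algebroid axioms under orthogonal twists, then write out $\F_{\theta}^{-1}$. The substantive difference is in how you identify the twisted bracket with the $A$-Dorfman bracket. You attack the full bracket on general sections $(X,\xi)$, $(Y,\eta)$ and split into $TM$- and $T^{\ast}M$-components; your $T^{\ast}M$-part goes through cleanly, and you correctly flag the $TM$-part as the ``main obstacle,'' since it requires unpacking $\Li{\theta\xi}(B(Y))$ and $\io_{\theta\eta}d(B(X))$ and reassembling them into the Lie-algebroid operators $\Li{\xi}^{A}$, $d^{A}$ on $\vf{\bullet}$. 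The paper sidesteps this entirely: it only computes $[(\xi,0),(\eta,0)]_{E}^{\theta}$ and $[(0,X),(0,Y)]_{E}^{\theta}$ (writing sections in the order $(\xi,X)$ adapted to $A\oplus A^{\ast}$), finds that these match the $A$-Dorfman bracket, and then observes that the mixed terms are already forced by the invariance axiom~(\ref{eq_invarianceznovu}), which both brackets satisfy. Your approach is more direct and self-contained, but the paper's shortcut replaces your hardest calculation by a one-line appeal to the Courant algebroid axioms.
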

\begin{proof}
It is clear that $(\gTM,\rho_{\theta},\<\cdot,\cdot\>_{E},[\cdot,\cdot]^{\theta}_{E})$ forms a Courant algebroid, as both the bracket $[\cdot,\cdot]_{E}^{\theta}$ and anchor $\rho_{\theta}$ are defined in order to make $\F_{\theta}$ into a Courant algebroid isomorphism. One only has to show that $\F_{\theta}$ is orthogonal with respect to $\<\cdot,\cdot\>_{E}$: 
\begin{equation}
\begin{split}
\<\F_{\theta}(X,\xi), \F_{\theta}(Y,\eta)\>_{E} = & \ \< (\theta(\xi),\xi - B(X)), (\theta(\eta), \eta - B(Y) \>_{E} \\
= & \ \theta(\xi,\eta) + \theta(\eta,\xi) - \< B(X),\theta(\eta)\> - \<B(Y),\theta(\xi)\> \\
= & \ \< X, B(\theta(\eta))\> + \<Y, B(\theta(\xi))\> = \<\eta,X\> + \<\xi,Y\> \\
= & \ \<(X,\xi),(Y,\eta)\>_{E}. 
\end{split}
\end{equation}
The inverse $\F_{\theta}^{-1}$ has an explicit form $\F_{\theta}^{-1}(X,\xi) = (X - \theta(\xi), B(X))$, for all $(X,\xi) \in \Gamma(\gTM)$. In the remainder of this proof, we will write sections of $\gTM$ suggestively in the opposite order, that is $(\xi,X) \in \Gamma(\gTM)$ for $X \in \vf{}$ and $\xi \in \df{1}$. By explicit calculation, one finds
\begin{equation}
[(\xi,0),(\eta,0)]_{E}^{\theta} = \big( \theta^{-1}[\theta(\xi),\theta(\eta)], [\theta(\xi),\theta(\eta)] - \theta( \Li{\theta(\xi)}(\eta) - \io_{\theta(\eta)} d\xi ) - H_{\theta}(\xi,\eta,\zeta) \big),
\end{equation}
for all $\xi,\eta \in \df{1}$. Now, as $(T^{\ast}M,\theta,[\cdot,\cdot]_{\theta}^{dB})$ forms a Lie algebroid, $\theta$ is a bracket morphism (\ref{eq_rhoishom}) and thus $[\theta(\xi),\theta(\eta)] = \theta([\xi,\eta]_{\theta}^{dB})$. Moreover, one can use (\ref{eq_schouten}) and (\ref{eq_Poissontwisted}) to rewrite the terms in the second component of the right-hand side. We find the expression
\begin{equation}
[(\xi,0),(\eta,0)]^{\theta}_{E} = ( [\xi,\eta]_{\theta}^{dB}, -H'_{\theta}(\xi,\eta,\cdot)),
\end{equation}
for all $\xi,\eta \in \df{1}$. This is in agreement with (\ref{eq_AHDorfman}). Finally, for all $X,Y \in \vf{}$, we have
\begin{equation}
[(0,X),(0,Y)]_{E}^{\theta} = 0.
\end{equation}
This is again the correct value for (\ref{eq_AHDorfman}). In fact, these two special cases complete the proof as all the mixed terms are already uniquely determined by the axiom (\ref{eq_invarianceznovu}). 
\end{proof}

Now, consider a new generalized metric $\gm_{\theta} = \F^{T}_{\theta} \gm \F$. Using (\ref{eq_minimalblock}), one finds
\begin{equation}
\gm_{\theta} = \F_{\theta}^{T} \gm \F_{\theta} = \bm{G^{-1}}{0}{0}{G}, 
\end{equation}
where $G$ is the Riemannian metric on $M$ given by $G = -Bg^{-1}B$. This is called the \textbf{background-independent gauge}, the name going back to Seiberg and Witten in \cite{Seiberg:1999vs}. In the context of Remark \ref{rem_ADorfmanatd}, we obtain $g_{A} = G^{-1}$. 

Before stating the main theorem, let us introduce some notation for objects on Lie algebroid $(T^{\ast}M,\theta,[\cdot,\cdot]_{\theta}^{dB})$. We denote its Lie algebroid differential as $d_{\theta}$. The Levi-Civita connection (\ref{eq_cDLA}) corresponding to the fiber-wise metric $G^{-1}$ will be denoted\footnote{To save some space, everybody should call it a Levi-Citheta connection.} as $\cD^{L\theta}$, and its Ricci tensor and Ricci curvature as $\Ric^{\theta}$ and $\RS^{\theta}(G^{-1})$, respectively. Correspondingly, $\Delta_{\theta}$ is the Laplace-Bertrami operator defined using the connection $\cD^{L\theta}$, that is $\Delta_{\theta}(\phi) = (\cD^{L\theta}_{\psi^{k}}(d_{\theta}\phi))(G(\psi_{k}))$. Finally, by $\<\cdot,\cdot\>_{G}$ we mean a direct analogue of the scalar product of forms, for example 
\begin{equation}
\<H'_{\theta},H'_{\theta}\>_{G} = \frac{1}{6} H'_{\theta}( \psi^{k}, \psi^{q}, \psi^{l}) \cdot H'_{\theta}( G(\psi_{k}), G(\psi_{q}), G(\psi_{l})). 
\end{equation}
We are now able to quickly prove the following theorem
\begin{theorem}
Let $\gm$ be a generalized metric on $\gTM$, corresponding to a pair $(g,B)$, and let $\phi \in \cif$ be a scalar field. Let $\cD \in \LC(\gTM,\gm)$ be any connection satisfying the assumptions of Theorem \ref{thm_central}. Let $\cD^{\theta} \in \LC(\gTM, \gm_{\theta})$ be the Levi-Civita connection on Courant algebroid $(\gTM,\rho_{\theta},\<\cdot,\cdot\>_{E},[\cdot,\cdot]_{\theta})$ with respect to $\gm_{\theta}$ defined by 
\begin{equation} \label{eq_cDthetaascD}
\F_{\theta}(\cD^{\theta}_{\psi}\psi') = \cD_{\F_{\theta}(\psi)} \F_{\theta}(\psi'), 
\end{equation}
for all $\psi,\psi' \in \Gamma(\gTM)$. Let $\RS_{\gm_{\theta}}$ be its Ricci scalar corresponding to $\gm_{\theta}$. Then $\cD^{\theta}$ is Ricci compatible with $\gm_{\theta}$ and $\RS_{\gm_{\theta}} = 0$, if and only if 
\begin{align}
\label{eq_symplectic1} \RS^{\theta}(G^{-1}) - \frac{1}{2} \<H'_{\theta},H'_{\theta}\>_{G} + 4 \Delta_{\theta}(\phi) - 4 \rVert d_{\theta} \phi \rVert^{2}_{G} &=  0, \\
\label{eq_symplectic2} \Ric^{\theta} (\xi,\eta) - \frac{1}{2} \<\io_{\xi} H'_{\theta}, \io_{\eta} H'_{\theta} \>_{G} + (\cD^{L\theta}_{\xi} (d_{\theta}\phi))(\eta) + (\cD^{L\theta}_{\eta} (d_{\theta}\phi))(\xi) &= 0 , \\
\label{eq_symplectic3} H'_{\theta}(\xi,\eta,G(d_{\theta}\phi)) - \frac{1}{2} (\cD^{L\theta}_{\psi^{k}} H'_{\theta})(G(\psi_{k}), \xi,\eta) & = 0, 
\end{align}
for all $\xi,\eta \in \df{1}$.
\end{theorem}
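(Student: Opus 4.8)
The plan is to move everything into the $A$-Dorfman picture of Remark \ref{rem_ADorfmanatd}. By the preceding proposition, $(\gTM,\rho_{\theta},\<\cdot,\cdot\>_{E},[\cdot,\cdot]_{\theta})$ is precisely the $H'_{\theta}$-twisted $A$-Dorfman Courant algebroid for the Koszul Lie algebroid $A=(T^{\ast}M,\theta,[\cdot,\cdot]_{\theta}^{dB})$ of Proposition \ref{tvrz_Koszul}, and the computation $\gm_{\theta}=\F_{\theta}^{T}\gm\F_{\theta}=\BlockDiag(G^{-1},G)$ shows that, in the $A\oplus A^{\ast}$ decomposition, $\gm_{\theta}$ is the block-diagonal generalized metric with $g_{A}=G^{-1}$ and $B_{A}=0$; hence the twisting $3$-form in that picture is $H'_{\theta}$ itself. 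Moreover $\theta=B^{-1}$ is a vector bundle isomorphism, so the anchor $a=\theta$ is invertible, $A\oplus A^{\ast}$ is an \emph{exact} Courant algebroid, and the whole of Section \ref{sec_exact} --- including Propositions \ref{tvrz_J'ascharvf} and \ref{tvrz_K'asjinak} --- applies verbatim. First I would record the consequences: $\cD^{\theta}$ defined by (\ref{eq_cDthetaascD}) lies in $\LC(\gTM,\gm_{\theta})$ (Proposition \ref{tvrz_RGcovariance}, since $\F_{\theta}$ is an orthogonal isomorphism), it is classified by a pair of tensors $(J_{A},W_{A})$ with completely skew-symmetric parts vanishing, its Ricci scalar $\RS_{\gm_{\theta}}$ is given by the $A$-analogue of (\ref{eq_hRSGfinal}), and its Ricci compatibility with $\gm_{\theta}$ by the $A$-analogues of (\ref{eq_hcDRiccomp2}) and (\ref{eq_hcDRiccomp3}).

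Next I would identify the partial traces $J'_{A}$ and $W'_{A}$ for this particular $\cD^{\theta}$. Since $\F_{\theta}$ is a Courant algebroid isomorphism, both the characteristic vector field and the tensor $V_{\cD}$ are invariant (Lemma \ref{lem_XcDisavf} and the remark after (\ref{eq_VcD})): $X_{\cD^{\theta}}=X_{\cD}=0$ and $V_{\cD^{\theta}}=V_{\cD}$. The $A$-version of Proposition \ref{tvrz_J'ascharvf} then gives $J'_{A}=\tfrac{1}{2}X_{\cD^{\theta}}=0$. For $W'_{A}$, I would feed $V_{\cD^{\theta}}=V_{\cD}$ into the $A$-version of Proposition \ref{tvrz_K'asjinak}, which recovers $W_{A}$ from $V_{\cD^{\theta}}$ and the fiber-wise metric $h_{\gm_{\theta}}$; unwinding $\F_{\theta}(X,\xi)=(\theta(\xi),\xi-B(X))$ together with the identity $G^{-1}=-\theta g\theta$ (equivalently $G=-Bg^{-1}B$) turns the hypothesis (\ref{eq_vcDisdphi}) on $\cD$ --- which by Proposition \ref{tvrz_K'asjinak} reads $W'=d\phi$ on the original side --- into $W'_{A}=d_{\theta}\phi=\theta^{T}(d\phi)$ on the $A$ side. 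This is the one genuinely computational step, and it is where the explicit form of $\F_{\theta}$ and of the background-independent gauge are used; I expect it to be the main obstacle, essentially an exercise in bookkeeping the $A=T^{\ast}M$ identifications and keeping the signs straight.

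Finally I would substitute $g_{A}=G^{-1}$, $H'=H'_{\theta}$, $J'_{A}=0$ and $W'_{A}=d_{\theta}\phi$ into the $A$-versions of Theorems \ref{thm_scalars} and \ref{thm_Riccomp}. The scalar-curvature formula becomes $\RS_{\gm_{\theta}}=\RS^{\theta}(G^{-1})-\tfrac{1}{2}\<H'_{\theta},H'_{\theta}\>_{G}+4\Delta_{\theta}(\phi)-4\rVert d_{\theta}\phi\rVert_{G}^{2}$, so $\RS_{\gm_{\theta}}=0$ is exactly (\ref{eq_symplectic1}). With $J'_{A}=0$ and $W'_{A}=d_{\theta}\phi$, the $A$-analogue of (\ref{eq_hcDRiccomp2}) is exactly (\ref{eq_symplectic2}), and the $A$-analogue of (\ref{eq_hcDRiccomp3}) becomes (\ref{eq_symplectic3}) after expressing $\delta$ of the $3$-form $H'_{\theta}$ through $\cD^{L\theta}$ and using $\<\io_{\xi}\io_{\eta}H'_{\theta},d_{\theta}\phi\>_{G}=H'_{\theta}(\xi,\eta,G(d_{\theta}\phi))$; since Ricci compatibility of $\cD^{\theta}$ with $\gm_{\theta}$ is the conjunction of these two equations, the equivalence follows. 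As a cross-check one may note the alternative route: $\F_{\theta}$ is an orthogonal Courant algebroid isomorphism, so by Lemmas \ref{lem_RGisinvariant} and \ref{lem_riccompcovariance} the pair of conditions ``$\RS_{\gm_{\theta}}=0$ and $\cD^{\theta}$ Ricci compatible with $\gm_{\theta}$'' is equivalent to ``$\RS_{\gm}=0$ and $\cD$ Ricci compatible with $\gm$'', which by Theorem \ref{thm_central} is $\beta(g)=\beta(B)=\beta(\phi)=0$ --- so (\ref{eq_symplectic1})--(\ref{eq_symplectic3}) is necessarily the image of the bosonic-string beta-function system under the change of variables $(g,B,\phi)\mapsto(G^{-1},\theta,\phi)$.
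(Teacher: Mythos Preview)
Your proposal is correct and follows essentially the same route as the paper: reduce to the $A$-Dorfman picture of Remark \ref{rem_ADorfmanatd}, use invariance of $X_{\cD}$ and $V_{\cD}$ under the Courant algebroid isomorphism $\F_{\theta}$ to identify $J'_{A}=0$ and $W'_{A}=d_{\theta}\phi$, and then read off the equations from the $A$-analogues of Theorems \ref{thm_scalars} and \ref{thm_Riccomp}. One cosmetic point: writing ``$J'_{A}=\tfrac{1}{2}X_{\cD^{\theta}}$'' is a type mismatch, since $J'_{A}\in\Gamma(A)=\Omega^{1}(M)$ while $X_{\cD^{\theta}}\in\mathfrak{X}(M)$; the precise relation is (\ref{eq_J'AasXcD}), namely $J'_{A}(a^{T}(\xi))=\langle\xi,X_{\cD^{\theta}}\rangle$ with $a=\theta$, from which $J'_{A}=0$ follows by invertibility of $\theta$ exactly as you use it --- this is how the paper phrases it as well.
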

\begin{proof}
The statement follows from the discussion in Remark \ref{rem_ADorfmanatd}. We have to check that $J'_{A} = 0$ and $K'_{A} = d_{\theta}\phi$. The rest is implied by generalizations of Theorems \ref{thm_scalars} and \ref{thm_Riccomp} valid for $A$-Dorfman brackets. By assumption, we have $X_{\cD} = 0$. As $\F_{\theta}$ is a Courant algebroid isomorphism and $\cD^{\theta}$ is defined by (\ref{eq_cDthetaascD}), it follows from Lemma \ref{lem_XcDisavf} that $X_{\cD^{\theta}} = X_{\cD}$. From (\ref{eq_J'AasXcD}), we get 
\begin{equation}
J'_{A}(\theta(\xi)) = - \<\xi, X^{\theta}_{\cD} \> = - \<\xi, X_{\cD}\> = 0,
\end{equation}
for all $\xi \in \df{1}$. As $\theta$ is invertible, we have $J'_{A} = 0$. Next, by assumption, we have $V_{\cD}(\psi^{k}, \psi_{k}^{g}, g(Z)) = (d\phi)(Z)$. Again, one has $V_{\cD} = V_{\cD^{\theta}}$. Plugging into (\ref{eq_vcDasWA}), we get 
\begin{equation}
V_{\cD}(\xi,\eta,\zeta) = - W_{A}( G\theta(\xi), G\theta(\eta), G\theta(\zeta)), 
\end{equation}
for all $\xi,\eta,\zeta \in \df{1}$. Finally, combining these two, we find
\begin{equation}
\begin{split}
(d_{\theta}\phi)(\zeta) = & \ (d\phi)(\theta(\zeta)) = V_{\cD}(\psi^{k},\psi_{k}^{g}, g\theta(\zeta)) = -W_{A}(G\theta(\psi^{k}), G\theta g(\psi_{k}), G\theta g \theta(\zeta)) \\
= & \ W_{A}(G(\psi'_{k}), \psi'^{k}, \zeta) \equiv W'_{A}(\zeta),
\end{split}
\end{equation}
for all $\zeta \in \df{1}$. We have redefined the local frame, setting $\psi'_{k} \:= \theta(\psi^{k})$, and two times used the relation $G = -Bg^{-1}B$. 
\end{proof}
\begin{cor} \label{cor_equations}
The background fields $(g,B,\phi)$ satisfy the equations of motions of the field theory given by (\ref{eq_effaction}) if and only if the equations (\ref{eq_symplectic1} - \ref{eq_symplectic3}) hold. 
\end{cor}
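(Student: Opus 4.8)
The plan is to chain together the two equivalences that have already been established and close the loop with the background-independent gauge construction. First I would recall Proposition \ref{tvrz_effaction}: the fields $(g,B,\phi)$ extremize the action $S$ in (\ref{eq_effaction}) if and only if $\beta(g)=\beta(B)=\beta(\phi)=0$. Second, by Theorem \ref{thm_central}, since we assume $\dim(M)>1$, the vanishing of the beta functions is equivalent to the existence of a Levi-Civita connection $\cD\in\LC(\gTM,\gm)$ with $X_{\cD}=0$ and $V_{\cD}(\psi^{k},h_{\gm}^{-1}(\psi_{k}),h_{\gm}^{-1}(Z))=(d\phi)(Z)$ such that $\RS_{\gm}=0$ and $\Ric(V_{+},V_{-})=0$; equivalently (the existence part of the theorem is automatic) these two curvature conditions hold for \emph{any} such $\cD$. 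So it remains to identify this pair of conditions on the Dorfman-side Courant algebroid with the pair of conditions on the twisted side.

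Next I would invoke the theorem immediately preceding this corollary. Given $(g,B,\phi)$ with $B$ almost symplectic, set $\theta=B^{-1}$, which is a $dB$-twisted Poisson bivector by (\ref{eq_Poissontwisted}), and form the isomorphic Courant algebroid $(\gTM,\rho_{\theta},\<\cdot,\cdot\>_{E},[\cdot,\cdot]_{\theta})$ together with $\gm_{\theta}=\F_{\theta}^{T}\gm\F_{\theta}$ and the connection $\cD^{\theta}$ defined by (\ref{eq_cDthetaascD}). Since $\F_{\theta}$ is a Courant algebroid isomorphism which is orthogonal with respect to $\gm$ and $\gm_{\theta}$, Proposition \ref{tvrz_RGcovariance} and Lemma \ref{lem_RGisinvariant} give $\RS_{\gm_{\theta}}=\RS_{\gm}$, and Lemma \ref{lem_riccompcovariance} gives that $\cD^{\theta}$ is Ricci compatible with $\gm_{\theta}$ if and only if $\cD$ is Ricci compatible with $\gm$. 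Moreover $\cD$ satisfies the hypotheses of Theorem \ref{thm_central}, so the theorem preceding the corollary applies and tells us that $\RS_{\gm_{\theta}}=0$ together with Ricci compatibility of $\cD^{\theta}$ is equivalent to the system (\ref{eq_symplectic1})--(\ref{eq_symplectic3}).

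Putting the three equivalences in a row: $(g,B,\phi)$ solves the EOM of $S$ $\iff$ $\beta(g)=\beta(B)=\beta(\phi)=0$ $\iff$ $\RS_{\gm}=0$ and $\Ric(V_{+},V_{-})=0$ $\iff$ $\RS_{\gm_{\theta}}=0$ and $\Ric^{\theta}$ block-diagonal $\iff$ (\ref{eq_symplectic1})--(\ref{eq_symplectic3}). This is essentially a bookkeeping argument once the invariance lemmas are in place.

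The one genuine subtlety I expect to have to address is the existence side. Theorem \ref{thm_central} is stated for $\cD\in\LC(\gTM,\gm)$ with $X_{\cD}=0$ and the prescribed value of $V_{\cD}$; its proof exhibits such a $\cD$ via the explicit choice $J=0$ and $W$ as in (\ref{eq_Wchoice}). For the corollary I need the corresponding fact on the twisted side, namely that $\LC(\gTM,\gm_{\theta})$ contains a connection with $J'_{A}=0$ and $W'_{A}=d_{\theta}\phi$; but this is supplied precisely by the preceding theorem, whose proof transports the data along $\F_{\theta}$ and uses the generalization of Section \ref{sec_exact} to $A$-Dorfman brackets from Remark \ref{rem_ADorfmanatd}. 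I should also note that $B$ being an almost symplectic form is part of the standing hypotheses of this section, so that $\theta=B^{-1}$ and $A=(T^{\ast}M,\theta,[\cdot,\cdot]_{\theta}^{dB})$ make sense (Proposition \ref{tvrz_Koszul}), and that $a=\theta$ is then a vector bundle isomorphism, so the twisted Courant algebroid is exact and Propositions \ref{tvrz_J'ascharvf}, \ref{tvrz_K'asjinak} apply in the form used inside the preceding theorem. No lengthy computation is needed; the work has all been done in the theorems and lemmas cited, and the corollary is their concatenation.
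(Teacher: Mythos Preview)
Your proposal is correct and follows essentially the same route as the paper: transport the curvature conditions through the Courant algebroid isomorphism $\F_{\theta}$ (using Lemma~\ref{lem_riccompcovariance} and the invariance $\RS_{\gm}=\RS_{\gm_{\theta}}$), then invoke Theorem~\ref{thm_central} on one side and the preceding theorem on the other. The paper's proof is a three-sentence sketch of exactly this chain; your version is more explicit about the intermediate equivalences and the standing hypothesis that $B$ is almost symplectic, but the logic is the same.
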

\begin{proof}
We know that the Ricci compatibility is transferred via the Courant algebroid isomorphisms, see Lemma \ref{lem_riccompcovariance}. Similarly, one has $\RS_{\gm} = \RS_{\gm_{\theta}}$. The rest follows from the previous theorem combined with the assertions of Theorem \ref{thm_central}. 
\end{proof}

The equations (\ref{eq_symplectic1} - \ref{eq_symplectic3}) can again be obtained as extremality conditions for a classical field theory action. It is the one called in \cite{Blumenhagen:2012nt} the \textbf{symplectic gravity}. For simplicity, consider now only the case $H = 0$, and define a $3$-vector $\Theta$ as 
\begin{equation}
\Theta \equiv H'_{\theta} = - \frac{1}{2}[\theta,\theta]_{S}. 
\end{equation}
Note that $\Theta$ is usually called the $R$-flux. Moreover, one can consider an alternative volume form $\vol_{G}^{\theta} \in \df{n}$, defined by an equation
\begin{equation}
\vol^{\theta}_{G}( \theta(\xi_{1}), \dots, \theta(\xi_{n})) = \vol_{G^{-1}}(\xi_{1}, \dots, \xi_{n}), 
\end{equation}
for all $\xi_{1},\dots,\xi_{n} \in \df{1}$, where $\vol_{G^{-1}} \in \vf{n}$ is the top degree multivector field constructed similarly as the usual metric volume form. In any positively oriented set of local coordinates $(x^{1},\dots,x^{n})$, this gives 
\begin{equation}
\vol^{\theta}_{G} = |G^{-1}|^{\frac{1}{2}} |\theta|^{-1} dx^{1} \^ \cdots \^ dx^{n}
\end{equation}
The symplectic gravity is given by the action functional:
\begin{equation} \label{eq_actionSG}
S'[G,\theta,\phi] = \int_{M} e^{-2\phi} \{ \RS^{\theta}(G^{-1}) - \frac{1}{2} \<\Theta,\Theta\>_{G} + 4 \rVert d_{\theta} \phi \rVert^{2}_{G} \} \cdot \vol^{\theta}_{G}.
\end{equation}
The proof of showing that equations of motion for $(G,\theta,\phi)$ given by the requirement $\delta S' = 0$ are exactly the equations (\ref{eq_symplectic1} - \ref{eq_symplectic3}) is quite non-trivial, see \cite{Blumenhagen:2012nt} for more detailed comments. 

The corollary (\ref{cor_equations}) thus provides a simple proof of the equivalence of the full set of equations of motion of the low-energy effective action of the bosonic string (\ref{eq_effaction}) and the symplectic gravity defined by (\ref{eq_actionSG}). Thus, this equivalence can be, from the geometrical point of view, interpreted as a particular example of a Courant algebroid isomorphism. However, it is still quite mysterious why the variation of the actions leads precisely onto the vanishing of the Ricci scalar and the Ricci compatibility condition. 
\section{Application: Kaluza-Klein type reduction} \label{sec_KKR}
In this section, we will be very brief, see \cite{Jurco:2015bfs} for a detailed discussion. However, note that in the cited paper, we did not use the present definition (\ref{eq_Rtensor}) of the Riemann tensor, which posed many technical difficulties. In particular, the Ricci tensor $\Ric$ used there was pretty ugly, and we did not have the condition of the Ricci compatibility with the generalized metric. 

The main idea is to consider an exact Courant algebroid $\gTP$ over a principal $G$-bundle $\pi: P \rightarrow M$, which can  under certain conditions \cite{Bursztyn2007726} be reduced to a Courant algebroid $E'$ over $M$. In particular, one can also reduce a generalized metric and to some extent also Levi-Civita connections. In light of Theorem \ref{thm_central}, we seek for some relations between field theories targeted in $P$ and $M$, providing a Kaluza-Klein type of reduction. 

One chooses an arbitrary but fixed principal connection $A \in \Omega^{1}(P,\g)$, and assumes that there exists $H_{0} \in \df{3}$, such that
\begin{equation} \label{eq_HonTP}
H = \pi^{\ast}(H_{0}) + \frac{1}{2} CS_{3}(A)
\end{equation}
is a closed form on $P$. Here  $F \in \Omega^{2}(M,\g_{P})$ is a curvature $2$-form valued in the sections of the adjoint bundle $\g_{P}$, and $\<\cdot,\cdot\>_{\g}$ is a Killing form on $\g$. In other words, we assume that the first Pontriyagin class of $P$ vanishes. Let $[\cdot,\cdot]_{E}$ be an $H$-twisted Dorfman bracket on $\gTP$. We consider $G$ to be a compact and semisimple Lie group. 

Next, define a $\R$-linear map $\Re: \g \rightarrow \Gamma(E)$ as $\Re(x) = \#{x} - \frac{1}{2}\<A,x\>_{\g}$ for every $x \in \g$. It is defined so that $x \blacktriangleright \psi = [\Re(x),\psi]_{E}$, for all $\psi \in \Gamma(E)$, becomes a Lie algebra action integrating to the usual right action of $G$ on $\gTP$. Now, consider a subbundle $K^{\perp}$ whose sections are
\begin{equation}
\Gamma(K^{\perp}) = \{ \psi \in \Gamma(\gTP) \; | \; \<\psi,\Re(x)\>_{E} = 0 \text{ for all $x \in \g$ } \}. 
\end{equation}
It follows that $K^{\perp} \subseteq \gTP$ is $G$-invariant. Moreover, the corresponding $\cif$-module $\Gamma_{G}(K^{\perp})$ of $G$-invariant sections is involutive under $[\cdot,\cdot]_{E}$. This allows one to obtain a reduced Courant algebroid structure on a vector bundle $E'$ over $M$, defined by $\Gamma(E') = \Gamma_{G}(K^{\perp})$. Given a splitting induced by connection $A$, this vector bundle is isomorphic to $TM \oplus \g_{P} \oplus T^{\ast}M$. See \cite{Baraglia:2013wua} for details and \cite{Bursztyn2007726} for a more general construction. The explicit form of the bracket on $E'$ can be also found in our paper \cite{Jurco:2015bfs}. 

One can thus expect that under some conditions, both generalized metric and  some corresponding Levi-Civita connections can be reduced as well. Let $\gm$ be a generalized metric on $\gTP$, and let $\tau$ be the corresponding involution. We impose
\begin{equation}
\tau([\Re(x), \psi]_{E}) = [\Re(x), \tau(\psi)]_{E}, \; \; \tau(K^{\perp}) \subseteq K^{\perp},
\end{equation}
for all $\psi \in \Gamma(E)$ and $x \in \g$. The first of the two conditions forces the corresponding $(g,B)$ to be $G$-invariant tensors on $P$. In particular, they can be decomposed with respect to the splitting $\Gamma_{G}(TP) \cong TM \oplus\g_{P}$ given be the connection. The second forces these block forms to be
\begin{equation} \label{eq_gandBrelevant}
g = \bm{1}{\vartheta^{T}}{0}{1} \bm{g_{0}}{0}{0}{-\frac{1}{2}c} \bm{1}{0}{\vartheta}{1}, \; \; B =\bm{B_{0}}{\frac{1}{2}\vartheta^{T}c}{-\frac{1}{2}c\vartheta}{0},
\end{equation}
where $g_{0}$ is a Riemannian metric on the manifold $M$, $B_{0} \in \df{2}$ and $\vartheta \in \Omega^{1}(M,\g_{P})$. These are exactly the three background fields parametrizing any generalized Riemann metric $\gm'$ on $E'$. By definition, this $\gm'$ is obtained by a restriction of $\gm$ onto $\Gamma_{G}(K^{\perp}) \cong E'$. 

Now, let $\cD \in LC(E,\gm)$. One assumes that for $\psi,\psi'\in \Gamma_{G}(E)$, we have $\cD_{\psi}\psi' \in \Gamma_{G}(E)$, and if moreover $\psi,\psi' \in \Gamma_{G}(K^{\perp})$, then $\cD_{\psi}\psi' \in \Gamma_{G}(K^{\perp})$. We can then define the Courant algebroid connection $\cD'$ on $E'$ by restriction of $\cD$ onto $\Gamma_{G}(K^{\perp})$. Every Courant algebroid connection on $E'$ can be obtained in this way. If $\cD \in \LC(E,\gm)$, then $\cD' \in \LC(E',\gm')$. However, unlike in the case of generalized metric, $\cD$ satisfying the above conditions is not uniquely determined by $\cD'$. 

In \cite{Jurco:2015bfs}, we took an opposite approach.  One can choose a convenient connection $\cD' \in \LC(E',\gm')$ and extend it to a connection $\cD \in \LC(E,\gm)$ which reduces back to $\cD'$. In particular, one can find the following relation between their respective scalar curvatures:
\begin{equation} \label{eq_RSrelations}
\RS_{\gm} = \RS'_{\gm'} \circ \pi + \frac{1}{6} \dim{\g}, \; \; \RS_{E} = \RS'_{E'} \circ \pi + \frac{1}{6} \dim{\g}. 
\end{equation}
The scalar curvatures $\RS'_{\gm'}$ and $\RS'_{E'}$ can be calculated to give 
\begin{align}
\label{eq_RSonM1} \RS'_{\gm'} = & \ \RS(g_{0}) + \frac{1}{2} \< \! \< F' , F' \> \! \> - \frac{1}{2} \<H'_{0},H'_{0}\>_{g_{0}} + 4 \Delta_{g_{0}}(\phi_{0}) - 4 \rVert d\phi_{0} \rVert^{2}_{g_{0}} + \frac{1}{6} \dim{\g} , \\ 
\RS'_{E'} = & \ - \frac{1}{6} \dim{\g},
\end{align}
where $F' = F + D{\vartheta} + \frac{1}{2} [\vartheta \^ \vartheta]_{\g}$, $H'_{0} = H_{0} + dB_{0} - \frac{1}{2}\tC_{3}(\vartheta) - \<F \^ \vartheta \>_{\g}$, and $\tC_{3}(\vartheta) = \< D{\vartheta} \^ \vartheta \>_{\g} + \frac{1}{3} \< [\vartheta \^ \vartheta]_{\g} \^ \vartheta \>_{\g}$. Here, $D$ denotes the exterior covariant derivative induced by $A$ on $\Omega^{\bullet}(M,\g_{P})$ and $\< \! \< \cdot,\cdot\> \! \>$ is a combination of the $p$-form product $\<\cdot,\cdot\>_{g_{0}}$ with the fiber-wise metric $\<\cdot,\cdot\>_{\g}$ on $\g_{P}$. In view of Theorem \ref{thm_central}, one can now show that $X_{\cD} = (X_{\cD'})^{h} = 0$ and $W' = \pi^{\ast}(d\phi_{0})$. 

This proves that the condition $\RS_{\gm} = 0$ corresponds to the equation of motion for dilaton $\phi = \pi^{\ast}(\phi_{0})$ together with $(g,B)$ given by (\ref{eq_gandBrelevant}) and $H$ defined by (\ref{eq_HonTP}). From (\ref{eq_RSrelations}), this is equivalent to $\RS'_{\gm'} + (1/6) \cdot \dim{\g} = 0$. In turn, from (\ref{eq_RSonM1}), this is equivalent to the equation of motion for dilaton $\phi_{0}$ of the field theory action on $M$ given by 
\begin{equation} \label{eq_hetaction}
S_{0} = \int_{M} e^{-2\phi_{0}} \{ \RS(g_{0}) + \frac{1}{2} \< \! \< F', F' \> \! \> - \frac{1}{2} \<H'_{0},H'_{0}\>_{g} + 4 \rVert d\phi_{0} \rVert^{2}_{g_{0}} + \frac{1}{3} \dim{\g} \} \cdot \vol_{g}. 
\end{equation}
In other words, the reduction of Courant algebroids working together with the Levi-Civita connections leads to a proposal for a Kaluza-Klein type of reduction. The above action can now be compared to the bosonic part of the low energy effective action for the heterotic string, see e.g. \cite{polchinski1998string}. The equations of motion arising from this theory can be related to systems of partial differential equations called Strominger systems. A detailed introduction and a discussion of their relations with string theory and generalized geometry is given e.g. in \cite{Garcia-Fernandez:2016azr}. We also recommend this paper for a more complete list of references.
\begin{rem}
At the beginning of this section, we have assumed that the first Pontryagin class of the principal $G$-bundle $\pi: P \rightarrow M$ vanishes. However, this is not at all important for the results of this section. Instead of a Courant algebroid, we could simply work with pre-Courant algebroids, see Remark \ref{rem_precourant}. In particular, $H_{0} \in \df{3}$ could have been completely arbitrary. 
\end{rem}
\section*{Acknowledgement}
It is a pleasure to thank Peter Bouwknegt, Urs Schreiber and Satoshi Watamura for helpful discussions.
The research of B.J. was supported by grant GA\v CR P201/12/G028 and in part by the Action MP1405 QSPACE from COST. He would like to thank the Tohoku Forum for Creativity for hospitality. 
The research of J.V. was supported by RVO: 67985840, he would like to thank the Max Planck Institute for Mathematics in Bonn for hospitality. 
\bibliography{bib} 
\end{document}